\definecolor{myred}{HTML}{db3f3d}
\definecolor{myblue}{HTML}{0065BD}
\newtheorem{theorem}{Theorem}[section]
\newtheorem{observation}[theorem]{Observation}
\newtheorem{claim}[theorem]{Claim}
\crefname{claim}{claim}{claims}
\Crefname{claim}{Claim}{Claims}
\theoremstyle{definition}
\newtheorem{example}[theorem]{Example}
\newenvironment{claimproof}
{
    
    \proof
}
{
    \endproof
    
}
\newcommand{\lb}{\lambda} 
\newcommand{\ub}{\mu} 
\newcommand{\boundpair}{(\lb,\ub)} 
\newcommand{\vf}{v} 
\newcommand{\uf}{u} 
\newcommand{\costr}{\mathfrak C} 
\newcommand{\SW}[1][\costr]{\mathcal{SW}(#1)} 
\newcommand{\bs}{\setminus} 
\newcommand{\genA}{a} 
\newcommand{\genB}{b} 
\newcommand{\genGrkA}{\sigma} 
\newcommand{\genGrkB}{\tau} 
\newcommand{\topset}[3][\genA]{\text{top}_{#1}(#2,#3)} 
\newcommand{\feastab}{$^*$\xspace} 
\newcommand{\etc}{\textsc{Exact Cover by $3$-Sets}}
\newcommand{\etcs}{\textsc{X3C}}
\newcommand{\etcS}{\mathfrak S}	
\newcommand{\etcpara}{z} 
\newcommand{\mmm}{\textsc{Minimum Maximal Matching}}
\newcommand{\mmms}{\textsc{MMM}}
\newcommand*{\grk}[1]{%
  \ifcase#1\or \alpha\or\beta\or\gamma\or\delta\or\epsilon\or\zeta\else\@ctrerr\fi
}
\newcommand{\convexpath}[2]{
[   
    create hullnodes/.code={
        \global\edef\namelist{#1}
        \foreach [count=\counter] \nodename in \namelist {
            \global\edef\numberofnodes{\counter}
            \node at (\nodename) [draw=none,name=hullnode\counter] {};
        }
        \node at (hullnode\numberofnodes) [name=hullnode0,draw=none] {};
        \pgfmathtruncatemacro\lastnumber{\numberofnodes+1}
        \node at (hullnode1) [name=hullnode\lastnumber,draw=none] {};
    },
    create hullnodes
]
($(hullnode1)!#2!-90:(hullnode0)$)
\foreach [
    evaluate=\currentnode as \previousnode using \currentnode-1,
    evaluate=\currentnode as \nextnode using \currentnode+1
    ] \currentnode in {1,...,\numberofnodes} {
-- ($(hullnode\currentnode)!#2!-90:(hullnode\previousnode)$)
  let \p1 = ($(hullnode\currentnode)!#2!-90:(hullnode\previousnode) - (hullnode\currentnode)$),
    \n1 = {atan2(\y1,\x1)},
    \p2 = ($(hullnode\currentnode)!#2!90:(hullnode\nextnode) - (hullnode\currentnode)$),
    \n2 = {atan2(\y2,\x2)},
    \n{delta} = {-Mod(\n1-\n2,360)}
  in 
    {arc [start angle=\n1, delta angle=\n{delta}, radius=#2]}
}
-- cycle
}
\title{Single-Deviation Stability in Additively Separable Hedonic Games with Constrained Coalition Sizes}
\date{}
\author[1]{Martin Bullinger}
\author[2]{Adam Dunajski}
\author[3]{Edith Elkind}
\author[4]{Matan Gilboa}
\affil[1]{ \small School of Engineering Mathematics and Technology, University of Bristol, UK}
\affil[2]{ \small School of Mathematics, University of Edinburgh, UK}
\affil[3]{School of Engineering,  Northwestern University, USA}
\affil[4]{ \small Department of Computer Science, University of Oxford, UK\protect\\ \vspace*{0.1cm} martin.bullinger@bristol.ac.uk, a.dunajski@sms.ed.ac.uk, matan.gilboa@cs.ox.ac.uk, edith.elkind@northwestern.edu}
\begin{document}
\maketitle

\begin{abstract}
We study stability in additively separable 
    hedonic games when coalition sizes have to respect fixed size bounds.
    We consider four classic notions of stability based on single-agent deviations, namely, Nash stability, individual stability, contractual Nash stability, and contractual individual stability.
    For each stability notion, we consider two variants:
    in one, the coalition left behind by a deviator must still be of a valid size, and in the other there is no such constraint.
    We provide a full picture of the existence of stable outcomes with respect to given size parameters.
    Additionally, when there are only upper bounds, we fully characterize the computational complexity of the associated existence problem.     In particular, we obtain polynomial-time algorithms for contractual individual stability and contractual Nash stability, where the latter requires an upper bound of~$2$.
    We obtain further results for Nash stability and contractual individual stability, when the lower bound is at least~$2$. 
\end{abstract}

\section{Introduction}

Imagine you are tasked with splitting undergraduate students into groups for a collaborative coursework project. 
Clearly, the groups should not be too large.
Moreover, the students need to gain experience in working as a team, and thus there is also a lower bound on the size of each group. 
Ideally, the partition of the students into groups should also be robust to students wishing to switch groups, as any such switch contributes to administrative overhead.

Similar scenarios arise in several other settings. 
These include assigning desks to faculty in a department with multi-person offices, assigning participants to hiking groups on Duke of Edinburgh hiking expeditions (which are required to have hiking groups of size between four and seven participants),
organizing seating plans for conference dinners and so on.

Such scenarios can be captured by the framework of \textit{hedonic games}, first presented by \citet{DrGr80a}.
We study the combination of the two key aspects of the above examples, namely prominent solution concepts of stability to deviations and the natural constraint of bounds on coalition sizes.
In particular, the latter has so far received little attention in the widely used framework of additively separable utilities \citep{BoJa02a}, where prior work has primarily focused on upper bounds on the coalition size, for which group stability and welfare maximization have been studied \citep{LHSA24a,FGM25a}.

By contrast, we investigate additively separable hedonic games under stability concepts involving deviations by single agents, including Nash stability (NS), individual stability (IS), contractual Nash stability (CNS), and contractual individual stability (CIS).
We only allow deviations into coalitions of size smaller than the upper bound, i.e., the welcoming coalition must be feasible after the deviation, which is a natural assumption. 
However, it is not as clear whether the size of the abandoned coalition should be constrained in the same way: On the one hand, a selfish agent might not care about the feasibility of the abandoned coalition, while on the other hand, in some settings such deviations may be prohibited (e.g., students are only allowed to leave their project group if the remaining group is large enough). 
To capture this distinction, we introduce two variants of each stability notion: e.g., NS\feastab denotes a Nash-stable partition where a Nash-deviation must maintain the feasibility of the abandoned coalition, while NS denotes one where deviations may render the abandoned coalition infeasible. 

We begin with some preliminary considerations about the existence of coalition structures that satisfy the size constraints.
We provide a simple characterization and apply it to show that, for fixed size constraints, such outcomes always exist for a sufficiently large number of agents.
Our main objective is to then understand the existence and computational complexity of stable outcomes.
We first present a complete classification of existence for all mentioned stability concepts. 
Akin to similar settings, symmetric valuations lead to existence of NS\feastab, the strongest feasible stability concept.
By contrast, even for symmetric $0/1$-valuations, CIS outcomes, i.e., the weakest standard stability notion, may not exist.
Moreover, for nonsymmetric valuations, only CIS\feastab outcomes are guaranteed to exist.

We continue with establishing a complete complexity picture for the setting where coalition sizes are only constrained by an upper bound $\ub$.
In particular, we provide a polynomial-time algorithm for CIS,
amending the algorithm in the existing literature by \citet{ABS11c}. 
In addition, we present a polynomial-time algorithm for the construction of CNS outcomes when $\ub = 2$.
For all other upper bounds (and a lower bound $\lb$ of~$1$) and all other stability concepts, we obtain \NP-completeness of the existence of stability.

We conclude with results for a nontrivial lower bound $\lb \ge 2$.
We show \NP{}-completeness for Nash stability for any size constraints satisfying $\ub \ge 4$ and $\lb < \ub$.
Finally, we present polynomial-time algorithms for CIS\feastab for any $\lb$ and $\ub$ when the additively separable valuations are nonzero or nonnegative.

\section{Related Work}

Hedonic games were first introduced by \citet{DrGr80a}, and started to receive continuous and extensive attention since the introduction of additively separable hedonic games (ASHGs) by \citet{BoJa02a} two decades later.
The book chapters by \citet{AzSa15a} and \citet{BER24a} present introductory texts.

A main goal in hedonic games is to consider the existence and computability of stable outcomes.
The stability notions based on single-agent deviations that we consider are well understood for ASHGs under unconstrained coalition sizes.
The existence problem for NS, IS, and CNS is known to be \NP-complete \cite{SuDi10a,BBT23a}, while CIS admits a polynomial-time algorithm \cite{ABS11c}.\footnote{Unfortunately, we identify an inaccuracy in this algorithm that we discuss in \Cref{app:CISflawed} and rectify in our treatment of CIS.}
Notions of group stability as measured by the core or strict core, and popularity are even known to be $\Sigma_2^p$-complete \cite{Woeg13a,BuGi25a}.
By contrast, some positive results are known, in particular for restricted domains of valuations.
First, symmetric valuations lead to Nash-stable outcomes \cite{BoJa02a}, a common result in hedonic games that we will extend to coalition size bounds.
Moreover, if the range of valuations only has one nonpositive or one nonnegative value, then existence of IS and CNS are guaranteed \cite{BBT23a}.
These restrictions encompass, in particular, subclasses of ASHGs based on the distinction of friends and enemies, for which even group-stable outcomes exist \cite{DBHS06a}.
Interestingly, our counterexamples to existence even hold under these preferences restrictions.
Finally, IS and CNS exist and are efficiently computable in random hedonic games, while NS does not exist with high probability. \cite{BuKr24a}.

We next discuss related work on hedonic games with constraints on coalition sizes.
\citet{LHSA24a} study ASHGs under both welfare maximization and stability based on group deviations.
They find hardness for maximizing welfare as well as for the existence of an outcome in the core or strict core.
By contrast, an outcome in the contractual strict core exists and can be computed in polynomial time.
In addition, \citet{FGM25a} provide a more nuanced picture of the complexity of welfare maximality through the lens of parameterized complexity.
Our work complements these works by studying prominent stability notions based on deviations by single agents, while we are also the first to additionally consider a lower bound on coalition sizes.

In principle, ASHGs with coalition sizes can be modeled in fully expressive preference models, e.g., when explicitly listing coalition rankings \citep{Ball04a} or encoding them with Boolean formulae \citep{ElWo09a}.
However, since the reduced games in their hardness constructions do not fulfill our size constraints, their proofs do not extend to our setting.
Another way to capture arbitrary hedonic games is via hedonic diversity games with an arbitrary number of colors \citep{GHK+23a}. 
However, their positive results mostly hold for a small number of colors (with the exception of a lemma concerning parameterized complexity) and thus do not have implications for our work.

Moreover, ASHGs with upper-bounded coalition sizes are a special case of topological distance games \cite{BuSu24a,DEKS24a}.
There, players are assigned to the vertices of a topology graph and obtain preferences based on additively separable valuations that are discounted by the distance on the topology graph.
The notion of jump stability for topological distance games is akin to NS for ASHGs.
For this notion, \citet{BuSu24a} obtain positive results for special valuations and topology graphs, most of which do not correspond to ASHGs. The only result that applies to our setting is for the severe restriction of nonnegative valuations where positive valuations form an acyclic graph.
Furthermore, upper bounds on coalition sizes were also considered in an online model of ASHGs \cite{FMM+21a,CoAg25a}.

We conclude with more loosely related models. 
Somewhat similar to bounded coalition sizes, \citet{BMM22a} require outputs of fixed coalition sizes.
However, this means that coalitions are always full and, instead of single agents deviating unilaterally, stability is based on swaps.
Another related question that implicitly enforces coalition bounds is to ask for a partition into a given number of coalitions \citep{LMNS23a,DEI+25a,AARS25a}.
This leads to lower bounds in particular when partitions have to be balanced, i.e., coalitions have to be of similar size.
However, in this model the coalition size bounds depend on the total number of agents, while our bounds are global parameters.
Note that \citet{LMNS23a} and \citet{AARS25a} consider a restricted model of simple, i.e., unweighted, ASHGs.
Moreover, these works consider different solution concepts, mostly inspired by envy-freeness in the fair division literature.
Finally, beyond ASHGs, \citet{DDDS22a} consider coalition bounds for group activity selection, a model related to anonymous hedonic games \citep{BoJa02a}.

\section{Preliminaries}

In this section, we define our model.
Given a positive integer $i\in \mathbb N$, we use the notion $[i]:=\{1,\dots, i\}$.
Moreover, for two integers $i,j\in \mathbb Z$ and $k\in \mathbb N$, we use the notation $i \equiv_k j$ to denote equivalence modulo $k$.

\subsection{Hedonic Games}
Coalition formation is concerned with partitioning a set of agents into disjoint coalitions according to their preferences.
We consider a finite set of agents $N$.
A nonempty subset of $N$ is called a \emph{coalition}.
We want coalitions to be within given \emph{size bounds}.
Let $\lb, \ub\in \mathbb N$ with $\lb \le \ub$.
A coalition $C\subseteq N$ with $\lb\le |C|\le \ub$ is called a $(\lb,\ub)$-coalition. 
If $|C|=\ub$, we say that $C$ is \emph{full}.
A \emph{coalition structure} or \emph{partition} of $N$ is a set of pairwise disjoint coalitions whose union is $N$.
Given a partition~$\costr$, we denote by $\costr(\genA)$ the coalition containing agent~$\genA$.
A partition is called a $(\lb,\ub)$-partition if it consists of $(\lb,\ub)$-coalitions only.

Let $\mathcal N_{\genA}$ denote all possible coalitions containing agent $\genA$, i.e., $\mathcal N_{\genA} :=\{C\subseteq N: \genA\in C\}$.
A \emph{hedonic game} is defined by a tuple $(N,\succsim)$, where $N$ is an agent set and ${\succsim} = (\succsim_{\genA})_{\genA\in N}$ is a tuple of weak orders $\succsim_{\genA}$ over $\mathcal N_{\genA}$.
The weak order $\succsim_{\genA}$ represents the preferences of agent $\genA$. 
Let $C,C'\in \mathcal N_{\genA}$.
We say that $\genA$ \emph{weakly prefers} $C$ over $C'$ if $C\succsim_{\genA} C'$.
The strict part of $\succsim_{\genA}$ is denoted by $\succ_{\genA}$, i.e., $C\succ_{\genA} C'$ if and only if $C \succsim_{\genA} C'$ and not $C' \succsim_{\genA} C$.
We say that $\genA$ \emph{prefers} $C$ over $C'$ if $C\succ_{\genA} C'$.
We extend the preference order to partitions by setting $\costr\succsim_{\genA}\costr'$ if and only if $\costr(\genA) \succsim_{\genA} \costr'(\genA)$. 

An important subclass of hedonic games are additively separable hedonic games as first considered by \citet{BoJa02a}.
They are encoded by a pair $(N,\vf)$, where $N$ is an agent set and $\vf = (\vf_{\genA})_{\genA\in N}$ is a tuple of utility functions $\vf_{\genA}\colon N\setminus\{\genA\}\rightarrow\mathbb{R}$.
The \emph{(additively separable) utility} of an agent $\genA\in N$ for coalition $C\in \mathcal N_{\genA}$ is defined as $\uf_{\genA}(C) :=\sum_{\genB\in C\setminus \{\genA\}} \vf_{\genA}(\genB)$.
Note that this implies $\uf_{\genA}(\{\genA\}) = 0$.
Utilities again extend to partitions by setting $\uf_{\genA}(\costr) :=\uf_{\genA}(\costr(\genA))$.
The \emph{additively separable hedonic game} (ASHG) induced by $(N,\vf)$ is the hedonic game $(N,\succsim)$, where for all $\genA\in N$ and $C,C'\in \mathcal N_{\genA}$, it holds that
\begin{equation*}
    C\succsim_{\genA} C'\text{ if and only if }\sum_{\genB\in C\setminus \{\genA\}}\vf_{\genA}(\genB) \geq \sum_{\genB\in C'\setminus \{\genA\}}\vf_{\genA}(\genB)\text.
\end{equation*}
Hence, $C\succsim_{\genA} C'$ if and only if $\uf_{\genA}(C) \ge \uf_{\genA}(C')$.

Every ASHG can be naturally represented by a complete directed graph $G=(N,E)$ with weight $\vf_{\genA}(\genB)$ on arc $(\genA,\genB)$. An ASHG is said to be \emph{symmetric} if $\vf_{\genA}(\genB)=\vf_{\genB}(\genA)$ for all $\genA, \genB\in N$, 
and can be represented by a complete undirected graph with weight $\vf_{\genA}(\genB)$ on edge $\{\genA,\genB\}$.
An ASHG is said to be \emph{simple} if $\vf_{\genA}(\genB) \in \{0,1\}$ for all $\genA, \genB\in N$.
Simple symmetric ASHGs can be represented by an unweighted undirected graph. 
Finally, an important quantity is the sum of all agents utilities as captured by the so-called (utilitarian) social welfare.
Formally, given a partition $\costr$, its \emph{social welfare} $\SW$ is defined by $\SW := \sum_{\genA\in N}\uf_{\genA}(\costr)$.

\subsection{Useful Agent Sets}\label{sec:agentsets}

We continue with some notation to refer to certain sets of agents that will be important in our algorithms.
Assume that we are given an ASHG $(N,\vf)$ together with an agent $\genA\in N$.

We first introduce a notation for choosing a subset of most preferred agents among a given subset of agents.
Let $A\subseteq N$ and $k\in \mathbb N$.
If $k > |A\setminus \{\genA\}|$, we set $\topset{k}{A} = A\setminus \{\genA\}$.
Otherwise, let $\topset{k}{A}$ be a subset of $k$ agents among $A\setminus \{\genA\}$ that maximizes $\genA$'s utility among subsets of size $k$ from $A\setminus \{\genA\}$. We explicitly allow that $\genA\in A$ but $\genA$ is never contained in $\topset{k}{A}$.
Moreover, we can easily find $\topset{k}{A}$ in time polynomial in $n = |N|$: simply order agents in $A\setminus \{\genA\}$ by decreasing valuation for $\genA$ and take the first $k$ agents in the obtained order.
Note that the obtained set is not necessarily unique, but our algorithms work for any such set.

Next, we want a notation for agents valued positively and negatively by a single agent or set of agents.
Given $\genA,\genB\in N$, we say that $\genB$ is a \emph{friend} (or \emph{enemy}) of $\genA$ if $\vf_{\genA}(\genB) > 0$ (or $\vf_{\genA}(\genB) < 0$).
Moreover, given an agent $\genA\in N$ and subsets $A,B\subseteq N$, we define
\begin{itemize}
        \item $\text{Fr}(\genA,B) := \{ \genB\in B \colon \vf_{\genA}(\genB) > 0\}$,
        \item $\text{Fr}(A,B) := \bigcup_{\genA\in A} \text{Fr}(\genA,B) = \{ \genB\in B \colon \exists \genA\in A \text{ such that } \vf_{\genA}(\genB) > 0\}$,
        \item $\text{En}(\genA,B) := \{ \genB\in B \colon \vf_{\genA}(\genB) < 0\}$, and
        \item $\text{En}(A,B) := \bigcup_{\genA\in A} \text{En}(\genA,B) = \{ \genB\in B \colon \exists \genA\in A \text{ such that } \vf_{\genA}(\genB) < 0\}$.
    \end{itemize}

Hence, these sets capture the friends and enemies of $\genA$ and $A$ in $B$.
Note that agents $\genB$ with $\vf_{\genA}(\genB)=0$ are neither friends nor enemies.

\subsection{Single-Deviation Stability}

Stability captures the absence of beneficial deviations by agents joining other coalitions or forming a new coalition.
We focus on stability notions that are concerned with the incentives of single agents to deviate.
A \emph{single-agent deviation} performed by agent $\genA$
transforms a partition $\costr$ into a partition $\costr'$ 
where $\costr(\genA)\neq\costr'(\genA)$ and, for all agents $\genB\neq \genA$, it holds that $\costr(\genB)\setminus\{\genA\} = \costr'(\genB)\setminus\{\genA\}$.
We write $\costr \xrightarrow{\genA} \costr'$ to denote such a single-agent deviation.

In the presence of size bounds, a deviating agent should, however, only end up in a coalition that obeys the bounds.
A single-agent deviation $\costr \xrightarrow{\genA} \costr'$ is said to be \emph{$(\lb,\ub)$-permissible} if $\lb\le |\costr'(\genA)| \le \ub$.
Hence, an agent is only allowed to form a new coalition if $\lb = 1$, and an agent can only join a nonempty coalition of size at most $\ub - 1$.

We now introduce standard stability concepts based on single-agent deviations \cite{BoJa02a,SuDi07b}.
However, in contrast to the known definitions for partitions without size constraints, our definition of stability is based on permissible deviations.
We assume that we are given a fixed hedonic game $(N,\succsim)$ and size bounds $\lb, \ub\in \mathbb N$ with $\lb \le \ub$.

A \emph{Nash deviation} is a single-agent deviation $\costr \xrightarrow{\genA} \costr'$ such that $\costr'(\genA)\succ_{\genA} \costr(\genA)$.
A $(\lb,\ub)$-partition $\costr$ is said to be \emph{Nash-stable} (NS) if no agent can perform a $(\lb,\ub)$-permissible Nash deviation.

Since Nash stability disregards the preferences of agents apart from the deviating agent, various refinements have been proposed which additionally require the consent of the abandoned and the joined agents. 
An \emph{individual deviation} is a Nash deviation $\costr \xrightarrow{\genA} \costr'$ such that $\{\genB\in \costr'(\genA)\colon \costr(\genB) \succ_{\genB} \costr'(\genB)\} = \emptyset$.
Similarly, a \emph{contractual deviation} is a Nash deviation $\costr \xrightarrow{\genA} \costr'$ such that $\{\genB\in \costr(\genA)\colon \costr(\genB) \succ_{\genB} \costr'(\genB)\} = \emptyset$. 
Moreover, $\costr \xrightarrow{\genA} \costr'$ is called a \emph{contractual individual deviation} if it is both an individual and a contractual deviation.
A $(\lb,\ub)$-partition is said to be \emph{individually stable} (IS), \emph{contractually Nash-stable} (CNS), or \emph{contractually individually stable} (CIS) if it allows for no $(\lb,\ub)$-permissible individual, contractual, or contractual individual deviation, respectively.
Since stability against a superset of deviations is harder to satisfy, NS implies IS and CNS, both of which in turn imply CIS.
We refer to NS, IS, CNS, and CIS as \emph{standard stability concepts}.

Still, even though a $(\lb,\ub)$-permissible single-agent deviation results in a $(\lb,\ub)$-coalition for the deviator, it can result in a partition that is not a $(\lb,\ub)$-partition, because the abandoned coalition might have been of size exactly $\lb$ and $\lb\ge 2$.
We say that a single-agent deviation $\costr \xrightarrow{\genA} \costr'$ is \emph{$(\lb,\ub)$-feasible} if $\costr'$ is a $(\lb,\ub)$-partition.
Hence, $(\lb,\ub)$-feasible deviations require that a coalition can only be abandoned if it remains within the size bounds, i.e., if it has a size of at least $\lb + 1$ before it is abandoned (unless $\lb = 1$).

Each of our stability concepts gives rise to a related concept based on feasible instead of permissible deviations.
We speak of the associated \emph{feasible} stability concepts and indicate them by a star in the abbreviated notation, e.g., we say feasible Nash stability and denote it by NS\feastab.
As feasible stability concepts prevent a subset of deviations, they result in the logically weaker stability concepts, e.g., NS implies NS\feastab. 
An overview of our stability concepts is given in \Cref{fig:concepts}.
Arrows indicate logical relationships between them.
It is easy to see that apart from the depicted implications and the ones implied by transitivity, there are no further relationships.
For example, there are NS\feastab partitions that are no CIS partition (cf.~\Cref{ex:intro}).
For all abbreviations $\alpha$ of our solution concepts, e.g., $\alpha = \text{NS}$, we speak of an $\alpha$ deviation to refer to a deviation associated with the solution concept, e.g., NS deviation.

\begin{figure}[tb]
    \centering
    \begin{tikzpicture}
    \pgfmathsetmacro\yscale{1.2}
    \pgfmathsetmacro\xmove{2.8}
    \pgfmathsetmacro\ymove{.75}
        \node (FCIS) at (0,0) {CIS\feastab};
        \node (FIS) at (-1.1,\yscale) {IS\feastab};
        \node (FCNS) at (1.1,\yscale) {CNS\feastab};
        \node (FNS) at (0,2*\yscale) {NS\feastab};
        \node (CIS) at ($(0,0)+(-\xmove,\ymove)$) {CIS};
        \node (IS) at ($(-1.1,\yscale)+(-\xmove,\ymove)$) {IS};
        \node (CNS) at ($(1.1,\yscale)+(-\xmove,\ymove)$) {CNS};
        \node (NS) at ($(0,2*\yscale)+(-\xmove,\ymove)$) {NS};

        \draw[->]   (FNS)  edge (FCNS)
                    (FNS)  edge (FIS)
                    (FCNS) edge (FCIS)
                    (FIS)  edge (FCIS)
                    (NS)  edge (CNS)
                    (NS)  edge (IS)
                    (CNS) edge (CIS)
                    (IS)  edge (CIS)
                ;
        \foreach \S in {CIS,IS,CNS,NS}
        {\draw[->] (\S) edge (F\S);}
    \end{tikzpicture}
    \caption{Overview and logical implications of stability concepts.
    Each standard stability concept is associated with a feasible stability concept indicated by a star.}
    \label{fig:concepts}
\end{figure}
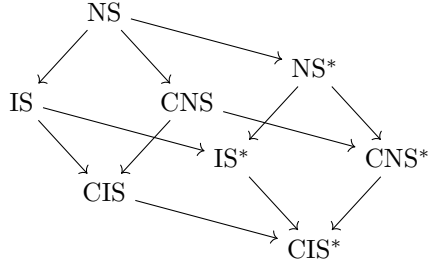

Finally, a fundamental desideratum beyond stability is that an agent wants to end up in a coalition that is at least as good as being on their own.
A coalition $C\in \mathcal N_{\genA}$ is said to be \emph{individually rational} for agent $\genA$ if $C\succsim_{\genA} \{\genA\}$.
In the case of an ASHG, this requires that $\uf_{\genA}(C)\ge 0$.
Finally, a partition $\costr$ is individually rational if for 
all $\genA\in N$, $\costr(\genA)$ is individually rational.
Since IS and IS\feastab deviations do not prevent an agent from abandoning a coalition with negative utility, IS and IS\feastab imply individual rationality if $\lb = 1$.
However, when $\lb \ge 2$, we do not even have that NS implies individual rationality. 

\section{Feasible Partitions}\label{sec:feasible}

In contrast to games with unconstrained or upper-bounded coalition sizes, lower bounds may prevent the formation of partitions in which all coalitions satisfy the size constraints. 
For example, a group of $8$ agents cannot be partitioned into coalitions of sizes between $5$ and $7$.
Hence, as a preliminary consideration, we first consider the existence of $(\lb,\ub)$-partitions.

\begin{restatable}{proposition}{feasible}\label{prop:feasible}
    Let $\lb,\ub\in \mathbb N$ with $\lb \le \ub$ and $n\in \mathbb N$. 
    A $(\lb,\ub)$-partition of $n$ agents exists if and only if $n\le \left\lfloor \frac n{\lb}\right\rfloor \ub$.
\end{restatable}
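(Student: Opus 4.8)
The plan is to reduce the existence question to a statement about the number of coalitions used, and then reconcile that with the stated floor inequality. First I would observe that a $\boundpair$-partition of $n$ agents using exactly $k$ coalitions exists if and only if $k\lb \le n \le k\ub$. The forward direction is immediate by summing the size bounds over the $k$ coalitions. For the converse, I would give an explicit construction: start from $k$ coalitions of size exactly $\lb$ (which accounts for $k\lb$ agents and is possible precisely because $k\lb \le n$), and then distribute the $n - k\lb$ leftover agents one coalition at a time, never pushing a coalition above size $\ub$. This process never gets stuck, because the total spare capacity is $k(\ub - \lb) \ge n - k\lb$ by the upper inequality.

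Consequently, a $\boundpair$-partition of $n$ agents exists if and only if there is some $k \in \mathbb N$ with $k\lb \le n \le k\ub$. The second step is to show that this existential condition is equivalent to $n \le \lfloor n/\lb\rfloor \ub$. The key observation is that among all admissible $k$, the choice $k = \lfloor n/\lb \rfloor$ simultaneously maximizes the available capacity $k\ub$ and still respects the lower constraint, since $\lfloor n/\lb\rfloor \cdot \lb \le n$ holds automatically. So if any $k$ satisfies $k\lb \le n \le k\ub$, then in particular $k \le n/\lb$, hence $k \le \lfloor n/\lb\rfloor$; replacing $k$ by $\lfloor n/\lb\rfloor$ only increases $k\ub$ while keeping $k\lb \le n$, so $k = \lfloor n/\lb\rfloor$ also works, which is exactly the inequality $n \le \lfloor n/\lb\rfloor \ub$. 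The reverse direction is trivial, as $n \le \lfloor n/\lb\rfloor \ub$ exhibits $k = \lfloor n/\lb\rfloor$ as a witness.

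I expect the only real subtlety to be the number-theoretic bookkeeping in the second step: one must argue that it suffices to check the single value $k = \lfloor n/\lb\rfloor$ rather than searching over all $k$, which relies on the monotonicity of $k\ub$ in $k$ together with the fact that the lower bound $k\lb \le n$ is automatically met at $k = \lfloor n/\lb\rfloor$. The partition-construction step is routine once phrased as greedily filling coalitions, so the care lies in making the capacity-counting argument precise.
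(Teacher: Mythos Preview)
Your proposal is correct and follows essentially the same ideas as the paper: bound the number of coalitions by $\lfloor n/\lb\rfloor$ to get the forward direction, and for the converse build $\lfloor n/\lb\rfloor$ coalitions of size $\lb$ and greedily fill in the remaining $n-\lb\lfloor n/\lb\rfloor$ agents using the spare capacity. The only organizational difference is that you first isolate the auxiliary statement ``a $\boundpair$-partition into exactly $k$ coalitions exists iff $k\lb\le n\le k\ub$'' and then specialize to $k=\lfloor n/\lb\rfloor$, whereas the paper proves the proposition directly and establishes that auxiliary characterization separately afterwards (as its \Cref{prop:feasibleKpart}); the underlying arguments are the same.
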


\begin{proof}
	Let $\lb,\ub\in \mathbb N$ with $\lb \le \ub$ and $n\in \mathbb N$. 
	Assume that there exists a $(\lb,\ub)$-partition $\costr$ of $n$ agents.
	As each coalition in $\costr$ is of size at least $\lb$, we know that $|\costr|\le \left\lfloor \frac n{\lb}\right\rfloor$.
	Hence,
	$$n\le \sum_{C\in \costr}|C| \le \sum_{C\in \costr}\ub = |\costr|\ub \le \left\lfloor \frac n{\lb}\right\rfloor \ub\text.$$
	
	Conversely assume that $n$ satisfies $n\le \left\lfloor \frac n{\lb}\right\rfloor \ub$.
	It holds that $n = \lb \frac n {\lb} \ge \lb \left\lfloor \frac n{\lb}\right\rfloor$.
	Hence, we can use $\lb \left\lfloor \frac n{\lb}\right\rfloor$ agents to form $\left\lfloor \frac n{\lb}\right\rfloor$ coalitions of size $\lb$.
	These coalitions have space for $\left\lfloor \frac n{\lb}\right\rfloor \ub$ agents to constitute a $(\lb,\ub)$-partition.
	Hence, since $n\le \left\lfloor \frac n{\lb}\right\rfloor \ub$, they have space for the remaining agents to form a $(\lb,\ub)$-partition.
\end{proof}

Note that the condition $n\le \left\lfloor \frac n{\lb}\right\rfloor \ub$ excludes the case where $n< \lb$ (in which case we could not form any coalition of size at least $\lb$), as then this condition reads $n\le 0$, which is false for any $n\in \mathbb N$.
Also, $\left\lfloor \frac n{\lb}\right\rfloor$ is the largest number of coalitions that can be created with $n$ agents so that each coalition has size at least~$\lb$.
Hence, \Cref{prop:feasible} implies that there exists a $(\lb,\ub)$-partition of $n$ agents if and only if there exists such a partition into the largest possible number of coalitions.
In fact, we can use it to characterize precisely for which numbers $k$ there exists a $\boundpair$-partition into $k$ coalitions.

\begin{restatable}{proposition}{feasibleKpart}\label{prop:feasibleKpart}
    Let $\lb,\ub\in \mathbb N$ with $\lb \le \ub$ and $k,n\in \mathbb N$. 
    Then there exists a $(\lb,\ub)$-partition of $n$ agents into $k$ coalitions if and only if $k \lb\le n\le k \ub$. \end{restatable}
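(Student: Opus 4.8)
The plan is to prove both directions directly: necessity by a counting argument and sufficiency by an explicit construction. One could also route this through \Cref{prop:feasible}, but the direct approach is cleaner since the statement fixes the number of coalitions.

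For the necessity direction, I would assume that a $\boundpair$-partition $\costr$ of the $n$ agents consists of exactly $k$ coalitions. Since every coalition $C\in \costr$ satisfies $\lb \le |C| \le \ub$, summing over all $k$ coalitions yields
\[
k\lb = \sum_{C\in \costr} \lb \le \sum_{C\in \costr} |C| = n \le \sum_{C\in \costr} \ub = k\ub,
\]
so that $k\lb \le n \le k\ub$. Note that here $k\ge 1$ automatically, since a partition of a nonempty agent set consists of at least one coalition.

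For the sufficiency direction, I would assume $k\lb \le n \le k\ub$ and construct a suitable partition. Since $k\lb \le n$, I can first form $k$ pairwise disjoint coalitions, each of size exactly $\lb$, using $k\lb$ of the agents; this leaves $n - k\lb \ge 0$ agents still to be placed. Each of the $k$ coalitions can absorb up to $\ub - \lb$ further agents before reaching size $\ub$, so the total spare capacity is $k(\ub - \lb) = k\ub - k\lb$. Because $n \le k\ub$, we have $n - k\lb \le k\ub - k\lb$, so the spare capacity suffices to accommodate all remaining agents. Distributing them one at a time into coalitions while respecting the cap of $\ub$ produces a partition into exactly $k$ coalitions, each of size in $[\lb,\ub]$, as required.

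Both steps are elementary, so I do not anticipate a substantial obstacle; the result amounts to the fact that the integers expressible as a sum of $k$ terms, each lying in $[\lb,\ub]$, are precisely those in $[k\lb,k\ub]$. The only points needing mild care are the edge case $n - k\lb = 0$ in the construction (where the $k$ coalitions of size $\lb$ already form the desired partition and no redistribution is needed) and the observation that the spare-capacity bound is exactly what the hypothesis $n \le k\ub$ provides.
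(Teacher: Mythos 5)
Your proof is correct, and the sufficiency direction takes a genuinely different route from the paper's. The necessity direction is the same counting argument in both. For sufficiency, the paper does not construct the partition from scratch: it first observes that $k\lb \le n$ together with integrality of $k$ gives $k \le \left\lfloor \frac n{\lb}\right\rfloor$, combines this with $n \le k\ub$ to obtain $n\le \left\lfloor \frac n{\lb}\right\rfloor \ub$, invokes \Cref{prop:feasible} to get \emph{some} $\boundpair$-partition $\costr = \{C_1,\dots,C_\ell\}$, and then, when $\ell > k$, merges the agents of $C_{k+1},\dots,C_\ell$ into the first $k$ coalitions, using $n \le k\ub$ to show there is enough room. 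You instead build the partition directly: $k$ skeleton coalitions of size $\lb$ (possible since $k\lb \le n$), then distribute the remaining $n-k\lb$ agents into the spare capacity $k(\ub-\lb)$, which suffices exactly because $n \le k\ub$. Your construction is essentially the one used inside the paper's proof of \Cref{prop:feasible} itself, re-instantiated with $k$ coalitions instead of $\left\lfloor \frac n{\lb}\right\rfloor$. What the paper's route buys is reuse of the earlier proposition, emphasizing that the two feasibility results are linked; what your route buys is self-containment and the avoidance of both the floor-function manipulation and the merging step, including the implicit case analysis on how $\ell$ compares to $k$ (the paper only treats $\ell > k$ explicitly, which is harmless given the construction behind \Cref{prop:feasible} but is a detail your argument never has to confront). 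Your closing remark that the result is just the statement that sums of $k$ terms from $[\lb,\ub]$ realize exactly the integers in $[k\lb,k\ub]$ is an accurate and clean way to see it.
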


\begin{proof}
	Assume first that there exists a $(\lb,\ub)$-partition $\costr$ of $n$ agents into $k$ coalitions.
	As each coalition in $\costr$ is of size at least $\lb$, we know that 
		$k\lb = k|\costr| \le n$.
	Moreover, 
	\begin{equation*}
		n \le \sum_{C\in\costr}|C| \le \sum_{C\in\costr}\ub = |\costr|\ub = k\cdot\ub\text.
	\end{equation*}
	
	Conversely, assume that $k\lb \le n\le k \ub$.
    Then, since $k$ is an integer, $k \le \frac n{\lb}$ implies $k \le \left\lfloor \frac n{\lb}\right\rfloor$.
	Combining this with $n\le k\ub$ implies that $n\le \left\lfloor \frac n{\lb}\right\rfloor \ub$.
	Hence, by \Cref{prop:feasible}, there exists some $(\lb,\ub)$-partition $\costr$ of $n$ agents, say $\costr = \{C_1,\dots, C_{\ell}\}$, where $\ell = |\costr|$.
	Assume that $|\costr|>k$.
	Then, since $n \le k \ub$, we have $k\ub - \sum_{i = 1}^k|C_i| \ge n - \sum_{i = 1}^k|C_i| = \sum_{i = k+1}^{\ell}|C_i|$.
	Hence, there is enough space in the first $k$ coalitions to add the agents in the coalitions $C_{k+1},\dots, C_{\ell}$ to obtain a $(\lb,\ub)$-partition of the $n$ agents into $k$ coalitions.
	\end{proof}

Additionally, \Cref{prop:feasible} can be applied to prove an interesting consequence about the existence of $(\lb,\ub)$-partitions.
Clearly, if $\lb = \ub$, then a $(\lb,\ub)$-partition exists if and only if $\ub$ divides $n$.
Otherwise, we now show that a $(\lb,\ub)$-partition exists for a sufficiently large number of agents.

\begin{restatable}{proposition}{LargeFeasible}\label{prop:LargeFeasible}
    Let $\lb,\ub\in \mathbb N$ with $\lb < \ub$. 
    Then a $(\lb,\ub)$-partition exists whenever $n \ge \frac {\lb - 1}{\ub - \lb} \ub$.
\end{restatable}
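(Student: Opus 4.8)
The plan is to reduce everything to \Cref{prop:feasible}, which asserts that a $\boundpair$-partition of $n$ agents exists if and only if $n\le \left\lfloor \frac n{\lb}\right\rfloor \ub$. Hence it suffices to show that the hypothesis $n \ge \frac{\lb-1}{\ub-\lb}\ub$ forces this inequality to hold; no combinatorial construction is needed beyond what \Cref{prop:feasible} already provides.

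First I would lower-bound the floor by controlling the remainder. Writing $n = \lb\left\lfloor \frac n{\lb}\right\rfloor + r$ with $0\le r\le \lb-1$, we obtain $\left\lfloor \frac n{\lb}\right\rfloor = \frac{n-r}{\lb}\ge \frac{n-\lb+1}{\lb}$. Multiplying by $\ub$ gives $\left\lfloor \frac n{\lb}\right\rfloor\ub \ge \frac{(n-\lb+1)\ub}{\lb}$, so it is enough to verify $\frac{(n-\lb+1)\ub}{\lb}\ge n$.

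The remaining step is pure rearrangement. Clearing the positive denominator $\lb$, the target inequality $\frac{(n-\lb+1)\ub}{\lb}\ge n$ is equivalent to $n\ub - (\lb-1)\ub \ge n\lb$, i.e.\ to $n(\ub-\lb)\ge (\lb-1)\ub$. Since $\lb<\ub$ we have $\ub-\lb>0$, so dividing yields exactly $n\ge \frac{(\lb-1)\ub}{\ub-\lb}$, which is the hypothesis. Chaining the two bounds, the hypothesis gives $\left\lfloor \frac n{\lb}\right\rfloor\ub \ge \frac{(n-\lb+1)\ub}{\lb}\ge n$, and \Cref{prop:feasible} then guarantees the desired $\boundpair$-partition.

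This is a short computation rather than a genuinely hard argument, so there is no real obstacle; the one point requiring care is to use the \emph{exact} remainder bound $\left\lfloor \frac n{\lb}\right\rfloor \ge \frac{n-\lb+1}{\lb}$ rather than the looser $\left\lfloor \frac n{\lb}\right\rfloor > \frac n{\lb}-1$. The looser bound would only deliver the weaker threshold $n\ge \frac{\lb\ub}{\ub-\lb}$, which does not match the bound stated in the proposition; the tighter estimate is exactly what makes the constant $\frac{\lb-1}{\ub-\lb}\ub$ come out sharp.
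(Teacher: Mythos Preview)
Your proof is correct and takes essentially the same approach as the paper: both reduce to \Cref{prop:feasible} via the identical floor bound $\left\lfloor \frac{n}{\lb}\right\rfloor \ge \frac{n-\lb+1}{\lb} = \frac{n}{\lb}-\frac{\lb-1}{\lb}$, with the algebraic rearrangement done in a slightly different order (the paper first multiplies the hypothesis by $\frac{\ub-\lb}{\lb}$ and then applies the floor bound, whereas you apply the floor bound first and then rearrange). Your closing remark about the tighter remainder estimate versus the naive $\lfloor x\rfloor > x-1$ is a nice observation, though not part of the paper's presentation.
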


\begin{proof}
	Let $\lb < \ub$. 
	Assume that $n \ge \frac {\lb - 1}{\ub - \lb} \ub$.
	Multiplication by $\frac {\ub-\lb}{\lb}$ implies that 
	\begin{equation}\label{eq:feas:bound}
		\frac {n}{\lb} \ub - n = \frac{\ub-\lb}{\lb} n \ge \frac {\lb - 1}{\lb} \ub \Leftrightarrow \left[\frac n{\lb} - \frac{\lb - 1}{\lb}\right]\cdot \ub \ge n\text.
	\end{equation}
	
	Hence, 
	$$\left\lfloor \frac n{\lb}\right\rfloor \cdot \ub 
	= \left[\frac n{\lb} + \left(\left\lfloor \frac n{\lb}\right\rfloor - \frac n{\lb}\right)\right]\cdot \ub 
	\ge \left[\frac n{\lb} - \frac{\lb - 1}{\lb}\right]\cdot \ub \ge n\text.$$
	For the first inequality, note that the remainder of division by $\lb$ is an integer in $\{0,1,\dots, \lb -1\}$.
	Hence, rounding down $\frac n{\lb}$ can decrease this number by at most $\frac{\lb - 1}{\lb}$.
	In the second inequality, we applied \Cref{eq:feas:bound}.
	Thus, by \Cref{prop:feasible}, there exists a feasible partition.
	\end{proof}

While \Cref{prop:feasible} allows us to check for the existence of a $(\lb,\ub)$-partition with a few elementary operations, \Cref{prop:LargeFeasible} implies that the nonexistence of $(\lb,\ub)$-partitions is not an algorithmic issue if we fix size bounds.
For instance, since hardness reductions construct games with an unbounded number of agents, our hardness results for stability do not rely on the nonexistence of stability for small agent numbers (in fact, all games in our reductions admit partitions for the considered size bounds).

\section{Existence of Stable Partitions}\label{sec:existence}

We turn to the consideration of stability.
We start with an example that highlights basic relationships of stability concepts once we have a nontrivial lower bound.

\begin{example}\label{ex:intro}
    Let $\lb = 2$, $\ub \ge 3$, and $k\in \mathbb N$. 
    We define two ASHGs $(N,\vf)$ and $(N,\vf')$ where $N = \{\genA_{i},\genB_{i}\colon i\in [k]\}$.
    Symmetric valuations $\vf$ are given by $\vf(\genA_{i},\genB_{i}) = -1$ for all $i\in [k]$ and $\vf(\genA,\genB) = 1$ for any other pair of agents.
    Symmetric valuations $\vf'$ are given by $\vf'(\genA,\genB) = -1$ for all pairs of agents $\genA,\genB\in N$.
    We illustrate $(N,\vf)$ in \Cref{fig:intro}.
    For both games, we consider the partition $\costr = \{\{\genA_{i},\genB_{i}\}\colon i\in [k]\}$.

    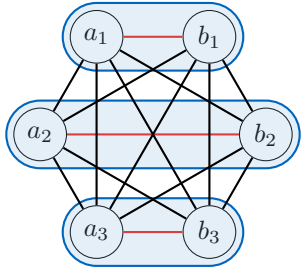
\begin{figure}[tb]
        \centering
        \begin{tikzpicture}[every node/.style={draw, circle, minimum size=.7cm, inner sep=0pt}]

            \pgfmathsetmacro\picscale{1.5}
            \node (a1) at (120:\picscale) {$\genA_1$};
            \node (a2) at (180:\picscale) {$\genA_2$};
            \node (a3) at (240:\picscale) {$\genA_3$};
            \node (b1) at (60:\picscale) {$\genB_1$};
            \node (b2) at (0:\picscale) {$\genB_2$};
            \node (b3) at (300:\picscale) {$\genB_3$};

            \foreach \i in {1,2,3}
            {
            \draw[thick,myblue, fill=myblue!50, fill opacity=0.2] \convexpath{a\i,b\i}{0.45cm};
            \draw[myred,thick] (a\i) edge (b\i);
            }
            \foreach \i/\j in {1/2,2/3,3/1}
            {
            \draw[thick] (a\i) edge (a\j);
            \draw[thick] (b\i) edge (b\j);
            }
            \foreach \i/\j in {1/2,1/3,2/1,2/3,3/1,3/2}
            {
            \draw[thick] (a\i) edge (b\j);
            }
        \end{tikzpicture}
        \caption{Illustration of the game $(N,\vf)$ in \Cref{ex:intro} for the case $k = 3$.
        Black and red edges indicate a symmetric valuation of $1$ and $-1$, respectively.
        The partition $\costr$ is indicated in blue.}
        \label{fig:intro}
    \end{figure}

    For the ASHG $(N,\vf)$, $\costr$ is an NS\feastab $(\lb,\ub)$-partition because no agent can perform a $(\lb,\ub)$-feasible deviation when $\lb = 2$.
    However, $\costr$ is not a CIS $(\lb,\ub)$-partition as each deviation of any agent to any other nonempty coalition is a $(\lb,\ub)$-permissible CIS deviation.
    Hence, even our strongest feasible stability concept does not imply our weakest standard stability concept.

    For our second ASHG $(N,\vf')$, $\costr$ clearly is an NS $(\lb,\ub)$-partition.
    Indeed, every $(\lb,\ub)$-permissible deviation is to join another nonempty coalition which is always worse for the deviator.
    Hence, even our strongest stability concept does not imply individually rationality for any agent.
\end{example}

For the rest of this section, we are concerned with the existence of stable partitions.
Our first observation is that feasible stability notions can be satisfied under symmetric valuations.
This is a standard result for unconstrained coalition sizes \cite{BoJa02a} and has also been observed for upper-bounded coalition sizes \cite{LAH23a,FGM25a} as well as for hedonic games with fixed-sized coalitions and topological distance games \cite{BMM22a,BuSu24a}.
The idea is that a partition maximizing social welfare is stable.
For completeness, we include the standard proof.

\begin{restatable}{proposition}{symexist}\label{prop:symexist}
    Let $\lb,\ub\in \mathbb N$ with $\lb\le\ub$.
    Any symmetric ASHG that admits a $(\lb,\ub)$-partition also admits an NS\feastab $(\lb,\ub)$-partition.
\end{restatable}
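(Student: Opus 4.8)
The plan is to use social welfare as an exact potential function, exploiting symmetry in the spirit of the standard unconstrained argument. Since the game admits a $\boundpair$-partition by hypothesis and the agent set $N$ is finite, there are only finitely many $\boundpair$-partitions, so I would fix one, call it $\costr$, that maximizes $\SW$ among all $\boundpair$-partitions. The claim is that such a welfare-maximal $\costr$ is already NS\feastab, and I would prove this by contradiction: assuming a $\boundpair$-feasible Nash deviation exists, I would show it strictly increases social welfare while staying within the class of $\boundpair$-partitions, violating maximality.

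Concretely, suppose $\costr \xrightarrow{\genA} \costr'$ is a $\boundpair$-feasible Nash deviation. By the definition of a Nash deviation, $\uf_{\genA}(\costr'(\genA)) > \uf_{\genA}(\costr(\genA))$, and by feasibility $\costr'$ is again a $\boundpair$-partition. The next step is to track which agents change utility: only the deviator $\genA$, the agents abandoned in $\costr(\genA)\setminus\{\genA\}$, and the agents joined in $\costr'(\genA)\setminus\{\genA\}$. The abandoned agents collectively lose $\sum_{\genB\in\costr(\genA)\setminus\{\genA\}}\vf_{\genB}(\genA)$ and the joined agents collectively gain $\sum_{\genB\in\costr'(\genA)\setminus\{\genA\}}\vf_{\genB}(\genA)$. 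Invoking symmetry, $\vf_{\genB}(\genA)=\vf_{\genA}(\genB)$, so these two sums equal $\uf_{\genA}(\costr(\genA))$ and $\uf_{\genA}(\costr'(\genA))$ respectively. Adding the deviator's own change then yields
\begin{equation*}
  \SW[\costr'] - \SW = 2\bigl(\uf_{\genA}(\costr'(\genA)) - \uf_{\genA}(\costr(\genA))\bigr) > 0\text.
\end{equation*}
As $\costr'$ is a $\boundpair$-partition of strictly larger social welfare, this contradicts the choice of $\costr$, so no $\boundpair$-feasible Nash deviation exists and $\costr$ is NS\feastab.

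The computation is routine; the point that genuinely requires care—and the reason the statement is about NS\feastab rather than plain NS—is the scope of the optimization. Maximizing $\SW$ over $\boundpair$-partitions only lets us rule out deviations whose outcome is \emph{again} a $\boundpair$-partition, i.e.\ exactly the feasible ones. A merely $\boundpair$-permissible Nash deviation that shrinks the abandoned coalition below $\lb$ leaves the space of $\boundpair$-partitions, so maximality of $\costr$ says nothing about it. This is precisely the phenomenon introduced by a nontrivial lower bound, illustrated in \Cref{ex:intro}: when $\lb=1$ every permissible deviation is feasible and the same argument would deliver full NS, but for $\lb\ge 2$ the welfare-maximal partition is only guaranteed to be NS\feastab, as improving permissible-but-infeasible deviations may persist.
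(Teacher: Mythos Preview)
Your proof is correct and follows essentially the same approach as the paper: pick a social-welfare-maximizing $\boundpair$-partition and show via symmetry that any feasible Nash deviation would strictly increase welfare, yielding the same $2\bigl(\uf_{\genA}(\costr')-\uf_{\genA}(\costr)\bigr)>0$ computation. Your closing remark on why the argument delivers only NS\feastab rather than NS is a nice addition beyond what the paper spells out.
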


\begin{proof}
	Consider a symmetric ASHG $(N,\vf)$.
	We prove the 
	Consider a $\boundpair$-partition $\costr^*$ that maximizes social welfare among  $\boundpair$-partitions.
	We claim that $\costr^*$ is an NS\feastab $\boundpair$-partition.
	
	Assume for contradiction that there exists an agent $\genA\in N$ that can perform a $\boundpair$-feasible NS deviation $\costr \xrightarrow{\genA} \costr'$ resulting in a partition $\costr'$.
	Note that the only agents apart from $\genA$ whose utilities change are agents in $\costr(\genA)\setminus \{\genA\}$ (who diminish their utility by their valuation for $\genA$) and in $\costr'(\genA)\setminus \{\genA\}$ (who increase their utility by their valuation for $\genA$)
	We obtain
	\begin{align*}
		&\SW[\costr'] - \SW = \sum_{\genB\in N}(\uf_{\genB}(\costr') - \uf_{\genB}(\costr))\\
		&= \uf_{\genA}(\costr') - \uf_{\genA}(\costr) + \sum_{\genB\in \costr'(\genA)\setminus \{\genA\}} \vf_{\genB}(\genA) + \sum_{\genB\in \costr(\genA)\setminus \{\genA\}} - \vf_{\genB}(\genA)\\
		&= \uf_{\genA}(\costr') - \uf_{\genA}(\costr) + \sum_{\genB\in \costr'(\genA)\setminus \{\genA\}} \vf_{\genA}(\genB) + \sum_{\genB\in \costr(\genA)\setminus \{\genA\}} - \vf_{\genA}(\genB)\\
		&= 2(\uf_{\genA}(\costr') - \uf_{\genA}(\costr)) > 0\text.
	\end{align*}
	
	There, we use symmetry in the second-to-last line.
	Hence $\SW[\costr'] > \SW$, contradicting our maximality assumption.
\end{proof}

By contrast, if we consider $(\lb,\ub)$-permissible instead of $(\lb,\ub)$-feasible deviations, then stability is not even guaranteed for our weakest standard stability notion, even if valuations are simple and symmetric.
Indeed, simple symmetric ASHGs whose valuations evolve from a star graph have this property.

\begin{restatable}{proposition}{nonexistCISsymm}\label{prop:nonexistCISsymm}
        Let $2\le \lb < \ub$. 
    Then there exists a simple symmetric ASHG
    that contains no CIS $(\lb,\ub)$-partition.
\end{restatable}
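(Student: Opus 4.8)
The plan is to use the simple symmetric ASHG arising from a star: take one center $c$ together with $m := \ub + \lb - 2$ leaves, set $\vf(c,\genB) = 1$ for every leaf $\genB$, and let all valuations among leaves be $0$. In any partition, a leaf obtains utility $1$ exactly when it shares its coalition with $c$ and $0$ otherwise, while $c$ obtains the number of leaves in $\costr(c)$. The whole argument then rests on a clean characterization of which $\boundpair$-partitions can be CIS, followed by a counting choice of $m$ that rules all of them out.

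The first step is to exploit that the game is nonnegative, so that the individual-consent condition of a CIS deviation is automatically met: when an agent joins a coalition, every incumbent's utility changes by a nonnegative amount, hence nobody is harmed. Thus only the contractual condition can block a deviation, and it forbids a departing agent from being valued positively by anyone left behind. This immediately eliminates most deviations in the star. The center $c$ can never deviate, since any coalition of size at least $\lb \ge 2$ containing $c$ also contains a leaf, which values $c$ at $1$ and would be harmed. A leaf sharing its coalition with $c$ can neither profit (its utility is already the maximal value~$1$) nor leave without harming $c$. Hence the only candidate is a leaf outside $\costr(c)$ joining it, and this is a genuine $\boundpair$-permissible CIS deviation precisely when $\costr(c)$ is not full: the joining leaf strictly gains, and no agent is harmed. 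It follows that a $\boundpair$-partition is CIS if and only if either all leaves lie in $\costr(c)$, or $\costr(c)$ is full.

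The second step chooses $m = \ub + \lb - 2$ to forbid both of these situations in every valid partition. Packing all leaves with $c$ requires a coalition of size $m + 1 = \ub + \lb - 1 > \ub$, which is infeasible because $\lb \ge 2$. If instead $\costr(c)$ is full, it uses $\ub - 1$ leaves, leaving $m - (\ub - 1) = \lb - 1$ leaves that would have to be partitioned among themselves; since $1 \le \lb - 1 < \lb$, they cannot form any $\boundpair$-coalition, so no valid partition places $c$ in a full coalition. Consequently, in every $\boundpair$-partition the coalition $\costr(c)$ has room and at least one leaf lies outside it, so a CIS deviation exists. Therefore no $\boundpair$-partition is CIS.

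Finally, to make the example meaningful rather than vacuous, I would verify via \Cref{prop:feasible} that a $\boundpair$-partition exists at all, for instance the two coalitions of sizes $\lb$ (the center with $\lb - 1$ leaves) and $\ub - 1$ (the remaining leaves), which are valid because $\lb \le \ub - 1$. I expect the main obstacle to be pinning down the count $m = \ub + \lb - 2$: it must be simultaneously large enough that not all leaves fit with the center, yet leave an infeasible remainder of size exactly $\lb - 1$ once $\costr(c)$ is filled, all while the full instance remains partitionable. Once this balance is struck, the stability analysis is immediate from the nonnegativity of the valuations.
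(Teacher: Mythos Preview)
Your proof is correct and uses the same star-graph construction as the paper, but with a different---and in fact more robust---choice of the number of leaves. The paper takes $2\lb-1$ leaves (so $n=2\lb$ agents) and asserts that every $\boundpair$-partition must split into two coalitions of size~$\lb$; however, this tacitly assumes $\ub<2\lb$, since for $\ub\ge 2\lb$ the grand coalition of all $2\lb$ agents is itself a valid $\boundpair$-partition, and because $\lb\ge 2$ forbids forming a singleton, it admits no $\boundpair$-permissible deviation whatsoever and is therefore vacuously CIS. Your choice $m=\ub+\lb-2$ (so $n=\ub+\lb-1$) sidesteps this: the grand coalition has size $\ub+\lb-1>\ub$ and is never feasible, while filling $\costr(c)$ to size~$\ub$ strands exactly $\lb-1$ leaves, which cannot form a $\boundpair$-coalition. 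Your explicit characterization of CIS in the star (either every leaf lies in $\costr(c)$, or $\costr(c)$ is full) makes the counting transparent and yields an argument that works uniformly for all $2\le\lb<\ub$.
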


\begin{proof}
	Let $2\le \lb < \ub$. 
		We will construct an ASHG $(N,\vf)$ with the desired property where $N = \{\genA_{i}\colon i\in [2\lb -1]\}\cup\{c\}$, i.e., the ASHG contains $n = 2\lb$ agents.
	Simple and symmetric valuations are given by $\vf(c,\genA_{i}) = 1$ for all $i\in [2\lb -1]$, and all other valuations are set to~$0$.
	
	Consider any $\boundpair$-partition $\costr$.
	By design of the number of agents, $\costr$ has to consist of two coalitions of size $\lb$. 
	We will call these $S_1$ and $S_2$, i.e., $\costr = \{S_1,S_2\}$.
	
	Without loss of generality $c\in S_1$.
	But then any agent in $S_2$ can perform a $\boundpair$-permissible CIS deviation.
	Hence, $(N,\vf)$ does not admit CIS $\boundpair$-partitions.
\end{proof}

\Cref{prop:symexist,prop:nonexistCISsymm} provide a complete picture about the existence of our stability concepts for symmetric ASHGs.
While even the strongest feasible stability concept can be satisfied, the weakest standard stability concept cannot.
This leads us to the question whether feasible stability is still guaranteed for nonsymmetric valuations.
First, since $(\lb,\ub)$-feasible deviations lead to $(\lb,\ub)$-partitions and increase the social welfare, CIS\feastab partitions are guaranteed to exist.
Hence, any partition maximizing social welfare among $(\lb,\ub)$-partitions is a CIS\feastab partition.
Consequently, the existence of such a partition only depends on the existence of some $(\lb,\ub)$-partition.

\begin{observation}\label{prop:nonexistCIS}
    Let $\lb,\ub\in \mathbb N$ with $\lb\le\ub$.
    Any ASHG that admits a $(\lb,\ub)$-partition also admits a CIS\feastab $(\lb,\ub)$-partition.
\end{observation}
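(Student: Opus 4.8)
The plan is to show that any $\boundpair$-partition maximizing social welfare is CIS\feastab, mirroring the argument of \Cref{prop:symexist} but exploiting the consent conditions of contractual individual deviations rather than symmetry. Since the game admits a $\boundpair$-partition by hypothesis and there are only finitely many such partitions, I would fix a $\boundpair$-partition $\costr^*$ that maximizes $\SW[\costr^*]$ among all $\boundpair$-partitions, and claim that $\costr^*$ admits no $\boundpair$-feasible CIS deviation.

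Assume for contradiction that some agent $\genA$ can perform a $\boundpair$-feasible CIS deviation $\costr^* \xrightarrow{\genA} \costr'$. Because the deviation is $\boundpair$-feasible, $\costr'$ is again a $\boundpair$-partition, so it lies in the set over which $\costr^*$ is maximal; it therefore suffices to derive $\SW[\costr'] > \SW[\costr^*]$. The only agents whose utility can change are those in $\costr^*(\genA) \cup \costr'(\genA)$, so I would examine these contributions separately. As the deviation is in particular a Nash deviation, the deviator strictly gains, $\uf_{\genA}(\costr') > \uf_{\genA}(\costr^*)$. As it is an individual deviation, no agent in the welcoming coalition $\costr'(\genA)$ is harmed, so every such agent weakly gains; and as it is a contractual deviation, no agent in the abandoned coalition $\costr^*(\genA)$ is harmed, so every such agent weakly gains as well. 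Summing these nonnegative changes together with the strictly positive change of $\genA$ yields $\SW[\costr'] - \SW[\costr^*] > 0$, contradicting the maximality of $\costr^*$. Hence $\costr^*$ is CIS\feastab.

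I do not expect any real obstacle here, as the statement is essentially a bookkeeping argument, matching its status as an \emph{observation}. The one point requiring care is that both halves of the contractual individual deviation must be invoked: individuality controls the joined coalition and contractuality controls the abandoned coalition, so that every affected non-deviating agent has a nonnegative utility change. It is also worth noting that, unlike the symmetric case of \Cref{prop:symexist}, symmetry is not used here --- we only rely on each affected agent's own (weak or strict) preference between the old and the new partition --- which is precisely why the conclusion extends to arbitrary, possibly nonsymmetric, valuations.
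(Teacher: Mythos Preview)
Your proposal is correct and follows essentially the same approach as the paper: take a social-welfare-maximizing $\boundpair$-partition and observe that any $\boundpair$-feasible CIS deviation would strictly increase social welfare (the deviator strictly gains, and by the contractual and individual conditions no other affected agent loses), contradicting maximality. The paper states this more tersely in the text preceding the observation, but the content is the same.
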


However, existence for stronger feasible stability concepts is not guaranteed anymore.
For IS\feastab, examples are given by instances where the positive valuations form a directed cycle.

\begin{restatable}{proposition}{nonexistISasy}\label{prop:nonexistISasy}
    Let $2\le \lb < \ub$. 
    Then there exists an infinite family of simple ASHGs
    that contain no IS\feastab $(\lb,\ub)$-partition.
            \end{restatable}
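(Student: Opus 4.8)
The plan is to realize each game as a \emph{directed cycle}. Take $N=\{a_1,\dots,a_n\}$ with simple valuations $\vf(a_i,a_{i+1})=1$ for all $i$ (indices cyclically modulo $n$) and $\vf=0$ otherwise, so that $a_i$ has utility $1$ in its coalition exactly when $a_{i+1}$ is present, and utility $0$ otherwise. I would choose $n=\lb\ub k+1$ for $k\ge 1$: this yields infinitely many games, each admitting a $(\lb,\ub)$-partition by \Cref{prop:LargeFeasible} (since $n>(\lb-1)\ub\ge\frac{\lb-1}{\ub-\lb}\ub$), while guaranteeing $\lb\nmid n$ and $\ub\nmid n$.

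The first step is to reduce IS\feastab to NS\feastab. Because all valuations are nonnegative, an agent joining a coalition can never decrease the utility of its existing members; hence no welcoming coalition ever objects, so every Nash deviation is automatically an individual deviation. Thus IS\feastab and NS\feastab coincide for these games, and it suffices to show that every $(\lb,\ub)$-partition admits a $(\lb,\ub)$-feasible Nash deviation. Unpacking the definitions (and using $\lb\ge 2$, so deviators must join an existing coalition), a feasible Nash deviation by $a_i$ exists precisely when (i) $a_{i+1}\notin\costr(a_i)$, so $a_i$ currently has utility $0$; (ii) $|\costr(a_{i+1})|<\ub$, so the target is not full and can be joined; and (iii) $|\costr(a_i)|>\lb$, so the abandoned coalition stays feasible. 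Hence a partition is stable iff for every \emph{cut edge} --- a cyclic edge $(a_i,a_{i+1})$ with $\costr(a_i)\ne\costr(a_{i+1})$ --- we have $|\costr(a_i)|=\lb$ or $|\costr(a_{i+1})|=\ub$.

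The crux is to show that this cut-edge condition forces all coalitions to share a common size, contradicting $\lb\nmid n$ and $\ub\nmid n$. I would argue via the directed multigraph $H$ on the coalitions of $\costr$ with one arc $\costr(a_i)\to\costr(a_{i+1})$ per cut edge. Each coalition $C$ has in-degree equal to its out-degree, both equal to its number of cyclic arcs, which is at least $1$ since $|C|\le\ub<n$. The stability condition says that along every arc of $H$, either the tail has size $\lb$ or the head has size $\ub$; equivalently, every non-minimal coalition (size $>\lb$) sends all its arcs into full coalitions. Double-counting the arcs whose head lies in the set $F$ of full coalitions then shows, first, that there are no \emph{middle} coalitions of size strictly between $\lb$ and $\ub$ (the arcs out of all non-minimal coalitions already saturate the arcs into $F$), and, second, that no arc runs between a minimal and a full coalition. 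Since minimal and full coalitions are genuinely distinct (their sizes differ as $\lb<\ub$), the absence of such arcs means no cyclic edge joins a vertex of a size-$\lb$ coalition to a vertex of a size-$\ub$ coalition; as the cycle is connected, either all coalitions have size $\lb$ or all have size $\ub$. This forces $\lb\mid n$ or $\ub\mid n$, the desired contradiction, so no stable partition exists.

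I expect the main obstacle to be exactly this last structural step. The cut-edge condition is easy to state, and the ``all equal size'' conclusion is immediate for partitions into contiguous arcs; the difficulty is ruling out arbitrary \emph{non-contiguous} partitions, where a single coalition may consist of several arcs scattered around the cycle. Handling these requires the degree-balance and double-counting argument on $H$ rather than a direct local inspection of the partition.
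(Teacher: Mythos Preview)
Your construction and the reduction of IS\feastab to NS\feastab coincide with the paper's: both use the directed $n$-cycle with $\lb\nmid n$ and $\ub\nmid n$, and both observe that nonnegative valuations make every Nash deviation an individual deviation. Your characterisation of when a feasible deviation exists (the cut-edge condition) is also exactly what the paper uses, though you state it more explicitly.

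The genuine difference is in how you rule out stable partitions. The paper argues \emph{procedurally}: it picks a coalition of size $>\lb$, repeatedly follows a cut edge out of the current union of coalitions, and either finds a non-full target (yielding a deviation) or absorbs a full coalition; if the process exhausts $N$, all coalitions except the starting one have size $\ub$, and then $\ub\nmid n$ supplies a deviation back into the starting coalition. You argue \emph{structurally}: the coalition multigraph $H$ is balanced and connected, and double-counting arcs into the set $F$ of full coalitions forces first $P=\emptyset$ and then no arcs between minimal and full coalitions, whence all coalitions share a common size. Your route proves a slightly stronger intermediate statement (uniform coalition size in any stable partition) and handles non-contiguous coalitions cleanly via degree balance; the paper's expansion argument is shorter and avoids introducing $H$, but needs the observation that every coalition in the growing union has size $>\lb$. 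Both are correct and of comparable length.
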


\begin{proof}
	Consider any pair of bounds $\lb,\ub\in \mathbb N$ with $2 \le\lb <\ub$ and choose $n\in \mathbb N$ such that $\lb$ and $\ub$ do not divide $n$.\footnote{Without loss of generality, we may choose $n$ large enough to guarantee the existence of $\boundpair$-partitions, see \Cref{prop:LargeFeasible}.}
	We will construct an ASHG $(N, \vf)$ for any such set of parameters.
	Let $N = \{\genA_{i}\colon i\in [n]\}$ and valuations given as
	\begin{align}
		\vf_{\genA_{i}}(\genA_{j}) = \begin{cases}
			1 & \text{if } j \equiv_n i+1,\\
			0 & \text{otherwise.}
		\end{cases}
	\end{align}
	
	Hence, the game corresponds to a directed cycle among $n$ agents.
	Note that as all player values are nonnegative we have IS $=$ NS. 
	
	For any strict subset of agents $N' \subsetneq N$, there must exist an agent $\genA \in N'$ such that there exists an agent $\genB \notin N'$ with $v_{\genA}(\genB) = 1$.
	Hence, the unique agent in the cycle succeeding $\genA$ is not contained.
	Note that $\genA$ would like to perform a IS\feastab deviation to join $\genB$, but this deviation might not be $\boundpair$-feasible.
	We will now show that in any possible $\boundpair$-partition there exists such a situation where a $\boundpair$-feasible deviation exists.
		
	Therefore, consider any $\boundpair$-partition $\costr$. 
	Let $N'_1$ be a coalition from $\costr$ with $|N'_1|\neq \lb$.
	Such a coalition exists as $\lb$ does not divide $n$.
	As argued above, there must exist an agent $\genA\in N'_1$ such that there exists $\genB \not \in N'_1$ with $v_{\genA}(\genB) = 1$. 
	If $|\costr (\genB)| < \ub$, then we have found a $\boundpair$-feasible IS\feastab deviation ($\genA$ deviates from $\costr (\genA)$ to $\costr (\genB)$). Otherwise we set $N'_2 = N'_1\cup \costr (\genB)$.
	
	We now iterate this process.
	Whenever we are given a union of coalitions $N'_k$, we check whether $N'_k = N$.
	If this is not the case, we find an agent $\genA\in N'_k$ such that there exists $\genB \not \in N'_k$ with $v_{\genA}(\genB) = 1$.
	If $|\costr (\genB)| < \ub$, we find a $\boundpair$-feasible IS\feastab deviation where $\genA$ joins $\costr(\genB)$.
	Note that this leaves the abandoned coalition of size at least $\lb$ because $N'_1$ was larger than $\lb$ and all other coalitions in the union are of size $\ub$.
	If $|\costr (\genB)| = \ub$, we update $N'_{k+1} = N'_k\cup \costr(\genB)$.
	
	This process can only end if $N'_k = N$. However, this implies that $\costr$ consists only of coalitions of size $\ub$, except possibly $N'_1$.
	Since $\ub$ does not divide $n$, we must have that $|N'_1| < \ub$.
	But then we find an agent $\genA\notin N'_1$ whose unique successor in the cycle is in $N'_1$.
	Hence, $\genA$ can perform a $\boundpair$-feasible IS\feastab deviation to join $N'_1$, abandoning a coalition of size $\ub$.
\end{proof}

Similarly, without symmetry, we lose existence of CNS\feastab.
Therefore, we consider instances where most valuations are zero, except for three agents forming a directed cycle with negative valuations and a partitioning of the other agents into pairs with mutual negative valuations.

\begin{restatable}{proposition}{nonexistCNSasy}\label{prop:nonexistCNSasy}
    Let $2\le \lb < \ub$. 
    Then there exists an ASHG
    that contains no CNS\feastab $(\lb,\ub)$-partition.
\end{restatable}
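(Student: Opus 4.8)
The plan is to exhibit, for each pair of bounds, a single ASHG whose only nonzero valuations are negative: three agents $x_1,x_2,x_3$ form a directed cycle with $\vf_{x_1}(x_2)=\vf_{x_2}(x_3)=\vf_{x_3}(x_1)=-1$, while the remaining agents are grouped into pairs $\{p_i,q_i\}$ with $\vf_{p_i}(q_i)=\vf_{q_i}(p_i)=-1$, and every other valuation is $0$. The observation that makes this tractable is that there are no positive valuations, so when an agent leaves a coalition nobody's utility decreases; hence every Nash deviation is automatically contractual, and CNS\feastab coincides with NS\feastab on this instance. It therefore suffices to show that no $\boundpair$-partition is immune to $\boundpair$-feasible Nash deviations. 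Since all utilities are nonpositive with maximum $0$, and each agent has exactly one enemy, an agent strictly benefits from a move precisely when it is currently \emph{unhappy}, i.e.\ its unique enemy sits in its own coalition; and because that enemy then lies in no other coalition, such an agent can deviate feasibly as soon as it sits in a coalition of size at least $\lb+1$ and some other coalition has a free slot.

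I would next fix the number of pairs. Writing $n=3+2m$ for the number of agents, I would choose $m$ so that $n$ is odd and lies in the window $\max(\ub+1,2\lb)\le n\le \min(2\ub-1,3\lb-1)$. Then \Cref{prop:feasible} guarantees a $\boundpair$-partition exists, the cutoff $n<3\lb$ forces every $\boundpair$-partition to consist of exactly two coalitions (three coalitions would need at least $3\lb$ agents), and $n>\ub$ rules out a single coalition. The conflict graph of the instance — a triangle on $\{x_1,x_2,x_3\}$ together with the perfect matching on the pairs — has independence number exactly $1+m$, since an independent set uses at most one triangle vertex and at most one agent from each pair. In a two-coalition partition the larger coalition has at least $\lceil n/2\rceil=m+2>1+m$ agents, so it is not independent and hence contains a dislike pair, witnessing an unhappy agent. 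That agent can move to the smaller coalition, whose size is at most $m+1\le\ub-1$ (so there is room), while the abandoned coalition retains size at least $m+1\ge\lb$; this is a $\boundpair$-feasible, hence CNS\feastab, deviation. As this holds for every two-coalition partition, no CNS\feastab $\boundpair$-partition exists.

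The main obstacle is covering the full parameter range, in particular the regime $\ub\ge 3\lb-1$, where the window above is empty. The difficulty is genuine: once $n>\ub\ge 3\lb$ the number of coalitions can no longer be capped at two, and one can place $x_1,x_2,x_3$ in three distinct coalitions and split every pair, yielding a partition in which every agent attains utility $0$. Such an ``all-happy'' partition admits no Nash deviation at all and is therefore CNS\feastab, so the triangle-plus-matching gadget as stated cannot preclude stability when $\ub$ is large. The crux is thus to replace the sparse conflict structure — in which every agent has only one enemy — by a denser one whose independence number stays below $\lb$, so that every coalition of size at least $\lb$ necessarily contains a conflicting pair and no all-happy partition can exist, while still arranging the enemy relation so that each unhappy agent keeps a strictly improving feasible target. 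Reconciling these two demands (small independence number, yet clean improving moves) is where the real work lies; throughout, \Cref{prop:LargeFeasible} ensures that the nonexistence of stability is never merely an artifact of there being no $\boundpair$-partition at all.
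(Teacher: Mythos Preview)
Your construction and argument are exactly what the paper uses. The paper fixes $m=\lb-1$ pairs so that $n=2\lb+1$, then asserts that ``by design of the number of agents, $\costr$ has precisely $2$ coalitions'' and runs your independence-number count specialised to block sizes $\lb+1$ and $\lb$: at most $\lb-1$ pair-agents can sit in the larger block without a conflict, hence at least two of the three cycle agents land there, and the one whose out-neighbour on the negative cycle is present moves feasibly to the smaller block.

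Where you differ is that you flag the large-$\ub$ regime as unresolved, whereas the paper does not. Your concern is valid and in fact applies to the paper's own instance: once $\ub\ge 2\lb+1$, the grand coalition of all $2\lb+1$ agents is itself a $(\lb,\ub)$-partition, and since $\lb\ge 2$ there is no $(\lb,\ub)$-feasible single-agent deviation out of it at all, so it is vacuously CNS\feastab. Thus the paper's ``precisely $2$ coalitions'' claim fails in that range, and its instance is not a counterexample there. In short, your proposal matches the paper's approach and correctly identifies a gap that the paper's proof, as written, shares; neither argument covers $\ub\ge 2\lb+1$ without additional work of the kind you sketch in your final paragraph.
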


\begin{proof}
	Consider any $2\le \lb < \ub$. 
	We define an ASHG $(N,\vf)$ where $N = \{\genA_1,\dots,\genA_{\lb-1}, \genB_1,\dots,\genB_{\lb-1},c_1,c_2,c_3\}$. 
	Valuations assume only the values in $\{-1,0\}$ and are given as follows. 
	\begin{itemize}
		\item We have $\vf_{\genA_{i}}(\genB_{i}) = \vf_{\genB_{i}}(\genA_{i}) = -1$ for all $i \in [\lb-1]$.
		\item We have $\vf_{c_1}(c_2) = \vf_{c_2}(c_3) = \vf_{c_3}(c_1) = -1$.
		\item All other valuations are~$0$. 
	\end{itemize}
	
	Hence, the game consists of a directed negative cycle of three agents and further mutually negative pairs of agents.
	An illustration is provided in \Cref{fig:nonexistCNSstar}.
	
	\begin{figure}[tb]
		\centering
		\begin{tikzpicture}[every node/.style={draw, circle, minimum size=.8cm, inner sep=0pt}]
						\foreach[count = \k] \i/\j in {0/1,1.3/2}
			{
				\node (a\k) at (\i,0) {$\genA_{\j}$};
				\node (b\k) at (\i,1.8) { $\genB_{\j}$};
			}
			\foreach \i/\j/\k in {3.2/{\lb - 1}/3}
			{
				\node (a\k) at (\i,0) {\footnotesize $\genA_{\j}$};
				\node (b\k) at (\i,1.8) {\footnotesize $\genB_{\j}$};
			}
			\foreach \i in {1,2,3}{
				\draw[->, thick, myred,bend right] (a\i) edge (b\i);
				\draw[->, thick, myred,bend right] (b\i) edge (a\i);
			}
			\node[draw=none] at ($(a2)!.5!(b3)$) {\dots};
			
						\node (c1) at ($(-2.5,.9) + (60:1)$) {$c_1$};
			\node (c2) at ($(-2.5,.9) + (180:1)$) {$c_2$};
			\node (c3) at ($(-2.5,.9) + (300:1)$) {$c_3$};
			
			\foreach \i/\j in {1/2,2/3,3/1}{
				\draw[->, thick, myred,bend right] (c\i) edge (c\j);}
		\end{tikzpicture}
		\caption{ASHG constructed in the proof of \Cref{prop:nonexistCNSasy}.
			Red edges indicate valuations of~$-1$, while missing edges indicate a valuation of~$0$.}
		\label{fig:nonexistCNSstar}
	\end{figure}
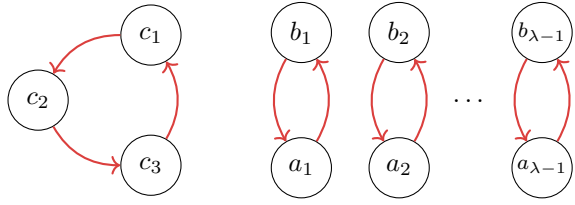
	
							Let $\costr$ be any $\boundpair$-partition.
	By design of the number of agents, $\costr$ has precisely $2$ coalitions. 
	One, which we call $S_1$, has $\lb+1$ agents, the other, which we call $S_2$, has $\lb$ agents.
	
	For $\costr$ to be a CNS$^*$ partition, it must be the case that $\genA_{i}$ and $\genB_{i}$ are not both in $S_1$ for all $i \in [\lb-1]$. 
	Indeed, if $\genA_{i} \in S_1$ and $\genB_{i} \in S_1$ then $\genA_{i}$ could make a CNS$^*$ deviation to $S_2$. 
	
	Thus at most $\lb-1$ agents in $S_1$ are from $\{\genA_1,\dots,\genA_{\lb-1}, \genB_1,\dots,\genB_{\lb-1}\}$. This means at least $2$ of the agents in $S_1$ are from $\{c_1,c_2,c_3\}$. 
	Without loss of generality (by symmetry), we assume $c_1,c_2 \in S_1$. 
	But then $c_2$ can make a CNS$^*$ deviation to $S_2$, as no agent has a positive preference value for $c_2$ (as all preference values are nonpositive), and in making this deviation $c_2$ improves their utility from $-1$ to $0$. 
	Moreover, by the sizes of the coalitions, this is a $\boundpair$-feasible deviation.
	We conclude that the constructed ASHGs admit no CNS$^*$ partitions.
\end{proof}

Notably, the nonexistence results of \Cref{prop:nonexistCISsymm,prop:nonexistISasy,prop:nonexistCNSasy} assume a lower bound of $\lb \ge 2$. 
However, in the next section, we will obtain a complete complexity picture of all stability notions when $\lb = 1$ and thereby also exactly characterize the existence of stability in this case.

\section{Computation of Stable Partitions}

In this section, we consider the computational complexity of deciding whether there exist stable $(\lb,\ub)$-partitions in a given ASHG.
We obtain positive results in the form of efficient algorithms as well as \NP-completeness results for the decision problem of whether a stable $(\lb,\ub)$-partition exists.
Note that for all of our stability notions, membership in \NP{} is straightforward: a stable partition can easily be verified in polynomial time, by considering all possible deviations from all agents to all coalitions (possibly including the empty coalition if $\lb = 1$).
Hence, we will omit this from all hardness proofs.

In the first part of the section, we will consider settings when $\lb = 1$, i.e., partitions only have to satisfy upper bounds on their size. 
We will obtain a complete picture for the computational complexity for all possible parameters $\ub$.
In the second part of the section, we provide results for a nontrivial lower bound of $\lb \ge 2$.

\subsection{Upper-Bounded Coalition Sizes}\label{sec:upperboundedcomplexity}

We start with the consideration of settings where $\lb = 1$.
Note that, under this restriction, permissible and feasible deviations are the same because abandoning a coalition cannot result in an infeasible coalition.
Hence, all results for this section hold for both standard and their feasible stability concepts.
For simplicity, we state them all for standard stability concepts.
In addition, a $(1,\ub)$-partition of any number of agents always exists as, for example, the singleton partition is such a partition.
Hence, we are not concerned with the nonexistence of feasible partitions in this section.

We start with the consideration of CIS.
There, we obtain a polynomial-time algorithm for any fixed upper bound.
We remark that \citet{ABS11c} present an algorithm to compute CIS partitions in ASHG without size bounds.
Their algorithm works by iteratively choosing a leader among agents not assigned to coalitions yet.
The idea then is to let the leader either form a new coalition with all their friends or, if this yields a higher utility, joining an existing coalition.
When joining an existing coalition, further agents, so called latecomers, are added if this is beneficial for members in this coalition while not harming any agent there.
Unfortunately, their algorithm is not correct.
The high-level reason is that, when joining a coalition, an agent might prefer joining a different coalition because of the utility gain through latecomers. 
In \Cref{app:CISflawed}, we present a concrete example where the algorithm does not return a CIS partition.
Our algorithm fixes their construction by changing the condition of joining coalitions to immediately take into consideration the gains of other agents that can join as well.
We additionally modify it to work for an arbitrary upper bound on coalition sizes.
However, note that the algorithm can also be used to compute CIS partitions for ASHGs without size constraints by running it for $\ub = n$.

\begin{theorem}\label{thm:1ubCIS}
    Let $\ub \ge 2$.
    Then there exists a polynomial-time algorithm that computes CIS $(1,\ub)$-partitions in ASHGs.
\end{theorem}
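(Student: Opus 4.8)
The plan is to design an iterative \emph{leader and latecomer} algorithm in the spirit of \citet{ABS11c}, correcting the flaw discussed in \Cref{app:CISflawed} and adapting it to the upper bound $\ub$. We maintain a set $R$ of agents not yet assigned together with a collection of already built coalitions. In each iteration we select a leader $\ell\in R$ and determine the coalition $\ell$ will inhabit: we compare the \emph{fresh} option of founding $\{\ell\}\cup\topset[\ell]{\ub-1}{R}$ (that is, $\ell$ with its at most $\ub-1$ most valued friends among the remaining agents) against the \emph{join} options of entering an existing coalition $C$ with $|C|<\ub$ that welcomes $\ell$, i.e.\ with $\vf_\genB(\ell)\ge 0$ for every $\genB\in C$. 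Crucially, and this is exactly where the construction of \citet{ABS11c} fails, when evaluating a join option we do not compare $\uf_\ell(C\cup\{\ell\})$ but the utility $\ell$ obtains after all beneficial \emph{latecomers} have been admitted, where a latecomer is an agent of $R$ whose addition strictly raises some member's utility while harming no member. We then realize $\ell$'s best option, remove the newly placed agents from $R$, and repeat until $R=\emptyset$. Since each iteration removes at least the leader, and computing $\topset[\ell]{\ub-1}{R}$ and simulating the greedy latecomer process are polynomial, the algorithm runs in polynomial time; all coalitions are kept to size at most $\ub$, and feasibility is automatic because $\lb=1$ makes permissible and feasible deviations coincide and every collection of coalitions of size at most $\ub$ is a valid $(1,\ub)$-partition.

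For correctness I would classify every agent in the output as \emph{chosen} (processed as a leader, possibly joining an existing coalition) or \emph{pulled} (entering as a recruited friend or as a latecomer), and argue that neither type admits a contractual individual deviation. For a pulled agent $\genB$, some co-member $\genA$ values $\genB$ strictly positively by construction: a founder values the friends it recruits, and a latecomer is admitted precisely because it raises some member's utility. Since the construction only ever adds agents and never removes them, this benefactor $\genA$ remains a co-member of $\genB$ in the final partition and would strictly lose utility were $\genB$ to leave; hence the contractual condition fails and $\genB$ is locked in, whatever its target. It therefore remains to show that every chosen agent has no strictly improving target at all, so that no deviation of it is even a Nash deviation.

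The heart of the argument, and the step I expect to be the main obstacle, is establishing this optimality for a chosen agent $\ell$ in the \emph{final} partition rather than merely at the moment it was processed. Two monotonicity observations should drive it. First, since both the welcoming requirement on joins and the no-harm requirement on latecomers ensure that adding an agent never decreases an incumbent's utility, the utility of every chosen agent is nondecreasing through the run and never drops below the value it selected. Second, every member of a coalition created after $\ell$ was processed was drawn from a later, hence smaller, remaining set and so lay in $R$ when $\ell$ was processed; any such coalition of size at most $\ub-1$ therefore has value to $\ell$ at most that of the fresh benchmark $\uf_\ell(\{\ell\}\cup\topset[\ell]{\ub-1}{R})$, which $\ell$'s chosen option already dominates, ruling out an improving move there. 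The delicate case is a coalition $C$ that existed before $\ell$ was processed and subsequently grew: if $\ell$ was \emph{not} welcomed into $C$ at that time then $C$ retains a member valuing $\ell$ negatively, and as members are never removed $\ell$ can never be welcomed into $C$ later, so the welcoming requirement fails permanently; if $\ell$ \emph{was} welcomed into $C$, the latecomer-aware evaluation, precisely the correction over \citet{ABS11c}, is meant to guarantee that $\ell$ already compared its chosen option against $C$ at $C$'s saturated value, so that the later growth of $C$ cannot make it strictly preferable. Pinning down this last implication, namely that the corrected join rule genuinely forecloses every regretted move to an earlier coalition that grows afterwards, is the crux on which the whole CIS guarantee rests and the point I would expect to demand the most care.
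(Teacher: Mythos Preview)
Your plan mirrors the paper's leader/latecomer scheme and correctly locates the flaw in \citet{ABS11c} at the join evaluation; the locking argument for pulled agents is sound. The gap is precisely at the step you flag as the crux. Your corrected join rule---simulate a greedy process that admits any remaining agent benefiting \emph{some} member while harming none---does not deliver the inequality you need. Because of the cap~$\ub$, the simulation with $\ell$ present can fill the remaining slots with agents $\ell$ values at~$0$ (admitted because they help someone else), or an early simulated latecomer can veto, via the no-harm clause, a later one that $\ell$ values highly; meanwhile the \emph{actual} growth of $C$ without $\ell$ may follow a different trajectory and end up containing a friend of $\ell$ that never appeared in your simulation. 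So there is no reason the simulated value upper-bounds $u_\ell(\{\ell\}\cup C_{\mathrm{final}})$, and your ``saturated value'' comparison does not close.

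The paper's rule is different and is exactly what makes the comparison go through: when $\ell$ considers joining $S_k$, it does not run a neutral latecomer process but brings helpers chosen to maximise \emph{$\ell$'s own} utility, namely $\topset[\ell]{\ub-|S_k|-1}{P_k}$ with $P_k=\text{Fr}(\ell,A)\cap(A\setminus\text{En}(\widetilde S_k,A))$, where $\widetilde S_k$ is the set of \emph{decision makers} of $S_k$ (previous leaders/latecomers), not all members. The invariant is that every agent subsequently added to $S_k$ must avoid being an enemy of $\widetilde S_k$, so $C_{\mathrm{final}}\setminus S_k\subseteq A\setminus\text{En}(\widetilde S_k,A)$; the friends of $\ell$ among these therefore lie in $P_k$ and number at most $\ub-|S_k|-1$ (since $C_{\mathrm{final}}$ must leave room for $\ell$), whence $\ell$'s top set already dominates them and $u_\ell(D)\ge u_\ell(\{\ell\}\cup C_{\mathrm{final}})$. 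Replacing your neutral greedy simulation by this $\ell$-optimal helper selection, and vetoing only against decision makers rather than all members (so that the constraint set is monotone as $\widetilde S_k$ grows), is the missing idea.
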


\begin{algorithm}[tb]
	\caption{Contractually individually stable $(1,\ub)$-partitions in ASHGs.}\label{algorithm:CISub}
    \begin{flushleft}
	\textbf{Input:} ASHG $(N,\vf)$, parameter $\ub$\\
	\textbf{Output:} $(1,\ub)$-partition
    \end{flushleft}
	\begin{algorithmic}[1]
	\State $i\leftarrow 0$, $A\leftarrow N$ \Comment{Initialize coalition counter and still available agents}
			  \While{$A\neq \emptyset$}\label{while-step}
	\State Select  any agent $\genA\in A$ 
    	\State $h \leftarrow \sum_{\genB\in \topset{\ub-1}{\text{Fr}(\genA,A)}}\vf_{\genA}(\genB)$ \Comment{Utility $\genA$ gets from creating their best new coalition}
	\State $z\leftarrow i+1$
        \Comment{Indicating coalition chosen by $a$. 
        }
	\For{$k \in [i]$} \Comment{Skipped for $i = 0$ as $[0] = \emptyset$}
    \State $P_k \leftarrow \text{Fr}(\genA,A) \cap \left(A \setminus \text{En}(\widetilde{S}_k,A) \right)$ 
    \State\Comment{Approved helpers when joining $S_z$}
	\State $h'\leftarrow \sum_{\genB\in S_k}\vf_{\genA}(\genB) + \sum_{\genB\in \topset{\ub-|S_k|-1}{P_k}}\vf_{\genA}(\genB) $ \State\Comment{Utility when joining $S_z$ with best helpers}
	\If{ $h < h'$ 
    }
	\State $h\leftarrow h'$, $z \leftarrow k$
	\EndIf
	\EndFor

	\If{$z = i+1$} \Comment{$\genA$ is leader}

	\State $S_z\leftarrow \{\genA\}\cup \topset{\ub-1}{\text{Fr}(\genA,A)}$ \Comment{Create best new coalition}
        \State $\widetilde{S}_z\leftarrow \{\genA\}$ \Comment{Add leader as coalition's decision maker}
	\State $A\leftarrow A \setminus S_z$
	\State $i\leftarrow i + 1$
    \Else \Comment{$\genA$ is latecomer.}
	\State $S_{z}\leftarrow S_{z}\cup \{\genA\}\cup \topset{\ub-|S_z|-1}{P_z}$ \Comment{Add latecomer and their helpers to the coalition}
        \State $\widetilde{S}_{z}\leftarrow \widetilde{S}_{z}\cup\{\genA\}$ \Comment{Add latecomer to coalition's decision makers}
	\State $A\leftarrow A \setminus (\{\genA\}\cup \topset{\ub-|S_z|-1}{P_z})$

	 \EndIf
		\EndWhile\\
        
	  \Return $\costr = \{S_1,\ldots, S_i\}$
	 \end{algorithmic}

\end{algorithm}

\begin{proof}
    Consider an ASHG $(N,\vf)$.
    We apply \Cref{algorithm:CISub}.
    Note that the algorithm uses notation for agent sets as defined in \Cref{sec:agentsets}.
    The algorithm initializes a counter $i$ capturing the number of already created coalitions.
    Moreover, it keeps track of a set $A$ of still available agents. 
    The idea is to iteratively create coalitions $S_i$ based on the preferences of certain \emph{decision makers} which are stored in a set $\widetilde{S}_{i}$.
    Decision makers add as many friends as the capacity of a coalition allows to that coalition.
    Intuitively, the resulting partition is a CIS partition because decision makers always obtain a best coalition among the available partitions whereas agents that are not decision makers are denied to leave their coalition by a decision maker whose friend they are.
    
    We start by selecting an available agent from $\genA \in A$, and call them the \emph{leader} of the first coalition $S_1$. 
    We try to form a coalition that is as good as possible for the leader.
    For this, we consider the up to $\ub - 1$ best agents among the friends of $a$ among the available agents, i.e., $\topset{\ub - 1}{\text{Fr}(\genA,A)}$.
    As a first coalition, we form $S_1 = \{\genA\}\cup \topset{\ub - 1}{\text{Fr}(\genA,A)}$ and indicate that $\genA$ is a decision maker of this coalition by setting $\widetilde{S}_{z} = \{\genA\}$.
    We call the agents enforced to join the coalition by a decision maker the \textit{helpers}. 
    
    In subsequent iterations, we start by selecting a new leader $\genA\in A$. 
    However, apart from creating a new coalition, $\genA$ may also join an existing coalition, while bringing along friends, provided that they do not violate the upper bound and that both they and their friends are not enemies of a decision maker in the joined coalition. 
    Under these conditions, $\genA$ either creates a new coalition or joins an existing coalition to maximize their utility.
    If $\genA$ joins an existing coalition, we refer to $\genA$ as a \emph{latecomer} and add them as a decision maker. 
    Additionally, we refer to the friends they bring along as \emph{helpers}, analogous to the helpers of the first leader.
    Finally, we update the set of available agents by removing $\genA$ and $\genA$'s helpers.
    We continue selecting new agents to become leaders or latecomers until the set $A$ of remaining agents is empty.

    Clearly, the algorithm runs in polynomial time.
    Each agent is processed at most once as a leader and we have to compute at most $n = |N|$ possible options for them.
    Finding the best $k$ agents among a set and performing the updates of the coalition structure are all straightforward.

    We now show that the algorithm produces a CIS $(1,\ub)$-partition $\costr$.
    Clearly, no created coalition contains more than $\ub$ agents, so we create a $(1,\ub)$-partition.
    It remains to show that no agent can perform a $(1,\ub)$-permissible CIS deviation.
    For this, we consider potential deviating agents case by case, and show that no agents can make such a deviation.

    First, for every helper $\genB$, there exists a decision maker $\genA$ with $\genB\in \text{Fr}(\genA,N)$, i.e., $\vf_{\genA}(\genB) > 0$.
    Hence, $\genB$ is denied by $\genA$ to perform a CIS deviation.

    Now, consider a decision maker $\genA$.
    Consider the iteration of the while loop during which $\genA$ was the decision maker and the set of available agents was $A$.
    We refer to the partial partition of the subset of agents that were assigned to a coalition after this iteration as $\costr'$.
    Note that, after $\genA$ formed a new coalition or joined a coalition, no agent could have joined this coalition that was an enemy of $\genA$. 
    Hence, we have that 
    \begin{equation}\label{eq:compwithres}
        \uf_{\genA}(\costr)\ge \uf_{\genA}(\costr')\text.
    \end{equation}
            In particular, this implies that $a$ is individually rational, and thus does not wish to deviate to a singleton coalition. 
    
    It remains to show that $a$ cannot perform a CIS deviation into an existing coalition.
    Let $C\in \costr\setminus \{\costr(\genA)\}$ be some coalition, and define $C' = \{\genB\in C\colon \exists D\in \costr' \text{ with }\genB \in D\}$.
    Thus, $C'$ is the subset of agents of $C$ that were in a coalition after the iteration with leader $\genA$.
    Note that $C'$ is either a nonempty coalition of $\costr'$ or it is the empty set, indicating that $C$ was only created after this iteration.
    
    We consider two cases. 
    First, assume that $C' = \emptyset$.
    Since~$i$ had the opportunity to create     $\{\genA\}\cup \topset{\ub - 1}{\text{Fr}(\genA,A)}$, they have a higher utility than for this coalition. 
    Hence,
    $\uf_{\genA}(\costr')\ge \uf_{\genA}(\{\genA\}\cup \topset{\ub - 1}{\text{Fr}(\genA,A)}) \ge \uf_{\genA}(\{\genA\}\cup C\})$.

    Second, assume $C' \neq \emptyset$. Let $\widetilde S\subseteq C'$ be the set of decision makers in $C'$. 
    If $\genA \in \text{En}(\widetilde{S},N)$, then $\genA$ is denied to enter $C'$ and therefore $C$. Thus $\genA$ cannot make a CIS deviation to $C$.
    Hence, assume that $\genA \notin \text{En}(\widetilde{S},N)$
    
    Note that $C \subseteq C'\cup(A\setminus \text{En}(\widetilde{S},A))$, as by design $C$ cannot contain enemies of its own decision makers.
    Additionally, $\genA$ had the opportunity to form the coalition $D = C' \cup \{\genA\}\cup \topset{\ub - |C'|-1}{P}$ where $P = \text{Fr}(\genA,A) \cap \left(A \setminus \text{En}(\widetilde{S},A) \right)$.
            Hence, $\uf_{\genA}(\costr')\ge \uf_{\genA}(D) \ge \uf_{\genA}(\{\genA\}\cup C\})$.

    Thus, whenever $\genA$ is not denied entry to $C$ by a decision maker of $C'$, we have obtained $\uf_{\genA}(\costr')\ge \uf_{\genA}(\{\genA\}\cup C\})$.
    Combining this with \Cref{eq:compwithres}, we obtain that $\uf_{\genA}(\costr)\ge \uf_{\genA}(\{\genA\}\cup C\})$.
    Hence, $\genA$ cannot perform a     CIS deviation.
\end{proof} 

Our next two results concern the computation of CNS partitions.
We first obtain an efficient algorithm when $\ub = 2$.

\begin{restatable}{proposition}{onetwoCNS}\label{prop:12CNS}
    There exists a polynomial-time algorithm that computes CNS $(1,2)$-partitions in ASHGs.
\end{restatable}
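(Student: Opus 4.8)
The plan is to exploit the very limited structure of $(1,\ub)$-partitions when $\ub = 2$: such a partition is nothing but a set of disjoint pairs (a matching) together with a set $U$ of singletons. First I would enumerate the permissible contractual deviations available in this setting. A singleton $\genA$ can only join another singleton $\genB$ with $\vf_\genA(\genB) > 0$; since the abandoned coalition is empty, this is always contractual. A paired agent $\genA$ with partner $\genB$ can deviate contractually only if its partner does not object, i.e.\ $\vf_\genB(\genA) \le 0$, and then it may either leave to be alone (beneficial iff $\vf_\genA(\genB) < 0$) or join some singleton $\genB'$ (beneficial iff $\vf_\genA(\genB') > \vf_\genA(\genB)$). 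The crucial consequence I would highlight is a \emph{locking} phenomenon: if $\vf_\genB(\genA) > 0$, then $\genB$ would be strictly harmed by $\genA$'s departure and hence objects, so $\genA$ has no permissible contractual deviation whatsoever.

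Building on this, I would construct a CNS $(1,2)$-partition greedily, which will also establish that one always exists. While some available agent has a positive valuation for another available agent, pick such an agent $\genA$ and pair it with its favourite available agent $\genB$ (so $\vf_\genA(\genB) > 0$), then remove both from the pool; once no positive valuations remain among available agents, declare all remaining agents singletons. Each iteration fixes two agents, so the procedure runs in polynomial time.

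The correctness argument would then proceed by showing that no agent in the output can deviate. For the singletons this is immediate: the loop terminates precisely when no available agent positively values another, so $\vf_\genA(\genB) \le 0$ for all $\genA,\genB \in U$, ruling out the only possible deviation (joining another singleton). For a pair $\{\genA,\genB\}$ formed when $\genA$ proposed to its favourite $\genB$, the liked agent $\genB$ is locked because $\vf_\genA(\genB) > 0$, so $\genB$ cannot deviate. The proposer $\genA$ is either itself locked (if $\vf_\genB(\genA) > 0$) or, being at risk, content: since $\genB$ was $\genA$'s favourite among all agents available at the moment of pairing, and since every eventual singleton was necessarily available at that moment (singletons are never removed from the pool), we have $\vf_\genA(\genB) \ge \vf_\genA(\genB')$ for every singleton $\genB'$ and $\vf_\genA(\genB) > 0$; hence neither leaving to be alone nor joining a singleton is beneficial.

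The main obstacle I anticipate is exactly the coupling, through the third deviation type, between the choice of pairs and the resulting singleton set $U$: an at-risk paired agent must not prefer any singleton to its partner, yet $U$ is only determined at the end of the construction. The greedy choice resolves this cleanly via a monotonicity observation --- because eventual singletons remain available throughout, pairing each proposer with its \emph{currently} favourite available agent already guarantees that no later singleton can ever beat its partner. I would therefore state and use this ``availability is monotone'' fact carefully, as it is the linchpin that decouples pair-stability from the global singleton set.
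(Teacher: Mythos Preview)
Your proposal is correct and follows essentially the same approach as the paper: a greedy algorithm that repeatedly picks an available agent with a positive valuation for another available agent and pairs them with their favourite, leaving the rest as singletons; correctness relies on exactly the locking phenomenon you identify (the chosen partner is vetoed by the proposer) together with the monotonicity observation that eventual singletons were available at every pairing step, so proposers never prefer any singleton to their partner. The paper's algorithm differs only cosmetically in that it iterates over agents in a fixed order rather than picking any agent with a remaining friend, but the invariants and the case analysis are the same.
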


\begin{algorithm}[tb]
	\caption{Contractually Nash-stable $(1,2)$-partitions in ASHGs.}\label{algorithm:CNSmatching}
    \begin{flushleft}
    	\textbf{Input:} ASHG $(N,\vf)$ with $N = \{\genA_1,\dots, \genA_n\}$\\
	   \textbf{Output:} $(1,2)$-partition
    \end{flushleft}
	\begin{algorithmic}[1]
		\State{$\costr \longleftarrow \emptyset$, $A \longleftarrow N$}
        \Comment{Empty partition and available agents}
        		        		
		\For{$k \in [n]$}
        \If {$\genA_k\in A$ and $\{\genA\in A\colon \vf_{\genA_k}(\genA) > 0\}\neq \emptyset$}
        \State{Choose $i^*\in \arg\max_{\genA\in A}\{\vf_{\genA_k}(\genA)\}$} 
                                \State{$\costr \longleftarrow \costr\cup\{\{\genA_k,i^*\}\}$, $A\longleftarrow A\setminus \{\genA_k,i^*\}$}
		\EndIf
		\EndFor
		\State{$\costr \longleftarrow \costr \cup \{\{\genA\}\colon \genA\in A\}$}
		\State{\Return{$\costr$}} 
	\end{algorithmic}
\end{algorithm}

\begin{proof}
    Consider an ASHG $(N,\vf)$ and with agent set $N = \{\genA_1,\dots, \genA_n\}$.
    We apply \Cref{algorithm:CNSmatching}. 
    The algorithm initializes an empty coalition structure and keeps track of a set $A$ of available agents that initially contains all agents.
    Then, agents are considered in a fixed order to decide whether to put them in a coalition of size~$2$ or whether they should stay alone.
    At their turn, if an agent is still available and can form a coalition with another available agent achieving a positive utility, then they are assigned to the best such coalition, and both agents are removed from the set of available agents.

    Clearly, the algorithm runs in polynomial time.
    Moreover, we claim that the produced partition is contractually Nash-stable.
    First, consider an agent $\genA\in N$ that is assigned to a singleton coalition.
    Then, as $\ub = 2$, the only permissible deviations are to join other singleton coalitions.
    However, such a coalition cannot yield a positive utility for the agent as all agents in singleton coalitions were available to form a coalition when it was $\genA$'s turn in the execution of the for loop.

    Second, consider an agent $\genA$ that is assigned to a coalition of size~$2$.
    If this coalition was formed during $\genA$'s turn, then the formed coalition yields the highest utility among coalitions of size~$2$ with available agents.
    Hence, deviating to a singleton coalition created at the end of the algorithm cannot improve $\genA$'s utility.
    Finally, assume that $\genA$ was assigned to a coalition during the turn of another agent $\genB$.
    Then, $\vf_{\genB}(\genA) > 0$ and $\genA$ is vetoed from performing a CNS deviation.
    We conclude that the produced partition is contractually Nash-stable.
\end{proof}

Unfortunately, as we show next, \Cref{prop:12CNS} does not extend to larger upper bounds.
Our proof is inspired by the existing hardness result for the existence of CNS partitions in ASHGs with unconstrained coalition sizes \cite[Theorem~3]{BBT23a}.
However, we apply some crucial modifications to avoid the formation of coalitions of size larger than~$3$.
We focus on presenting the general idea of the proof and defer a full proof to \Cref{app:upperbound}.

\begin{restatable}{theorem}{CNSugeThree}\label{thm:CNS-Uge3}
    Let $\ub\ge 3$.
    Then it is \NP-complete to decide whether a CNS $(1,\ub)$-partition exists in ASHGs.
\end{restatable}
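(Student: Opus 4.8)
The plan is to prove \NP-hardness only, since membership in \NP\ is already granted by the discussion opening this section. I would reduce from a suitable \NP-complete problem whose three-element structure matches the coalitions of size (at most)~$3$ that the gadgets are meant to produce; \etc (\etcs) is the natural candidate, and this also lets me stay close to the unconstrained reduction of \cite[Theorem~3]{BBT23a}. The central verification tool is a clean reformulation of CNS in terms of single deviations: a $(1,\ub)$-partition $\costr$ is CNS if and only if every agent $\genA$ either has an \emph{admirer} in its own coalition (some $\genB\in\costr(\genA)\setminus\{\genA\}$ with $\vf_\genB(\genA)>0$, who would veto $\genA$'s departure) or already sits in a best permissible coalition (so $\uf_\genA(\costr(\genA))\ge 0$ and $\uf_\genA(\costr(\genA))\ge \uf_\genA(C\cup\{\genA\})$ for every $C$ with $|C|<\ub$). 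This characterization reduces every stability check to ``admirer present, or locally optimal,'' which is exactly what the gadgets must arrange.

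Concretely, I would introduce an \emph{element agent} for each ground-set element and a small \emph{set gadget} for each $3$-set, wiring valuations so that an element agent is safe (in the above sense) precisely when it is placed with a gadget representing a set that covers it, and so that the gadgets can simultaneously satisfy all their element agents only when the chosen sets form an exact cover. The forward direction would then be routine: from an exact cover I build the partition that assigns each element to the coalition of its covering set, leaves the remaining gadget agents in their unique satisfied configuration, and verifies agent by agent — using $\lb=1$ so that retreating to a singleton is always a permissible alternative — that no one has a contractually admissible improving move. For the backward direction I would show that in any CNS partition the ``safe or admired'' condition forces each element agent into a covering set's coalition and forces the set gadgets into their selected/unselected states, from which a genuine exact cover can be read off.

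The main obstacle — and the precise point where the construction must depart from \cite[Theorem~3]{BBT23a} — is controlling coalition sizes once $\ub$ is allowed to be large: a permissive upper bound would let several gadgets merge into a single coalition of size up to $\ub$, either manufacturing unintended stable outcomes or breaking intended ones. The fix I would pursue is to make every membership beyond an agent's few intended partners costly, by adding negative valuations so that any further entrant either gains nothing (no Nash improvement) or would harm a resident and hence be vetoed as that resident's friend; engineering these signs so that they simultaneously forbid coalitions of size exceeding three, keep the intended size-$3$ coalitions strictly desirable, and preserve the covering constraint — all uniformly in $\ub$, so that one construction settles every $\ub\ge 3$ — is the delicate part. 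A secondary subtlety worth flagging is that, unlike the nonexistence construction of \Cref{prop:nonexistCNSasy}, which relies on $\lb\ge 2$ to force agents together, here with $\lb=1$ any single gadget can dissolve into singletons and become locally stable; the instability must therefore be produced \emph{globally}, through competition for scarce desirable placements, and encoding the exact-cover requirement faithfully through purely pairwise (additive) valuations is what ultimately makes the reduction work.
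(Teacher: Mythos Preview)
Your high-level plan matches the paper: reduce from \etcs, use the ``admirer-or-locally-optimal'' reformulation of CNS, and control coalition sizes via negative cross-valuations (the paper sets all unspecified valuations to~$-3$, so any unintended merger immediately gives someone negative utility and a CNS deviation to a singleton). Where you go astray is in the last paragraph. You assert that with $\lb=1$ ``any single gadget can dissolve into singletons and become locally stable,'' and conclude that instability must be engineered \emph{globally}. This is the wrong lesson, and it steers you away from the key ingredient the paper uses.

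The paper's engine is precisely a \emph{locally} CNS-unstable gadget: for each element $r\in R$ it introduces four agents $\{\zeta_r,\alpha_r,\beta_r,\gamma_r\}$ forming the Sung--Dimitrov game (each of $\alpha_r,\beta_r,\gamma_r$ values $\zeta_r$ at~$1$, they form a directed $0$-cycle among themselves, and all remaining valuations are large negative). This four-agent game admits \emph{no} CNS partition even with $\lb=1$; in particular, the all-singletons partition is not CNS because, say, $\alpha_r$ can join $\{\zeta_r\}$. The only way to stabilize the gadget is to pair $\zeta_r$ with an external agent $a_r^S$ (for some $S\ni r$) so that any $\sigma_r\in\{\alpha_r,\beta_r,\gamma_r\}$ joining $\costr(\zeta_r)$ would incur the $-3$ penalty from $a_r^S$. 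A second copy of the same gadget attached to each auxiliary agent $\bar a_r^S$ then forces, via the same mechanism, that the three agents $a_r^S$ for $r\in S$ either all go to their $\zeta_r$'s or all stay together---which is exactly the exact-cover constraint.

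So your plan of a single element agent per $r\in R$ plus ``competition for scarce placements'' is underspecified at the crucial point: you have not said what would make an uncovered element agent unstable, and your own diagnosis (that singletons are always locally fine) suggests you do not yet see how to do it. The fix is not global scarcity but the four-agent gadget above; once you have it, the rest of your outline (forward direction by exhibiting the cover-induced partition, backward direction by reading off the cover from the stabilizing pairs $\{\zeta_r,a_r^S\}$) goes through essentially as you describe.
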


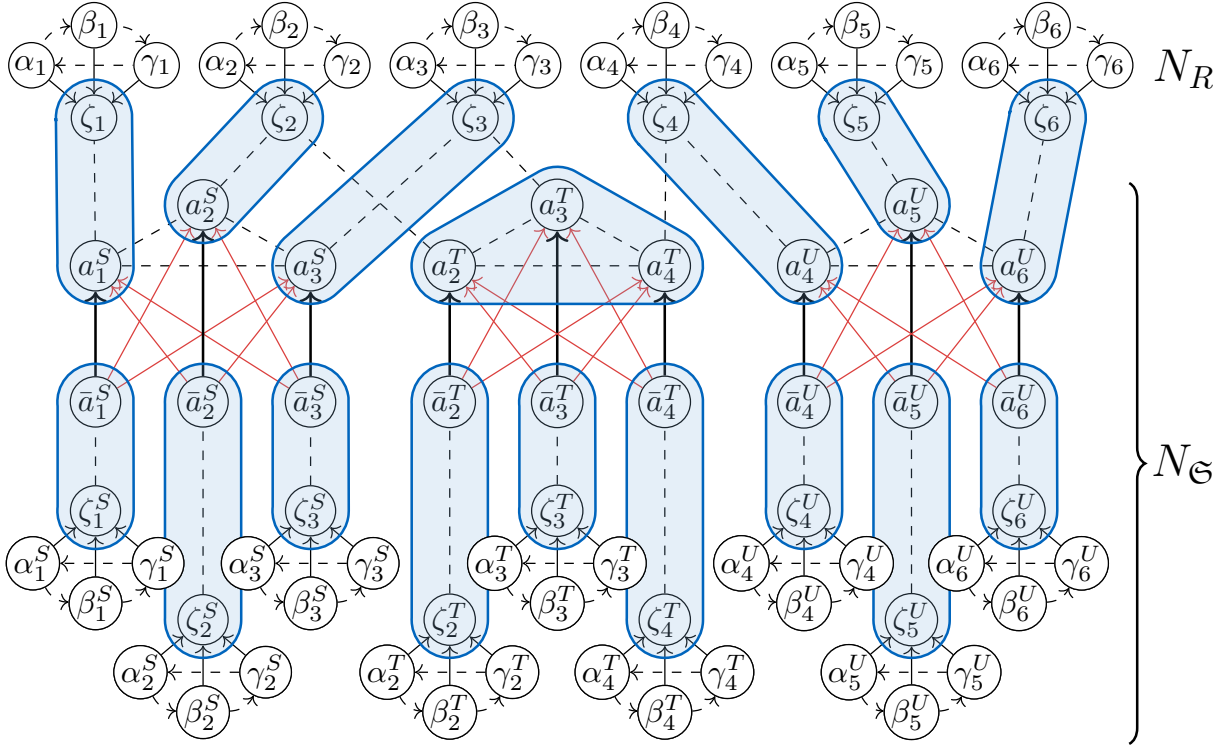
\begin{figure*}[tb]
	\centering
	\resizebox{1\textwidth}{!}{
	\begin{tikzpicture}[scale = .6, element/.style={shape=circle,draw, fill=white, inner sep=0.1, minimum size=.5cm}]
    \pgfmathsetmacro\Sstretch{2.1} 	\foreach[count = \k] \x in {1,...,6}{
		\node[element] (c\x) at (3.5*\x,-1.8) {${\zeta}_{\k}$};
		\node[element] (t\x) at ($(c\x) + (90:1.7)$) {${\grk{2}}_{\k}$};
		\node[element] (l\x) at ($(c\x) + (140:1.5)$) {${\grk{1}}_{\k}$};
		\node[element] (r\x) at ($(c\x) + (40:1.5)$) {${\grk{3}}_{\k}$};
		\draw[->] (t\x) edge (c\x);
		\draw[->] (l\x) edge (c\x);
		\draw[->] (r\x) edge (c\x);
		\draw[<-, bend right, dashed] (t\x) edge (l\x);
		\draw[<-, dashed] (l\x) edge (r\x);
		\draw[<-, bend right, dashed] (r\x) edge (t\x);
	}

	\node (s1) at (5.5,-3.8) {};
	\node (s2) at (12,-3.8) {};
	\node (s3) at (18.5,-3.8) {};
		\node[element] (s13) at ($(s1) + (340:\Sstretch)$) {$a^S_3$};
		\node[element] (s11) at ($(s1) + (200:\Sstretch)$) {$a^S_1$};
		\node[element] (s12) at ($(s1) + (90:0.4)$) {$a^S_2$};
		\node[element] (sb13) at ($(s13) + (0,-2.5)$) {$\bar a^S_3$};
		\node[element] (sb11) at ($(s11) + (0,-2.5)$) {$\bar a^S_1$};
		\node[element] (sb12) at ($(s13)!.5!(s11) + (0,-2.5)$) {$\bar a^S_2$};
		\node[element] (s23) at ($(s2) + (340:\Sstretch)$) {$a^T_4$};
		\node[element] (s21) at ($(s2) + (200:\Sstretch)$) {$a^T_2$};
		\node[element] (s22) at ($(s2) + (90:0.4)$) {$a^T_3$};
		\node[element] (sb23) at ($(s23) + (0,-2.5)$) {$\bar a^T_4$};
		\node[element] (sb21) at ($(s21) + (0,-2.5)$) {$\bar a^T_2$};
		\node[element] (sb22) at ($(s21)!.5!(s23) + (0,-2.5)$) {$\bar a^T_3$};
		\node[element] (s33) at ($(s3) + (340:\Sstretch)$) {$a^U_6$};
		\node[element] (s31) at ($(s3) + (200:\Sstretch)$) {$a^U_4$};
		\node[element] (s32) at ($(s3) + (90:0.4)$) {$a^U_5$};
		\node[element] (sb33) at ($(s33) + (0,-2.5)$) {$\bar a^U_6$};
		\node[element] (sb31) at ($(s31) + (0,-2.5)$) {$\bar a^U_4$};
		\node[element] (sb32) at ($(s31)!.5!(s33) + (0,-2.5)$) {$\bar a^U_5$};
	\foreach \s in {1,2,3}{
		\draw[dashed] (s\s1) edge (s\s2);
		\draw[dashed] (s\s2) edge (s\s3);
		\draw[dashed] (s\s3) edge (s\s1);
		\foreach \t in {1,2,3}{
        \draw[->,thick] (sb\t\s) edge (s\t\s);
   			            			}
    \draw[->,myred] (sb\s1) edge (s\s3);
    \draw[->,myred] (sb\s1) edge (s\s2);
    \draw[->,myred] (sb\s2) edge (s\s3);
    \draw[->,myred] (sb\s2) edge (s\s1);
    \draw[->,myred] (sb\s3) edge (s\s1);
    \draw[->,myred] (sb\s3) edge (s\s2);
    }

	\draw[dashed] (s11) edge (c1);
	\draw[dashed] (s12) edge (c2);
	\draw[dashed] (s13) edge (c3);
	\draw[dashed] (s21) edge (c2);
	\draw[dashed] (s22) edge (c3);
	\draw[dashed] (s23) edge (c4);
	\draw[dashed] (s31) edge (c4);
	\draw[dashed] (s32) edge (c5);
	\draw[dashed] (s33) edge (c6);

    \node[element] (S01) at ($(sb11) + (0,-2)$) {$\zeta_1^S$};
    \node[element] (S02) at ($(sb12) + (0,-4)$) {$\zeta_2^S$};
    \node[element] (S03) at ($(sb13) + (0,-2)$) {$\zeta_3^S$};
	
    \node[element] (T01) at ($(sb21) + (0,-4)$) {$\zeta_2^T$};
    \node[element] (T02) at ($(sb22) + (0,-2)$) {$\zeta_3^T$};
    \node[element] (T03) at ($(sb23) + (0,-4)$) {$\zeta_4^T$};
	
    \node[element] (U01) at ($(sb31) + (0,-2)$) {$\zeta_4^U$};
    \node[element] (U02) at ($(sb32) + (0,-4)$) {$\zeta_5^U$};
    \node[element] (U03) at ($(sb33) + (0,-2)$) {$\zeta_6^U$};

		    \foreach \i in {1,2,3}{
    \foreach \j/\k/\la/\lb/\lc in {S/1/1/2/3,T/2/2/3/4,U/3/4/5/6}{
	\node[element] (t0\i) at ($(\j0\i) + (270:1.7)$) {${\grk{2}}_{\lb}^{\j}$};
	\node[element] (l0\i) at ($(\j0\i) + (320:1.5)$) {${\grk{3}}_{\lc}^{\j}$};
	\node[element] (r0\i) at ($(\j0\i) + (220:1.5)$) {${\grk{1}}_{\la}^{\j}$};
	\draw[->] (t0\i) edge (\j0\i);
	\draw[->] (l0\i) edge (\j0\i);
	\draw[->] (r0\i) edge (\j0\i);
	\draw[->, bend right, dashed] (t0\i) edge (l0\i);
	\draw[->, dashed] (l0\i) edge (r0\i);
	\draw[->, bend right, dashed] (r0\i) edge (t0\i);
    \draw[dashed] (\j0\i) edge (sb\k\i);
    \draw[thick,myblue, fill=myblue!50, fill opacity=0.2] \convexpath{\j0\i, sb\k\i}{0.7cm};
					    }     } 
    
    \foreach \i/\j in {1/1,2/2,3/3}{
	\node[element] (tS\i) at ($(S0\i) + (270:1.7)$) {${\grk{2}}_{\j}^{S}$};
	\node[element] (lS\i) at ($(S0\i) + (320:1.5)$) {${\grk{3}}_{\j}^{S}$};
	\node[element] (rS\i) at ($(S0\i) + (220:1.5)$) {${\grk{1}}_{\j}^{S}$};
    }     \foreach \i/\j in {1/2,2/3,3/4}{
	\node[element] (tT\i) at ($(T0\i) + (270:1.7)$) {${\grk{2}}_{\j}^{T}$};
	\node[element] (lT\i) at ($(T0\i) + (320:1.5)$) {${\grk{3}}_{\j}^{T}$};
	\node[element] (rT\i) at ($(T0\i) + (220:1.5)$) {${\grk{1}}_{\j}^{T}$};
    }     \foreach \i/\j in {1/4,2/5,3/6}{
	\node[element] (tU\i) at ($(U0\i) + (270:1.7)$) {${\grk{2}}_{\j}^{U}$};
	\node[element] (lU\i) at ($(U0\i) + (320:1.5)$) {${\grk{3}}_{\j}^{U}$};
	\node[element] (rU\i) at ($(U0\i) + (220:1.5)$) {${\grk{1}}_{\j}^{U}$};
    } 
            \draw[thick,myblue, fill=myblue!50, fill opacity=0.2] \convexpath{c1, s11}{0.7cm};
    \draw[thick,myblue, fill=myblue!50, fill opacity=0.2] \convexpath{c2, s12}{0.7cm};
    \draw[thick,myblue, fill=myblue!50, fill opacity=0.2] \convexpath{c3, s13}{0.7cm};
    \draw[thick,myblue, fill=myblue!50, fill opacity=0.2] \convexpath{c4, s31}{0.7cm};
    \draw[thick,myblue, fill=myblue!50, fill opacity=0.2] \convexpath{c5, s32}{0.7cm};
    \draw[thick,myblue, fill=myblue!50, fill opacity=0.2] \convexpath{c6, s33}{0.7cm};
        \draw[thick,myblue, fill=myblue!50, fill opacity=0.2] \convexpath{s21, s22, s23}{0.7cm};

        \draw [thick,decorate,decoration={brace,amplitude=5pt}]
		($(22.5,-3)$) -- ($(22.5,-13.3)$) node[midway, xshift = .6cm] (NS) {\Large $N_{\etcS{}}$};
		\node at ($(NS)+(0,7.25)$) {\Large $N_R$}; 
	\end{tikzpicture}
	} 
	\caption{Schematic of the reduction from the proof of \Cref{thm:CNS-Uge3}. We depict the reduced instance for the instance $\langle R,\etcS{}\rangle$ of \etcs{} where $R = \{1,2,3,4,5,6\}$, and $\etcS{} = \{S,T,U\}$, with $S = \{1,2,3\}$, $T = \{2,3,4\}$, and $U = \{4,5,6\}$.     Exact valuations are defined in the appendix.
		Thin edges represent a valuation of~$1$, thick edges represent a valuation of~$2$, and red edges represent a valuation of $-1$. 
        Dashed edges represent a valuation of~$0$. 
        If the edge is undirected, the same valuation applies in both directions.
        Omitted edges represent a negative utility of $-3$.
        In blue, we highlight the nonsingleton coalitions of the CNS partition corresponding to the cover $\{S,U\}$.}\label{fig:CNS-general}
\end{figure*}

\begin{proof}[Proof idea]
    We describe our proof along the visualization given in \Cref{fig:CNS-general}.
    The reduction is from the \NP-complete problem \etc{} (\etcs) \cite{GaJo79a}.
    An instance of \etcs{} is a pair $\langle R,\etcS{}\rangle$, where $R$ is a ground set of size $3\rho$ and $\etcS{}$ is a collection of $3$-element subsets of $R$; it is a Yes-instance if and only if there exists a subset $\etcS{}'\subseteq \etcS{}$ that partitions~$R$, i.e., covers $R$ and satisfies $|{\etcS{}}'|=\rho$.

    Given an instance $\langle R,\etcS\rangle$, we construct a reduced instance, where every element $r\in R$ is represented by a gadget with agent set $\{\zeta_r,\alpha_r,\beta_r,\gamma_r\}$ 
        corresponding to a known ASHG lacking a CNS partition \cite{SuDi07b}.
    This property is even true for any upper bound $\ub \ge 2$.
    Hence, some agent in this gadget has to be in a coalition with an agent outside the gadget.

    Now, every set $S\in \etcS$ is represented by~$6$ agents: for every element $r\in S$, there are agents $a_r^S$ and $\bar a_r^S$.
    Agents of the type $a_r^S$ can prevent the instability of the gadget associated with $r$, while agents $\bar a_r^S$ want to be with $a_r^S$.
    However, agents $\bar a_r^S$ each also have an associated gadget of the aforementioned $4$-player game without a CNS partition. Stability thus relies on $\bar a_r^S$ cooperating with this gadget, and so there should not be an incentive for them to join $a_r^S$.
    This is the case if and only if the three $a_r^S$ agents representing $S$ either form a coalition together (like for the set $T$ in \Cref{fig:CNS-general}) or are in coalitions with the gadgets representing their corresponding agents in $R$ (like for the sets $S$ and $U$ in \Cref{fig:CNS-general}).
    In this way, a CNS partition exists if and only if $R$ can be partitioned by sets in $\etcS$.
\end{proof}

It is known that it is \NP-complete to determine whether there exists a Nash-stable or individually stable matching, which are the same as $(1,2)$-partitions \cite{Aziz13a}.
A similar approach works for any other upper bound.
Interestingly, our reduction allows for a simultaneous treatment of Nash and individual stability. 
As \citet{Aziz13a}, we reduce from the \NP-complete problem \mmm{} \cite{HoKi93a}. The proof is presented in \Cref{app:upperbound}.

\begin{restatable}{theorem}{NSandIS}\label{thm:NSandIS1u}
    Let $\ub\ge 2$.
    Then it is \NP-complete to decide whether an NS (or IS) $(1,\ub)$-partition exists in ASHGs.
\end{restatable}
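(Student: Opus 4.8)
The plan is to reduce from \mmm{} (\mmms{}), following the general strategy of \citet{Aziz13a} for the case $\ub = 2$ but adapting it so that a single construction handles both NS and IS simultaneously and for every fixed $\ub \ge 2$. Recall that an instance of \mmms{} is a graph $H = (V,E)$ together with an integer $k$, and we ask whether $H$ admits a maximal matching of size at most $k$. The key structural fact to exploit is that a matching is \emph{maximal} precisely when every edge has at least one matched endpoint, i.e.\ the set of matched vertices forms a vertex cover; minimizing the matching size is what makes the problem hard. I would build an ASHG whose NS/IS partitions correspond to small maximal matchings of $H$, using a counting gadget to enforce the cardinality bound $k$.

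\textbf{Construction.} For each vertex $w \in V$ I introduce an agent, and for each edge $\{w,w'\} \in E$ I set up mutual positive valuations (say value $1$, with symmetric friendships along edges) so that two adjacent agents would like to pair up, encoding the choice of matched edges. Enemy valuations (large negative values, below any achievable positive gain) will be placed to prevent undesired mergers, in particular to force that the ``matched'' coalitions behave like edges rather than growing to larger cliques; here is where the upper bound $\ub$ must be respected, so I would introduce dummy/filler agents or additional negative valuations to ensure that no beneficial deviation ever wants to exceed size~$2$ among the edge-agents regardless of the actual value of $\ub$. To encode the cardinality constraint ``at most $k$ matched edges,'' I attach a gadget of counting agents that become unstable (i.e.\ admit a Nash deviation, and simultaneously an individual deviation) exactly when more than $k$ edges are selected; this is the component requiring the most care, since it must simultaneously threaten stability under NS and not be blockable under IS, so that the same instance works for both notions.

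\textbf{Correctness.} The two directions I would argue are: (i) from a maximal matching $M$ with $|M| \le k$, I place adjacent agents of each edge of $M$ together, put the remaining agents in singletons (legal since $\lb = 1$), and configure the counting gadget in its stable state; maximality of $M$ guarantees that no unmatched agent has an available friend to deviate to, and the cardinality bound keeps the gadget stable. (ii) Conversely, from any NS (or IS) partition I extract the set of ``paired'' edge-agents as a candidate matching $M$; Nash/individual stability of the edge-agents forces $M$ to be a matching, stability of the unmatched agents forces maximality (an unmatched agent adjacent to another unmatched agent would profitably deviate), and stability of the counting gadget forces $|M| \le k$.

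\textbf{Main obstacle.} I expect the principal difficulty to be the simultaneous treatment of NS and IS: an individual deviation additionally requires the consent of the welcoming coalition, so a gadget that merely admits a Nash deviation need not admit an individual one. The counting gadget and the maximality-enforcing valuations must therefore be engineered so that every deviation used in the ``unstable'' direction is always individually acceptable to the target coalition (e.g.\ by ensuring target agents strictly benefit, or are indifferent, when the deviator joins), while in the ``stable'' direction no Nash deviation exists at all. Making the value $\ub$ irrelevant---so that the same combinatorial correspondence holds for arbitrary fixed $\ub \ge 2$---will likely require padding the instance with agents whose negative valuations render any coalition of size exceeding $2$ (outside the gadget) strictly undesirable, which I would verify does not accidentally introduce new profitable deviations.
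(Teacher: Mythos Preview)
Your overall strategy---reduce from \mmms{}, encode edges as mutually-valued pairs, and enforce the cardinality bound via a gadget---is exactly the route the paper takes. The two essential ingredients you leave unspecified are (a) restricting to \emph{bipartite} \mmms{} with $|A|=|B|=n$, and (b) the concrete form of the ``counting gadget''. These are not cosmetic choices: they are what makes the simultaneous NS/IS argument go through cleanly.

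For (b), the paper does \emph{not} build a gadget that detects ``more than $k$ matched edges'' and then destabilises; that direction would be awkward because the gadget would need to observe a global quantity. Instead, the paper uses $n-k$ disjoint copies of the five-agent game of \citet{BoJa02a} that has no IS $(1,\ub)$-partition for any $\ub\ge 2$. Each copy has one distinguished agent $x_i^1$ who is symmetrically linked (value~$2$) to \emph{every} $a\in A$, while edge-pairs get value~$3$ and all remaining valuations are a large negative $-6n$. Stability then forces each $x_i^1$ to be paired with some $a\in A$ (otherwise the five-cycle is internally unstable via an IS deviation), which consumes $n-k$ vertices of $A$ and leaves at most $k$ for the matching. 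This ``consumption'' mechanism is what replaces your counting gadget, and it sidesteps the obstacle you flag: every destabilising deviation in the five-cycle is already an IS deviation by design of that classical example, so no extra work is needed to make NS and IS coincide.

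The large negative default also handles your worry about $\ub$: any coalition containing two agents with mutual value $-6n$ is not individually rational, so in any IS partition all coalitions have size at most~$2$ regardless of $\ub$. In short, your plan is correct in outline, but the crux you describe as ``requiring the most care'' is resolved in the paper by an off-the-shelf IS-free gadget used as a resource sink rather than a counter; filling that in is the missing step in your proposal.
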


\subsection{Coalitions with Nontrivial Lower Bound}\label{sec:nontrivialLB}

We now turn to the consideration of finding stable $\boundpair$-partitions, when $\lb\ge 2$.
First, we show a negative result for NS for any pair $(\lb,\ub)$ with $\ub\ge 4$.
We discuss a proof idea here and defer the full proof to \Cref{app:bothbounds}.

\begin{restatable}{theorem}{NSugefour}\label{thm:NSuge4}
    Let $\ub\ge 4$ and $\lb < \ub$.
    Then it is \NP-complete to decide whether an NS $(\lb,\ub)$-partition exists in an ASHG.
\end{restatable}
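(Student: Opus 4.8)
For $\lb = 1$ the statement is already contained in \Cref{thm:NSandIS1u}, so the plan is to treat the genuinely new regime $2 \le \lb < \ub$ and to establish \NP-hardness by a polynomial reduction from \etc{} (\etcs{}); membership in \NP{} was observed at the start of this section. Given an instance $\langle R,\etcS\rangle$ with $|R| = 3\rho$, I would reuse the high-level architecture behind \Cref{thm:CNS-Uge3}, but replace the CNS-nonexistence gadget by an \emph{NS-nonexistence gadget}: a constant-size ASHG that, for every pair of bounds with $\ub \ge 4$ and $\lb < \ub$, admits no Nash-stable $\boundpair$-partition \emph{unless} it is joined by a single external ``stabilizer'' agent. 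I would place one such gadget for each element $r \in R$ and encode membership of a set $S \in \etcS$ in the cover by whether the three stabilizer agents associated with $S$ are released to the element gadgets of $S$'s elements, or instead remain bundled together in one coalition.

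The first step is to design and verify the element gadget. Because for $\lb \ge 2$ an agent can no longer retreat to a singleton, the instability must be driven purely by beneficial moves into existing non-full coalitions; hence the gadget's positive arcs have to be arranged so that, in every internal $\boundpair$-configuration, some agent has a $\boundpair$-permissible Nash deviation, while the arrival of one stabilizing friend lets the agents settle into a coalition nobody wishes to leave. To stop the ``interesting'' agents from mixing with anything unintended, all cross-gadget valuations not explicitly part of the design would be set to a large negative constant (as with the $-3$ omitted edges in \Cref{fig:CNS-general}), so that every coalition is essentially contained in one gadget up to its stabilizer and the fillers described next.

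The second step handles the lower bound. Since $\lb$ may be as large as $\ub - 1$, I would pad each gadget and each stabilizer bundle with \emph{filler} agents having strong mutual positive valuations among themselves and strongly negative valuations toward everyone else, so that a filler block glues onto exactly the intended coalition to reach a feasible size, never wishes to deviate, and is never a desirable target for any other agent. Using \Cref{prop:LargeFeasible} I can add enough fillers to guarantee a $\boundpair$-partition exists at all. Correctness then has two directions. From an exact cover $\etcS'$, the selected sets release their stabilizers to the pairwise disjoint element gadgets they cover, each element gadget becomes stable, the unselected sets keep their stabilizers bundled, and the fillers complete a Nash-stable $\boundpair$-partition. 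Conversely, in any NS $\boundpair$-partition every element gadget must have absorbed a stabilizer (otherwise the nonexistence property yields a deviation); the ``all three released or all three bundled'' design of each set forces set-consistency; and exactness follows because the upper bound caps how many stabilizers an element gadget can absorb, so a twice-covered element or a partially released set would leave some stabilizer agent with a profitable Nash deviation. Thus an NS $\boundpair$-partition exists if and only if $\langle R,\etcS\rangle$ is a Yes-instance of \etcs{}.

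The main obstacle is making the gadgets work \emph{uniformly} across all admissible $\boundpair$ with $\ub \ge 4$ and $\lb < \ub$, rather than for a single fixed pair. First, the element gadget must remain unstable in every internal arrangement permitted by the bounds and must become stable after absorbing exactly one stabilizer. The delicate balance is that a \emph{correctly} stabilized gadget should be full (of size $\ub$), so that it can neither shed an agent profitably nor admit a second stabilizer, whereas any arrangement \emph{without} a stabilizer must necessarily leave a non-full coalition that some gadget agent strictly prefers to join; since the number of internal configurations grows with $\ub$, the sign pattern on the gadget's arcs has to be chosen so this instability argument is essentially size-independent. Second, because NS uses $\boundpair$-permissible rather than $\boundpair$-feasible deviations, a deviator with $\lb \ge 2$ may leave its former coalition undersized, so I must ensure the filler and stabilizer valuations never open such a move for a ``wrong'' agent. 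Ruling out all of the polynomially many deviation types introduced by the fillers and by partially released set gadgets -- the same level of case analysis as in the proof behind \Cref{thm:CNS-Uge3} -- is the most delicate part of the argument.
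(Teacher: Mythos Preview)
Your proposal follows the template of \Cref{thm:CNS-Uge3}: per-element ``NS-nonexistence'' gadgets, each awaiting a stabilizer supplied by a covering set, with fillers to meet the lower bound. The paper takes a different route, and the difference matters because your local-gadget approach has a gap you have not closed.

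The gap is the gadget itself. You stipulate that the stabilized gadget forms a single full coalition of size~$\ub$, so the unstabilized gadget has $\ub-1$ core agents. Now take $\lb = \ub - 1$ (allowed, since the theorem only assumes $\lb < \ub$). Those $\ub-1$ gadget agents can form only a \emph{single} $(\lb,\ub)$-coalition, of size exactly~$\lb$; there is no second coalition inside the gadget to deviate to. Your fillers are by design ``never a desirable target'', other element gadgets are walled off by large negative constants, and the stabilizer bundle for a set $S$ contains two stabilizers for elements $r' \neq r$, which are again cross-gadget and hence strongly negative for the $r$-gadget agents. Every permissible deviation by a gadget agent therefore lands in a coalition of negative utility, so the isolated gadget is trivially Nash-stable when $\lb = \ub - 1$, and the reduction breaks. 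You correctly flag uniformity over $(\lb,\ub)$ as the main obstacle, but you do not supply a construction that resolves it, and the shape you commit to (stabilized gadget equals one full coalition) cannot.

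The paper avoids per-element gadgets entirely. It encodes the \etcs{} instance directly by core agents---one $\beta_r$ per element, $\ub-3$ agents $E_S$ per set, and a triplet $T_i$ for each of the $|\etcS|-\rho$ excess sets---and adds a single global agent~$\alpha$ who values every core agent positively while every core agent assigns $\alpha$ value $-\ub$. This cat-and-mouse forces every coalition containing a core agent to be full in any NS partition (else $\alpha$ would join it). A pool of dummy agents, sized so that $\ub \nmid |N|$, guarantees that some coalition is non-full; by the previous observation this coalition contains only dummies, and hence serves as a universal zero-utility escape for any agent with negative utility. This globally restores the individual-rationality style argument that $\lb \ge 2$ would otherwise block, and forces each $E_S$ into exactly one of two shapes: $E_S$ together with its three corresponding $\beta$-agents, or $E_S$ together with some $T_i$. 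The sets of the former type yield an exact cover. The idea you are missing is precisely this global $\alpha$/dummy mechanism; it replaces the need for any per-element instability gadget and is what makes the proof uniform in~$\lb$.
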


\begin{proof}[Proof idea]
    We provide another reduction from \etcs{}.
    Given an instance $(R,\etcS)$, we create a reduced ASHG as follows: 
    each element in $R$ is represented by a single agent, while each set in $\etcS$ is represented by $\ub-3$ agents.
    This should incentivize the creation of coalitions of all agents representing a set in $S\in \etcS$ together with the three agents representing the elements from $R$ contained in $S$.
    To give other set-representing agents a good alternative, we have further triplets of agents that can fill their coalitions.     This way, we want to ensure a correspondence with Yes-instances of \etcs.
    
    However, agents might have incentives to form other coalitions.
        To control these, we introduce a special agent $\alpha$ that plays a ``cat-and-mouse'' dynamics with all the aforementioned agents, to which we refer as core agents.
    While $\alpha$ would like to form a coalition with core agents, they want to avoid $\alpha$.
    In particular, this ensures that coalitions of core agents always have to be full, and, with a bit more effort, of the aforementioned form.
    Finally, if $\lb\ge 2$, the ``cat-and-mouse'' dynamics faces the difficulty that agents cannot simply form a new coalition when their coalition is not individually rational.
    For this, we introduce a sufficiently large set of dummy agents, such that there is always a nonfull coalition to which a deviation is possible.
\end{proof}

The combination of \Cref{thm:NSandIS1u,thm:NSuge4} implies that computing NS $(\lb,\ub)$-partitions is \NP-hard for all combinations of $\lb$ and $\ub$ with $\lb < \ub$ except for $\lb = 2$ and $\ub =3$, where we leave the complexity open.

Finally, we work towards achieving CIS\feastab for a nontrivial lower bound on coalition sizes $\lb \ge 2$.
As a first result, we present an algorithm for ASHGs $(N,\vf)$ with nonzero valuations, i.e., $\vf_i(j)\neq 0$ for all $i,j\in N$.
Such games are sometimes referred to as strict ASHGs \cite{ABS11c}.
The idea is similar as for \Cref{algorithm:CISub}: we iteratively choose leaders who form their best possible coalitions.
However, we have to form coalitions of size at least $\lb$ and, therefore, the formed coalitions might have to contain agents, for which the leader has a negative valuation.
Still, we assign them their best coalition of size $\lb$.
If the leader has even more friends, we add them up to the constraints given by the upper bound and by the number of agents that may be added to coalitions of size larger than $\lb$.
In this way, we create $k$ coalitions for a $k$ that allows for a $\boundpair$-partition into $k$ coalitions.
The remaining agents are added to coalitions in inverse order arbitrarily.
This order ensures that only later-assigned leaders may be forced to accommodate enemies, which cannot join coalitions of earlier-assigned leaders.

\begin{restatable}{theorem}{nonzeroCIS}\label{thm:nonzeroCIS}
	For ASHGs with nonzero valuations, there exists a polynomial-time algorithm that computes a CIS\feastab $(\lb,\ub)$-partition into $k$ coalitions, or determines that no such partition exists.
\end{restatable}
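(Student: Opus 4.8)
The plan is to first settle existence and then design a greedy leader-based construction in the spirit of \Cref{algorithm:CISub}. For existence, observe that a CIS\feastab $\boundpair$-partition into $k$ coalitions is in particular a $\boundpair$-partition into $k$ coalitions, so by \Cref{prop:feasibleKpart} it can only exist when $k\lb \le n \le k\ub$. The algorithm therefore begins by checking this inequality and reporting nonexistence if it fails; the remaining work is to show that whenever $k\lb \le n \le k\ub$ holds, a CIS\feastab partition into $k$ coalitions can be built in polynomial time.

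I would build the partition in two phases. In \emph{Phase 1}, I iteratively select a leader $\genA$ from the set $A$ of still-available agents and form their best admissible coalition: when forming the $j$-th of the $k$ coalitions, $\genA$ first takes their $\lb-1$ most preferred available agents (a mandatory fill, which may include enemies when $\genA$ has too few friends, using $\topset{\lb-1}{A}$), and then as many further most-preferred \emph{friends} as permitted both by the upper bound $\ub$ and by the residual budget $|A|-(k-j)\lb$ that must be reserved so the remaining coalitions can still reach size $\lb$. In \emph{Phase 2}, I distribute the still-unassigned agents into the coalitions in \emph{reverse} order $S_k, S_{k-1},\dots,S_1$, filling each up to $\ub$. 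Since $k\ub \ge n$, there is always enough room, and because coalitions only ever grow, the output is a genuine $\boundpair$-partition into $k$ coalitions; each agent is processed as a leader at most once, so the running time is polynomial.

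For correctness I restrict to the regime of interest $\lb \ge 2$ (for $\lb=1$ one additionally invokes individual rationality exactly as in \Cref{thm:1ubCIS}), so that every feasible deviation is into an \emph{existing} coalition. Using that valuations are nonzero, a contractual deviation of $\genA$ out of $\costr(\genA)$ forces every remaining member of $\costr(\genA)$ to value $\genA$ negatively, while an individual deviation into $C'$ forces every member of $C'$ to value $\genA$ positively. Consequently every friend a leader recruited is vetoed by that leader, and the only candidate deviators are the leaders themselves together with the agents a leader was \emph{forced} to accommodate, namely the mandatory enemies of Phase 1 and all Phase 2 leftovers (which are enemies of their coalition's leader).

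The heart of the argument, and the step I expect to be the main obstacle, is ruling out deviations by these forced agents and by leaders, where the reverse-order distribution and the budget bound interact delicately. For a forced enemy $b$ of $S_j$ to enter $S_i$, its individual-deviation constraint forces $b$ to be a friend of the leader $a_i$, and I would show $S_i$ is then necessarily full: if $i<j$, then $b$ was available while $S_i$ was being assembled, so $a_i$ would have recruited it unless $S_i$ had already reached size $\ub$ (or, in the budget-tight case, every later coalition has size exactly $\lb$ and $b$ cannot be spared for feasibility); if $i>j$, the reverse order is decisive, since $S_j$ can exceed size $\lb$ only after receiving Phase 2 leftovers, which occurs only once all later coalitions have been filled to $\ub$, so again $S_i$ is full. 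Leaders are handled by a best-response argument as in \Cref{thm:1ubCIS}: all members of any \emph{later} coalition were available when the leader formed its own, so it could have recruited precisely its friends there and its actual coalition is at least as good, while \emph{earlier} coalitions are excluded exactly as above. The subtle point requiring the most care is to verify these two sub-arguments simultaneously against the budget $|A|-(k-j)\lb$: one must confirm both that a forced enemy never finds a non-full welcoming coalition and that the budget cap never prevents a leader from matching, within its own coalition, the utility of a later coalition it might covet.
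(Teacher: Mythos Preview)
Your proposal is correct and takes essentially the same approach as the paper. The algorithm you describe---leaders greedily forming size-$\lb$ coalitions plus budgeted extra friends in Phase~1, then reverse-order filling in Phase~2---is exactly the paper's \Cref{algorithm:nonzeroCIS}, and your correctness argument (friends are vetoed by their leader via the contractual constraint; for forced enemies and leaders, split on whether the target coalition has smaller or larger index and use the budget/reverse-order dichotomy) mirrors the paper's case analysis. The budget interaction you flag as the delicate point is precisely where the paper's proof is tersest, but it resolves just as you anticipate: whenever a friend of an earlier leader was not recruited, the budget hit zero, forcing all later coalitions to size exactly~$\lb$ and hence blocking feasibility of the deviation.
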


\begin{algorithm}[tb]
	\caption{CIS\feastab $\boundpair$-partitions in ASHGs with nonzero valuations.}\label{algorithm:nonzeroCIS}
	\textbf{Input:} ASHG $(N,\vf)$ with $n$ agents, 	size parameters $\lb, \ub$, partition size $k$\\
	\textbf{Output:} $(\lb,\ub)$-partition consisting of $k$ coalitions
	\begin{algorithmic}[1]
		\If{$n > k \ub$ or $k > \left\lfloor \frac n{\lb}\right\rfloor$}
		\Return ``There exists no $\boundpair$-partition of $N$ into $k$ coalitions.''\label{ln:nofeas}
		\EndIf
		\State $A\leftarrow N$ \Comment{Still available agents}
		\State $x \leftarrow n-\lb k$\Comment{Number of agents in coalitions of size more than $\lb$}
		\For {$i\in [k]$}\Comment{\textbf{Phase I}: Formation of good coalitions for arbitrary leaders}
		\State Select any agent $\genA\in A$
		\State $C_i\leftarrow \{\genA\}\cup \topset{\lb-1}{A}$\Comment{Best coalition of minimum size}
		\State $C_i\leftarrow \{\genA\}\cup \topset{\min\{\ub-\lb,x\}}{\text{Fr}(\genA,A\setminus C_i)}$\Comment{Add friends}
		\State $A\leftarrow A\setminus C_i$
		\If{$|C_i| > \lb$}
		\State $x \leftarrow x - (|C_i| - \lb)$
		\EndIf
		\EndFor
		\While {$x>0$}\Comment{\textbf{Phase II}: Fill coalitions in inverse order}
		\State Let $i^* := \arg\max\{i\in [k]\colon |C_i| < \ub\}$.
		\State Choose $B\subseteq A$ with $|B| = \min \{x,\ub-|C_{i^*}|\}$
		\State $C_{i^*} \leftarrow C_{i^*} \cup B$
		\State $x \leftarrow x - |C_{i^*}|$
		\EndWhile\\
		
		\Return $\costr = \{C_1,\ldots, C_k\}$
	\end{algorithmic}
\end{algorithm}

\begin{proof}
	Assume that we are given an ASHG with nonzero valuations.
	Given a size bounds $\lb$ and $\ub$, and a target partition size $k$, we run \Cref{algorithm:nonzeroCIS}.
	This algorithm starts by testing whether there exists a $(\lb,\ub)$-partition consisting of $k$ coalitions by testing the conditions of \Cref{prop:feasibleKpart}.
	Hence, it (correctly) terminates in line~\ref{ln:nofeas} if and only if no such partition exists.
	
	Assume that such a partition exists.
	By definition, each coalition in this partition has to contain at least $\lb$ agents, while there exists $n-\lb k$ further agents that can be distributed to the $k$ coalitions.
	Throughout the algorithm, the variable $x$ keeps track of how many further agents we can add to form coalitions of size larger than $\lb$.
	\Cref{algorithm:nonzeroCIS} operates in two stages: in the first stage, we form coalitions of size at least $\lb$ by choosing a \emph{leader} and adding their $\lb - 1$ most-preferred agents.
	We further increase this coalition by adding friends to the leader, while being careful to not form coalitions of too large size or allocating more than a further $x$ agents.
	Intuitively, we end this stage with a partition of optimal coalitions for the leaders subject to available agents and forming coalitions of size at least $\lb$.
	In the second phase, we add the remaining agents to existing coalitions in the inverse order of their creation.
	This ensures that earlier leaders are treated preferentially, so they can still not deviate while later leaders could only deviate to nonfull coalitions, which will not accept them.
	
	We will now formally show the correctness of the algorithm.
	As argued above, the algorithm returns a partition if and only if any $\boundpair$-partition into $k$ coalitions exists.
	From now on assume that this is the case.
	Since such a partition exists, it is possible to form $k$ coalitions with $\lb$ agents in Phase~I, as long as at most $n-\lb k$ agents are additionally added to larger coalitions.
	Clearly, the algorithm has this property.
	Moreover, since there is a partition into $k$ coalitions, the coalitions formed in Phase~I have enough space for the remaining agents to join in Phase~II.
	We conclude that the algorithm produces a $\boundpair$-partition into $k$ coalitions.
	
	It remains to prove that the returned partition $\costr = \{C_1,\dots, C_k\}$ is a CIS\feastab partition.
	Let $i^*$ be the index of the last coalition to which agents are added in Phase~II, where we set $i^* = k + 1$ if Phase~II was not entered (because we had $x = 0$ in the beginning of Phase~II).
	
	Consider any agent $\genA\in N$ and let $i\in [k]$ with $\genA\in C_i$.
	Consider an index $j\in [k]$ with $j\neq i$.
	We will show that there exists no $\boundpair$-feasible CIS deviation by $\genA$ to $C_j$.
	For this, we can assume that $C_j$ is not full as otherwise no such deviation exists.
	Assume first that $j < i$ and let $\genB$ be the leader of $C_j$.
	If $\vf_{\genB}(\genA) < 0$, then $\genA$ cannot perform a CIS deviation to join $C_j$.
	Otherwise, since valuations are nonzero, we have that $\vf_{\genB}(\genA) > 0$, i.e., $\genA$ is a friend of $\genB$.
	However, $\genA$ was not added as an additional friend even though their was space in the coalition.
	Hence, it must have been the case that $x\le \mu-\lambda$ when adding further agents.
	Thus, after the formation of $C_j$, $x$ was updated to be $0$.
	Therefore, no larger-indexed coalition can be of size larger than $\lb$ and we must have that $|C_i| = \lb$. 
	Hence, there exists no $\boundpair$-feasible deviation by $\genA$.
	
	Now assume that $j > i$.
	Since $C_j$ is not full and has a higher index than $i$, we know that no agents were added to $C_i$ in Phase~II.
	Assume first that $\genA$ is not the leader of $C_i$.
	If $|C_i| = \lb$, then there exists no $\boundpair$-feasible CIS deviation by any agent in $C_i$.
	Otherwise, the leader of $C_i$ added friends after adding the $\lb - 1$ best agents.
	Hence, $\genA$ is a friend of the leader in $C_i$ and denied to perform a CIS deviation.
	Finally, assume that $\genA$ is the leader of $C_i$.
	Since $i < j$, all agents in $C_j$ where available, i.e., contained in $A$, when $C_i$ was formed.
	Hence, $C_i$ must yield a better utility than the best coalition $C'$ of size at least $\lb$ for $\genA$ with agents in $C_j$.
	We have that $\uf_{\genA}(C_i) \ge \uf_{\genA}(C'\cup\{\genA\})\ge \uf_{\genA}(C_j\cup\{\genA\})$.
	Hence, $\genA$ cannot form an NS, and therefore CIS, deviation to join $C_j$.
	Altogether, we conclude that \Cref{algorithm:nonzeroCIS} returns a CIS\feastab $\boundpair$-partition.
	\end{proof}

Next, we consider ASHGs $(N,\vf)$ with nonnegative valuations, i.e., $\vf_i(j)\ge 0$ for all $i,j\in N$.
Hence, an agent would never deny an agent to enter their coalition and we, therefore have that CIS\feastab is equal to CNS\feastab, i.e., there is only a constraint on abandoning a coalition.
The algorithmic idea is similar to \Cref{algorithm:CISub}, however, we have to ensure that there are enough agents to reach a size of $\lb$ for all $k$ coalitions, which further restricts how many friends an agent might take to join a coalition.

\begin{restatable}{theorem}{nonnegCIS}\label{thm:nonnegCIS}
	For ASHGs with nonnegative valuations, there exists a polynomial-time algorithm that computes a CIS\feastab $(\lb,\ub)$-partition into $k$ coalitions, or determines that no such partition exists.
\end{restatable}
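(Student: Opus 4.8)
The plan is to adapt the leader-based constructions of \Cref{algorithm:CISub} and \Cref{algorithm:nonzeroCIS} to the nonnegative setting. I would first dispose of existence: by \Cref{prop:feasibleKpart} a $\boundpair$-partition into $k$ coalitions exists if and only if $k\lb \le n \le k\ub$, so the algorithm checks this in constant time and reports infeasibility otherwise. Assume henceforth that $k\lb \le n \le k\ub$. The decisive structural fact is that nonnegativity makes the \emph{individual} half of a contractual individual deviation vacuous: a joining agent never lowers anyone's utility, so no coalition can refuse entry. Consequently CIS\feastab and CNS\feastab coincide, and the only obstacles to a profitable move by $\genA$ out of $\costr(\genA)$ are non-improvement, a full target coalition, the source being at the floor $|\costr(\genA)| = \lb$, or a \emph{contractual veto} by some $\genB \in \costr(\genA)$ with $\vf_\genB(\genA) > 0$.

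The algorithm processes agents one at a time as decision makers, maintaining the available set $A$ and the budget $x = n - k\lb$ of agents that may be placed beyond the size-$\lb$ floors, exactly as in \Cref{algorithm:nonzeroCIS}. At the turn of an agent $\genA$, and in the spirit of the corrected \Cref{algorithm:CISub}, I would compute the best utility $\genA$ can obtain either by founding a new coalition (permitted only while fewer than $k$ coalitions exist and while enough agents remain to fill the outstanding coalitions up to $\lb$) or by joining any already-formed coalition, in each case immediately pulling in $\genA$'s most valuable friends as helpers up to the upper bound $\ub$ and the budget $x$. Because there are no enemies, every friend of $\genA$ is an admissible helper, so the set $P_k$ of \Cref{algorithm:CISub} simplifies to $\text{Fr}(\genA, A)$. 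The agent commits to the utility-maximizing option, its helpers leave $A$, and once $k$ coalitions exist every further agent must join an existing one. Any residual budget is spent by adding leftover agents to nonfull coalitions in inverse order of creation, as in \Cref{thm:nonzeroCIS}.

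For correctness I would argue agent by agent using the coincidence of CIS\feastab and CNS\feastab. Each helper $\genB$ was taken in by a decision maker $\genA$ with $\vf_\genA(\genB) > 0$, who therefore vetoes any departure of $\genB$, so helpers are stable. For a decision maker $\genA$ of coalition $C_i$ I must show there is no $\boundpair$-feasible improving move. When $\genA$ was processed it compared founding a new coalition against joining \emph{every} coalition present at that moment, each time bringing along its best admissible friends, so $\genA$ cannot profit from relocating to any coalition in the state it was then in. The remaining danger is a coalition $C_j$ that grew \emph{after} $\genA$'s turn and now holds friends of $\genA$; but any such friend was already available at $\genA$'s turn (agents only leave $A$) and hence is no better than the friends $\genA$ already secured, so a top-friends exchange argument as in \Cref{thm:nonzeroCIS} applies: whenever $|C_i| > \lb$ the coalition $C_i$ holds at least as many of $\genA$'s best friends as $C_j$ can offer, and whenever $|C_i| = \lb$ the departure is infeasible.

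The step I expect to be the main obstacle is exactly this decision-maker case, and specifically coalitions that grow after a leader's turn, because nonnegativity removes the enemy-based entry refusal that powered \Cref{thm:nonzeroCIS}: there a leader tempted to rejoin an earlier coalition was either refused by an enemy or forced by a budget argument down to size $\lb$, and with only friends and neutral agents the first option disappears. This is precisely the latecomer subtlety that invalidated the algorithm of \citet{ABS11c}, and the remedy is the same as in \Cref{algorithm:CISub}: by letting each agent pull in all of its admissible friends at the instant it commits, no coalition can later acquire a \emph{better} bundle of $\genA$'s friends behind its back. The real work is to verify that the $\lb$-reservation during founding, the helper cap $\min\{\ub-\lb, x\}$, and the inverse-order filling interlock so that every surviving improving move either targets a full coalition or originates from a coalition already pinned at the lower bound~$\lb$.
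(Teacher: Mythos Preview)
Your plan is essentially the paper's approach: process each remaining agent as a decision maker who immediately pulls in its best available friends subject to the global budget $x=n-k\lb$, veto helpers via their decision maker, and handle decision makers by a budget-counting comparison against every coalition as it stood at their turn. Two clarifications are worth making. First, the paper drops the founding/joining distinction and Phase~II entirely by initialising all $k$ coalitions as empty up front and letting each decision maker simply choose one of them; since your while loop already empties $A$, the inverse-order filling you carry over from \Cref{thm:nonzeroCIS} is vacuous and should not appear in the correctness argument. Second, your phrase ``$C_i$ holds at least as many of $\genA$'s best friends as $C_j$ can offer'' is the right intuition but is not literally about counts; the paper's sharp form is that if $|C_i|>\lb$ and $|C_j|<\ub$ then at least one unit of $x$ was consumed by $C_i$, which forces $|C_j\setminus\hat C_j|\le \hat r-1$ (with $\hat C_j,\hat r,\hat A$ the state at $\genA$'s turn), whence $\uf_\genA(\{\genA\}\cup C_j)\le \uf_\genA(\{\genA\}\cup\hat C_j\cup\topset{\hat r-1}{\text{Fr}(\genA,\hat A)})$, an option $\genA$ already rejected, and nonnegativity then gives $\uf_\genA(C_i)\ge\uf_\genA(\hat C_i)\ge\uf_\genA(\{\genA\}\cup C_j)$.
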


\begin{algorithm}[tb]
	\caption{CIS\feastab $\boundpair$-partitions in ASHGs with nonnegative valuations.}\label{algorithm:nonnegCIS}
	\textbf{Input:} ASHG $(N,\vf)$ with $n$ agents, 	size parameters $\lb, \ub$, partition size $k$\\
	\textbf{Output:} $(\lb,\ub)$-partition consisting of $k$ coalitions
	\begin{algorithmic}[1]
		\If{$n > k \ub$ or $k > \left\lfloor \frac n{\lb}\right\rfloor$}
		\Return ``There exists no $\boundpair$-partition of $N$ into $k$ coalitions.''\label{ln:nofeas2}
		\EndIf
		\State $A\leftarrow N$ \Comment{Still available agents}
		\State $C_i \leftarrow \emptyset$ for $i\in [k]$\Comment{Initialize coalitions}
		\State $x \leftarrow n-\lb k$\Comment{Number of agents in coalitions of size more than $\lb$}
		\While{$A\neq \emptyset$}
		\State Select any agent $\genA\in A$\label{ln:leaders}
		\State $h\leftarrow 0$\Comment{Highest utility}
		\State $z\leftarrow 1$\Comment{Index of best coalition}
		\State $B \leftarrow \emptyset$\Comment{Best added agents}
		\For{$i\in [k]$}
		\State $r\leftarrow \min\{x + \max\{0,\lb - |C_i|\},\ub-|C_i|\}$\Comment{Number of agents allowed to add to $C_i$}
		\State $h' \leftarrow \uf_{\genA}(\{\genA\}\cup C_i\cup \topset{r-1}{\text{Fr}(a,A)})$\Comment{Best utility when joining $C_i$ with friends}
		\If{$h'>h$}
		\State $h \leftarrow h'$
		\State $z\leftarrow i$
		\State $B \leftarrow \topset{r}{\text{Fr}(a,A)}$
		\EndIf
		\EndFor
		\State $x \leftarrow x - \max\{0,1 + |B| - \max \{0,\lb - |C_z|\}\}$
		\State $C_z \leftarrow C_z \cup \{\genA\}\cup B$\label{ln:coalupdate}
		\State $A\leftarrow A\setminus (\{\genA\}\cup B)$
		\EndWhile\\
		
		\Return $\costr = \{C_1,\ldots, C_k\}$
	\end{algorithmic}
\end{algorithm}

\begin{proof}
	Assume that we are given an ASHG with nonzero valuations.
	Given a size bounds $\lb$ and $\ub$, and a target partition size $k$, we run \Cref{algorithm:nonnegCIS}.
	As in \Cref{algorithm:nonzeroCIS}, it starts by testing whether there exists a $(\lb,\ub)$-partition consisting of $k$ coalitions by testing the conditions of \Cref{prop:feasibleKpart}, and (correctly) terminates in line~\ref{ln:nofeas2} if and only if no such partition exists.
	
	If we know that a partition of the desired structure exists, the algorithm finds it as follows.
	Among the remaining agents, a new \emph{leader} is chosen arbitrarily.
	Then, for each of the $k$ coalitions, we have to decide, joining which of them leads to the highest utility.
	There, we have to be careful about not violating the upper bound on the coalition size and by not enlarging a coalition that is larger than $\lb$ by more than the number $x$ of agents that we may additionally add to coalitions.
	
	As in \Cref{algorithm:nonzeroCIS}, the parameter $x$ is initially set to the total number of agents that can added to the $k$ coalitions beyond their minimum sizes of $\lb$.
	Throughout the algorithm, when we reason to add agents to a specific coalition $C_i$ in the for loop within the while loop, we first determine the number $r$ equal to the maximum number of agents that we can add to $C_i$ such that we add at most $x$ agents apart from the $\lb - |C_i|$ agents if $|C_i| < \lb$.
	This number is equal to $x + \max\{0,\lb - |C_i|$\}, where the maximum ensures that we may still add $x$ agents if $|C_i|\ge \lb$.
	Then, before we actually add agents to a best coalition $C_z$ in \Cref{ln:coalupdate}, we decrease $x$ by $\max\{0,1 + |B| - \max \{0,\lb - |C_z|\}\}$.
	This quantity is equal to how much adding $\{\genA\}\cup B$, i.e., $1+|B|$ agents, might exceed the space in $C_z$ beyond reaching a size of $\lb$.
	If $1 + |B| < \max \{0,\lb - |C_z|\}$, i.e., if adding $\{\genA\}\cup B$ would lead to a coalition of size at most $\lb$, we do not want to decrease $x$.
	This is ensured by the outer maximum.
	Hence, throughout the execution of the algorithm, the number $x$ captures the number of agents that may still be added to coalitions of size at least $\lb$ to ensure that all other coalitions are guaranteed at least $\lb$ agents.
	This implies that \Cref{algorithm:nonnegCIS} returns a $\boundpair$-partition into $k$ coalitions.
	
	It remains to prove that the returned partition $\costr = \{C_1,\dots, C_k\}$ is a CIS\feastab partition.
	We refer to all agents selected in \Cref{ln:leaders} of the algorithm as \emph{leaders}.
	Now, every agent in the final partition that is not a leader is the friend of a leader and hence denied to perform any CIS deviation.
	
	Now consider a leader $\genA$ that is assigned to coalition $C_i$ for $i\in [k]$.
	Consider $j\in [k]$ with $i\neq j$, i.e., the index of a different coalition.
	If $|C_j| = \ub$ or $|C_i| = \lb$, then it is not possible to join $C_j$ by a $\boundpair$-feasible deviation, so we may assume that $|C_j| < \ub$ and $|C_i| > \lb$.
	Let $\hat C_j\subseteq C_j$ be the subset of $C_j$ of agents that were in $C_j$ in the iteration of the while loop when $\genA$ was the active leader.
	Moreover, let $\hat x$ and $\hat A$ be the value of $x$ and $A$ in the beginning of this iteration.
	Let $\hat r$ be the value of $r$ when $C_j$ was considered in the for loop.
	Finally, let $\hat C_i\subseteq C_i$ be the subset of $C_i$ of agents that were in $C_i$ after the coalition update in \Cref{ln:coalupdate} in this iteration.
	
	Since $x$ only decreases throughout the execution of the algorithm, and decreases by one for each agent that we add to $\hat C_j$ letting this coalition exceed a size of $\lb$, we know that $|C_j|\le |\hat C_j| + \hat r$ and $C_j\setminus \hat C_j \subseteq \hat A$.
	As $|C_j| < \ub$, we also have that $|C_j\setminus \hat C_j| \le \ub - |\hat C_j| - 1$.
	Moreover, since we decreased $x$ by the amount of agents in $C_j$ exceeding $\lb$ compared to $\hat C_j$ and we have additionally decreased $x$ after this iteration due to adding an agent to $C_i$ (either when $\genA$ and their friends joined or, since $|C_i|> \lb$, at a later stage).
	Hence, we know that $|C_j\setminus \hat C_j| \le \hat x - 1 + \max\{0,\lb - |\hat C_j|\}$, where the $-1$ accounts for the decrease of $x$ due to $C_i$.
	Together, we know that $|C_j\setminus \hat C_j| \le \hat r - 1$.
	Hence, it holds that 
    
	\begin{equation*}
		\uf_{\genA}(\{\genA\}\cup \hat C_j\cup \topset{\hat r-1}{\text{Fr}(a,\hat A)} \ge \uf_{\genA}(\{\genA\}\cup C_j)\text.
	\end{equation*}
	
	Moreover, since we were forming $\hat C_i$ when $\genA$ was leader, we know that this must have yielded at least the same utility, i.e., $\uf_{\genA}(\hat C_i) \ge \uf_{\genA}(\{\genA\}\cup \hat C_j\cup \topset{\hat r-1}{\text{Fr}(a,\hat A)}$, since $\hat C_i$ was formed.
	Finally, since all valuations are nonnegative, we have that $\uf_{\genA}(C_i) \ge \uf_{\genA}(\hat C_i)$.
	Combining all three inequalities yields $\uf_{\genA}(C_i) \ge\uf_{\genA}(\{\genA\}\cup C_j)$.
	Hence, $\genA$ cannot improve their utility by joining $C_j$.
	We conclude that $\costr$ is a CIS\feastab $\boundpair$-partition.
\end{proof}

\section{Conclusion}

In this paper, we have investigated stability based on deviations by single agents in additively separable hedonic games, where the output coalitions are restricted to be within given size bounds.
Since size bounds are ubiquitous in coalition formation applications, we think that our paper presents a valuable step to understand the tractability of coalition formation.

Interestingly, the existence of stability can only be guaranteed when the deviating agent is only allowed to abandon a coalition that is larger than the required minimum size.
Otherwise, we cannot even guarantee our weakest stability concept of contractual individual stability.
This is an interesting contrast to the unconstrained case, where CIS deviations are Pareto improvements.
While they are also Pareto improvements in our setting, they may, however, leave the space of feasible partitions.
Hence, CIS partitions may not exist when all Pareto-optimal outcomes of the unconstrained game do not satisfy the size bounds.
Once we restrict solution concepts to feasible deviations, we have the existence of CIS\feastab partitions, and have presented a polynomial-time algorithms to extract them when the lower bound is equal to~$1$, or when valuations are nonzero or nonnegative.

We complement our analysis of existence properties by an investigation of computational complexity.
There, we present a complete picture when the lower bound is equal to~$1$.
In addition to the polynomial-time solvability of CIS, the only tractable case is CNS when the upper bound on coalitions size is~$2$.
For a nontrivial lower bound, we present an almost complete picture for NS, leaving only the case of coalitions bounded between~$2$ and~$3$ open.
Further open problems concern the complexity of IS and CNS for a lower bound of at least~$2$, or CIS\feastab on the complete domain of valuations.
Other interesting avenues for future work within our bounded-coalition framework include investigating other classes of hedonic games, such as fractional hedonic games \cite{ABB+17a}, or other solution concepts such as popularity \cite{ABS11c,BrBu20a}.

\section*{Acknowledgements}
	Most of this work was done when Martin Bullinger and Edith Elkind were at the University of Oxford.
	Martin Bullinger and Edith Elkind were supported by the AI Programme of The Alan Turing Institute. 
	Adam Dunajski was supported by the University of Edinburgh School of Mathematics.
	Matan Gilboa was supported by the Engineering and Physical Sciences Research Council under grant EP/W524311/1.

\appendix

\section*{Appendix}

In the appendix, we present further material, such as proofs missing from the main paper.

\section{Algorithm of Aziz et al.~\cite{ABS11c}}\label{app:CISflawed}

In this section, we consider the algorithm by \citet[Algorithm~1]{ABS11c} aimed at computing CIS partitions for ASHGs without size constraints.
We present an example, where this algorithm unfortunately fails.

\begin{example}\label{ex:CISflawed}
	Consider an ASHG $(N,\vf)$ where $N = \{\genA_1,\genA_2,\genA_3,\genA_4\}$ and valuations are given as
	\begin{itemize}
		\item $\vf_{\genA_1}(\genA_2) = \vf_{\genA_1}(\genA_4) = -1$,
		\item $\vf_{\genA_3}(\genA_1) = 3$, $\vf_{\genA_3}(\genA_2) = \vf_{\genA_3}(\genA_4) = 2$,
		\item $\vf_{\genA_4}(\genA_2) = 1$, and 
		\item all other valuations are~$0$.
	\end{itemize}
	
	An illustration is provided in \Cref{fig:CISflawed}.
	
	Assume that the first two leaders are $\genA_1$ and $\genA_2$.
	As they both do not receive positive value from any agent, this yields the first two coalitions $S_1 = \{\genA_1\}$ and $S_2 = \{\genA_2\}$.
	The next leader is $\genA_3$.
	Their first choice is to form a new coalition with their still available friend $\genA_4$, obtaining a utility of $2$.
	However, they may also join $S_1$ receiving a utility of $3$ or $S_2$ receiving a utility of $2$.
	Note that $\genA_1$ and $\genA_2$ would both approve of that because they both have a valuation of~$0$ for $\genA_3$.
	Since the best option among these is joining $S_1$, $\genA_3$ will join this coalition, updating $S_1 = \{\genA_1,\genA_3\}$.
	Now, as $\vf_{\genA_1}(\genA_4) = -1$, no latecomers join this coalition.
	
	Finally, the last leader is $\genA_4$.
	Their best option is joining $S_2$, which is approved of by $\genA_2$.
	Hence, the outcome of the algorithm by     \citet{ABS11c} is $\costr = \{\{\genA_1,\genA_3\},\{\genA_2,\genA_4\}\}$.
	However, this is not a CIS partition, because $\genA_3$ has a CIS deviation to join $\{\genA_2,\genA_4\}$.\hfill$\lhd$
\end{example}

\begin{figure}[tb]
	\centering
	\begin{tikzpicture}[every node/.style={draw, circle, minimum size=.7cm, inner sep=0pt}]
		\pgfmathsetmacro\figscale{2.5}
		\node (a1) at (0,0) {$\genA_1$};
		\node (a2) at (\figscale,0) {$\genA_2$};
		\node (a3) at (0,\figscale) {$\genA_3$};
		\node (a4) at (\figscale,\figscale) {$\genA_4$};
		
		\draw[->] (a1) edge node[midway,fill = white, draw = none] {$-1$} (a2);
		\draw[->] (a1) edge node[pos =.3,fill = white, draw = none] {$-1$} (a4);
		\draw[->] (a3) edge node[midway,fill = white, draw = none] {$3$} (a1);
		\draw[->] (a3) edge node[pos =.3,fill = white, draw = none] {$2$} (a2);
		\draw[->] (a3) edge node[midway,fill = white, draw = none] {$2$} (a4);
		\draw[->] (a4) edge node[midway,fill = white, draw = none] {$1$} (a2);
	\end{tikzpicture}
	\caption{Illustration of the ASHG of \Cref{ex:CISflawed}. 
		We only depict the nonzero valuations.
		\label{fig:CISflawed}}
\end{figure}
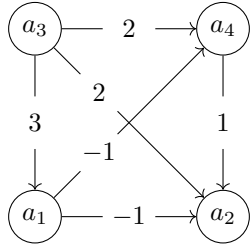

\section{Hardness results for $\lb = 1$}\label{app:upperbound}

In this section, we present the proofs of our hardness results when the lower bound satisfies $\lb = 1$.
We start with the consideration of contractual Nash stability. 

\CNSugeThree*

\begin{proof}
	We provide a reduction from the \NP-complete problem \etc{} (\etcs) \cite{GaJo79a}.
	An instance of \etcs{} is a pair $\langle R,\etcS{}\rangle$, where $R$ is a ground set of size $3\rho$ and $\etcS{}$ is a collection of $3$-element subsets of $R$; it is a Yes-instance if and only if there exists a subset $\etcS{}'\subseteq \etcS{}$ that partitions~$R$, i.e., covers $R$ and satisfies $|{\etcS{}}'|=\rho$.
	
	Recall that we illustrated our reduction in \Cref{fig:CNS-general} in the body of the paper.
	It works for any upper bound on coalition sizes $\ub$ with $\ub \ge 3$.
	Let $\langle R,\etcS{}\rangle$ be an instance of \etcs{} and set $z = 3|\etcS{}|-|R|$ (this is the sum of cardinalities of sets in $\etcS{}$ if removing some exact cover). 
	Without loss of generality, we may assume that $z\ge 0$. 
	
	Before formally describing our construction, we provide a high-level explanation.
		Elements of the ground set $R$ are represented by a gadget corresponding to an ASHG with $4$ agents not containing a CNS partition \cite{SuDi07b}.
	The sets $S$ in $\etcS{}$ are represented as follows: 
	For each element $r\in S$, there exist two agents denoted by $a_r^S$ and $\bar a_r^S$. 
	Moreover, there exist four more agents forming another auxiliary gadget as the one representing elements of $R$.
	Agents of the type $a_r^S$ have the ability to eliminate the instability caused by the gadget representing $r$. 	Agents of type $\bar a_r^S$ face a tension between two options. 
	The only agent that they receive a positive utility from is $a_r^S$, so they would like to be in a coalition with this agent.
	However, $\bar a_r^S$ is the only agent able to eliminate the instability caused by the corresponding auxiliary gadget of $S$.
	Hence, to guarantee stability, they are not allowed to form a coalition with $a_r^S$ or have an incentive to join them.
	The only way to enforce this is to group all agents of the type $a_r^S$ that are not interacting with $R$ gadgets in triplets according to the elements in $S$.
	Thus, either all or none of the agents of a set in $\etcS{}$ interact with an $R$ gadget, which yields a disjoint cover of $R$.
	
	We now formally define the reduced ASHG $(N,\vf)$ as follows. 
	Let $N = N_R \cup N_{\etcS{}}$ where
	\begin{itemize}
		\item $N_R = \cup_{r\in R}N_r$ with $N_r = \{\alpha_r,\beta_r,\gamma_r,\zeta_r\}$ for $r\in R$ and
		\item $N_{\etcS{}} = \cup_{S\in \etcS{}}N_S$ with $N_S = \{a^S_r,\bar a^S_r,\alpha^S_r,\beta^S_r,\gamma^S_r,\zeta^S_r\colon r \in S\}$ for $S\in \etcS{}$.
	\end{itemize}
	For $S\in \etcS{}$ and $r\in R$, we write $N^S_r = \{\alpha^S_r,\beta^S_r,\gamma^S_r,\zeta^S_r\}$.
	These agents will form an auxiliary gadget corresponding to a specific element in a specific set of $\etcS{}$.
	
	We now define the valuations $\vf$.
	First let $r\in R$.
	\begin{itemize}
		\item For $\genGrkA \in \{\alpha,\beta,\gamma\}$, we set $\vf_{\genGrkA_r}(\zeta_r) = 1$.
		\item For each pair $(\genGrkA,\genGrkB)\in \{(\alpha,\beta),(\beta,\gamma),(\gamma,\alpha)\}$, we set $\vf_{\genGrkA_r}(\genGrkB_r) = 0$.
	\end{itemize}
	Now let $S\in\etcS{}$ and $r\in S$.
	\begin{itemize}
		\item For $\genGrkA \in \{\alpha,\beta,\gamma\}$, we set $\vf_{\genGrkA^S_r}(\zeta^S_r) = 1$.
		\item For each pair $(\genGrkA,\genGrkB)\in \{(\alpha,\beta),(\beta,\gamma),(\gamma,\alpha)\}$, we set $\vf_{\genGrkA^S_r}(\genGrkB^S_r) = 0$.
		\item We set $\vf_{a^S_r}(\zeta_r) = \vf_{\zeta_r}(a^S_r) = 0$.
		\item We set $\vf_{\bar a^S_r}(\zeta^S_r) = \vf_{\zeta^S_r}(\bar a^S_r) = 0$.
		\item For $r'\in S\setminus\{r\}$, we set $\vf_{a^S_r}(a^S_{r'}) = 0$.
		\item We set $\vf_{\bar a^S_r}(a^S_r) = 2$ and for $r'\in S\setminus\{r\}$, we set  $\vf_{\bar a^S_r}(a^S_{r'}) = -1$.
		\item All valuations that have not been defined are set to $-3$.
	\end{itemize}
	
	The final negative value is chosen to be small enough to negate any positive valuation that an agent may have otherwise.
	Note that each agent has at most one positive valuation for another agent, and the maximum value of a positive valuation is~$2$.
	
	We claim that $\langle R,\etcS{}\rangle$ is a Yes-instance if and only if $(N,\vf)$ admits a CNS partition.
	
	\paragraph{$\implies$} Assume first that $\langle R,\etcS{}\rangle$ is a Yes-instance, i.e., there exists a subset $\etcS{}'\subseteq \etcS{}$ that partitions $R$.
			We define a partition $\costr$ as the union of the following coalitions:
	
	\begin{itemize}
		\item For every $r\in R$, we form $\{\alpha_r\}$, $\{\beta_r\}$, and $\{\gamma_r\}$.
		\item For $S\in \etcS{}', r \in S$, we form $\{a^S_r,\zeta_r\}$.
		\item For $S\in \etcS{}\setminus \etcS{}'$, we form $\{a^S_r\colon r \in S\}$.
		\item For $S\in \etcS{}$ and $r \in S$, we form $\{\alpha^S_r\}$, $\{\beta^S_r\}$, $\{\gamma^S_r\}$, and $\{\bar a^S_r,\zeta^S_r\}$.
	\end{itemize}
	
	The nonsingleton coalitions of $\costr$ are highlighted in blue in \Cref{fig:CNS-general}.
	
	Clearly, this is a $(1,\ub)$-partition for any $\ub\ge 3$.
	We argue that the constructed partition is contractually Nash-stable by performing a case analysis going through all agent types.
	
	\begin{itemize}
		\item For $r\in R$ and $\genGrkA \in \{\alpha,\beta,\gamma\}$, it holds that $\uf_{\genGrkA_r}(\costr) = 0$. 
		Moreover, joining any other coalition cannot improve their utility. 
		In particular, $\uf_{\genGrkA_r}(\costr(\zeta_r)\cup\{\genGrkA_r\}) = -2$.
		\item For $r\in R$, we have $\uf_{\zeta_r}(\costr) = 0$, which is the highest utility that $\zeta_r$ can achieve in any coalition.
		\item Similarly, for $S\in\etcS{}$ and $r\in S$, we have $\uf_{a^S_r}(\costr) = 0$ (regardless of whether $S\in\etcS{}'$ or not), which is the highest utility that $a^S_r$ can achieve in any coalition.
		\item For $S\in\etcS{}$, $r\in S$, and $\genGrkA \in \{\alpha,\beta,\gamma\}$, it holds that $\uf_{\genGrkA^S_r}(\costr) = 0$. 
		Moreover, joining any other coalition cannot improve their utility. 
		In particular, $\uf_{\genGrkA^S_r}(\costr(\zeta^S_r)\cup\{\genGrkA^S_r\}) = -2$.
		\item Let $S\in \etcS{}$ and $r\in S$.
		Then, $\uf_{\bar a^S_r}(\costr) = 0$.
		Moreover, joining any other coalition cannot improve their utility.
		Indeed, the only agent that $\bar a^S_r$ has a positive utility for is $a^S_r$.
		However, if $S\in \etcS{}'$, then $\{a^S_r,\zeta_r\}\in \costr$ and $\uf_{\bar a^S_r}(\costr(a^S_r)\cup\{\bar a^S_r\}) = -2$.
		In addition, if $S\in \etcS{}\setminus \etcS{}'$, then $\{a^S_r\colon r \in S\}\in \costr$ and $\uf_{\bar a^S_r}(\costr(a^S_r)\cup\{\bar a^S_r\}) = 0$.
		Note that the latter deviation is only $(\lb,\ub)$-permissive if $\ub \ge 4$, but it is excluded in any case.
	\end{itemize}
	
	We conclude that $\costr$ is contractually Nash-stable.
	
	\paragraph{$\impliedby$}
	Conversely, assume that $(N,\vf)$ contains a $(\lb,\ub)$-partition~$\costr$ that is contractually Nash-stable.
	We start by understanding the structure of $\costr$.
	The first step is to determine the coalition of $\zeta_r$ for $r\in R$.
	\begin{claim}\label{cl:zetaCostr}
		Let $r\in R$.
		Then there exists $S\in \etcS{}$ with $r\in S$ such that $\{\zeta_r,a_r^S\}\in \costr$.
	\end{claim}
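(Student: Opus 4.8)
The plan is to pin down the coalition $\costr(\zeta_r)$ completely and show that it must equal $\{\zeta_r, a^S_r\}$ for some $S$ with $r \in S$. The engine of the argument is a simple observation about agents that nobody values positively: if $\vf_{\genB}(\rem) \le 0$ for every $\genB \in N$, then $\rem$ may abandon its current coalition for the singleton $\{\rem\}$ without harming anyone, so this is always a $\boundpair$-permissible CNS deviation (recall $\lb = 1$). Consequently, in any CNS partition such an agent $\rem$ satisfies $\uf_{\rem}(\costr) \ge 0$ and, more generally, cannot strictly increase its utility by joining any other coalition. This applies verbatim to the three gadget agents $\alpha_r, \beta_r, \gamma_r$, since one checks from the construction that no agent assigns any of them a positive value (the only nonnegative valuations they receive are the cyclic $0$-entries).

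First I would use the bound $\uf_{\alpha_r}(\costr) \ge 0$ to constrain $\costr(\alpha_r)$ severely. The only agents $\alpha_r$ does not value at $-3$ are $\zeta_r$ (value $1$) and $\beta_r$ (value $0$); hence a single $-3$-valued companion would already force $\uf_{\alpha_r}(\costr) \le 1 - 3 < 0$. Therefore $\costr(\alpha_r) \subseteq \{\alpha_r, \beta_r, \zeta_r\}$, and by the cyclic symmetry of the gadget also $\costr(\beta_r) \subseteq \{\beta_r, \gamma_r, \zeta_r\}$ and $\costr(\gamma_r) \subseteq \{\gamma_r, \alpha_r, \zeta_r\}$.

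The crux of the proof is to rule out every gadget agent from $\costr(\zeta_r)$; this is the adaptation of the Sung--Dimitrov cyclic instability of the four-agent gadget \cite{SuDi07b} to the combined game, and I expect it to be the main obstacle, since one must verify that the relevant deviations stay $\boundpair$-permissible despite the extra agents. Suppose $\alpha_r \in \costr(\zeta_r) =: C$ (the other cases are symmetric). Since $\zeta_r \in \costr(\alpha_r) \subseteq \{\alpha_r, \beta_r, \zeta_r\}$, we get $C \subseteq \{\alpha_r, \beta_r, \zeta_r\}$; and if moreover $\beta_r \in C$, then $\zeta_r \in \costr(\beta_r) \subseteq \{\beta_r, \gamma_r, \zeta_r\}$ forces $C \subseteq \{\beta_r, \zeta_r\}$, contradicting $\alpha_r \in C$. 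Hence $\beta_r \notin C$ and $C = \{\alpha_r, \zeta_r\}$. But now $\gamma_r$ obtains utility $\vf_{\gamma_r}(\alpha_r) + \vf_{\gamma_r}(\zeta_r) = 0 + 1 = 1$ by joining $C$, strictly more than its current utility of at most $0$ (it is not with $\zeta_r$, its only positively valued agent); since $\gamma_r$ may freely leave its coalition and the resulting coalition has size $3 \le \ub$, this is a $\boundpair$-permissible CNS deviation, a contradiction. So no gadget agent lies in $\costr(\zeta_r)$.

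Finally I would assemble the structure. Because the only agents $\zeta_r$ does not value at $-3$ are the agents $a^S_r$ with $r \in S$, and because no gadget agent (the only agents valuing $\zeta_r$ positively) sits in $C$, any $-3$-valued companion of $\zeta_r$ would let $\zeta_r$ perform a permissible CNS deviation to a singleton; hence $C \setminus \{\zeta_r\} \subseteq \{a^S_r : r \in S\}$. The coalition $C$ cannot be the singleton $\{\zeta_r\}$, for then $\alpha_r$ would profitably and permissibly join $\zeta_r$ to reach utility $1$. And $C$ cannot contain two distinct agents $a^S_r, a^{S'}_r$, since these value each other at $-3$, so $a^S_r$ would have negative utility and could leave (its unique positive valuer $\bar a^S_r$ is absent from $C$). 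Thus $C$ contains exactly one such agent, giving $\costr(\zeta_r) = \{\zeta_r, a^S_r\}$ for some $S$ with $r \in S$, as claimed.
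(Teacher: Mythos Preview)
Your proof is correct and follows essentially the same approach as the paper: both exploit that the gadget agents $\alpha_r,\beta_r,\gamma_r$ receive no positive valuation from anyone, hence may always abandon their coalition, and both then use the built-in instability of the four-agent gadget to force $\zeta_r$ into a coalition with exactly one $a^S_r$. The one noteworthy difference is in the intermediate step: the paper bounds $\costr(\sigma_r)\subseteq N_r$ and then runs a case split on $|\costr(\zeta_r)|$ (at least three versus at most two) to derive a contradiction from $\costr(\zeta_r)\subseteq N_r$, whereas you sharpen the bound to $\costr(\alpha_r)\subseteq\{\alpha_r,\beta_r,\zeta_r\}$ (and cyclically) and then intersect these containments to pin $\costr(\zeta_r)$ down to a two-element set directly. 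Your route avoids the size case split at the cost of tracking the cyclic structure more explicitly; both arguments are short and neither is clearly superior.
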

	\begin{claimproof}
		Let $r\in R$.
		Consider $\genGrkA \in \{\alpha,\beta,\gamma\}$.
		Then it holds that $\costr(\genGrkA_r)\subseteq N_r$. 
		This follows because there is no agent with a positive valuation for $\genGrkA_r$ and that can, therefore, prevent $\genGrkA_r$ from abandoning their coalition. 
		Therefore, as $\genGrkA_r$ receives a negative utility in any coalition containing an agent outside $N_r$, they would then abandon their coalition to form a singleton coalition.
		Hence $\costr(\genGrkA_r)\subseteq N_r$.
		
		Define $C := \costr(\zeta_r)$.
		Our next goal is to determine the composition of $C$.
		Assume for contradiction that $C\subseteq N_r$.
		If $|C| \ge 3$, then $C$ contains some agent among $\{\alpha_r,\beta_r,\gamma_r\}$ that receives a negative utility.
		For example, if $\{\alpha_r,\beta_r\}\subseteq C$, then $\uf_{\beta_r}(\costr) \le -2$.
				Since no agent has a positive valuation for $\alpha_r$, $\beta_r$, and $\gamma_r$, such an agent could then perform a CNS deviation to form a singleton coalition, a contradiction.
		Hence, it must be the case that $|C| \le 2$.
		By symmetry, we may assume without loss of generality that $C\subseteq \{\zeta_r,\alpha_r\}$.
		But then $\uf_{\gamma_r}(\costr) \le 0$ while $\uf_{\gamma_r}(C\cup\{\gamma_r\}) \ge 1$.
		Thus, $\gamma_r$ could perform a CNS deviation to join $\zeta_r$. 
		Note that this yields a coalition of size at most $3\le \ub$, so it is permissible. 
		Hence, we also obtain a contradiction in this case, and conclude that $C\not\subseteq N_r$.
		
		As we already know that $\costr(\genGrkA_r)\subseteq N_r$ for $\genGrkA \in \{\alpha,\beta,\gamma\}$, we also conclude that $C\cap N_r = \{\zeta_r\}$.
		Hence, there is no agent in $C$ that gains a positive valuation from $\zeta_r$.
		Consequently, $\uf_{\zeta_r}(\costr) \ge 0$ as they could otherwise perform a CNS deviation to form a singleton coalition.
		Thus, since $C\not\subseteq N_r$, there exists $S\in \etcS{}$ with $r\in S$ such that $a_r^S\in C$ and $C$ can only contain such agents apart from $\zeta_r$.
		To show that there is only one such agent in $C$, let $T\in \etcS{}$ with $r\in T$ and $T\neq S$.
		If $a^T_r\in C$, then $\uf_{a^T_r}(\costr) \le -3$. 
		Then, since $\bar a^T_r\notin C$, $a^T_r$ could perform a CNS deviation to form a singleton coalition, a contradiction.
		We conclude that 		$C = \{\zeta_r,a_r^S\}$, showing the claim.
	\end{claimproof}
	
	Similarly, we can uniquely determine the coalition of $\zeta_r^S$ for $S\in \etcS{}$ and $r\in S$.
	
	\begin{claim}\label{cl:zetaScostr}
		Let $S\in \etcS{}$ and $r\in S$.
		Then it holds that $\{\zeta_r^S,\bar a_r^S\}\in \costr$.
	\end{claim}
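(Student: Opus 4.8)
The plan is to mirror the argument of \Cref{cl:zetaCostr} almost verbatim, exploiting that the auxiliary gadget $N_r^S = \{\alpha_r^S,\beta_r^S,\gamma_r^S,\zeta_r^S\}$ is an isomorphic copy of $N_r$, with $\bar a_r^S$ now playing the stabilizing role that $a_r^S$ played for $\zeta_r$. First I would show that $\costr(\genGrkA_r^S)\subseteq N_r^S$ for each $\genGrkA\in\{\alpha,\beta,\gamma\}$: no agent assigns a positive valuation to $\genGrkA_r^S$, so $\genGrkA_r^S$ is never vetoed when abandoning a coalition, and since any coalition meeting an agent outside $N_r^S$ yields utility at most $-3$ for $\genGrkA_r^S$, it would perform a CNS deviation to a singleton. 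Hence all three peripheral agents stay inside $N_r^S$.

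Next, writing $C := \costr(\zeta_r^S)$, I would rule out $C\subseteq N_r^S$ by the same two-case split. If $|C|\ge 3$, then $C$ contains two of $\{\alpha_r^S,\beta_r^S,\gamma_r^S\}$, and by the directed cyclic structure one of them values the other at $-3$, giving it utility at most $-2$; being positively valued by nobody, it escapes to a singleton, a contradiction. If $|C|\le 2$, then by the cyclic symmetry of the gadget I may assume $C\subseteq\{\zeta_r^S,\alpha_r^S\}$, and then $\gamma_r^S$—which values $\zeta_r^S$ at $1$ and is currently at utility at most $0$—can perform a CNS deviation into $C$, permissible since $\ub\ge 3$, again a contradiction. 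Combined with the previous paragraph, this yields $C\cap N_r^S = \{\zeta_r^S\}$.

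Finally, since no agent of $C$ values $\zeta_r^S$ positively, individual rationality forces $\uf_{\zeta_r^S}(\costr)\ge 0$, as otherwise $\zeta_r^S$ would CNS-deviate to a singleton. The key simplification relative to \Cref{cl:zetaCostr} is that $\zeta_r^S$ values \emph{every} agent outside $N_r^S$ at $-3$ except $\bar a_r^S$, whom it values at $0$; hence the individual-rationality constraint leaves $\bar a_r^S$ as the only agent that can accompany $\zeta_r^S$ outside its gadget, so $C\subseteq\{\zeta_r^S,\bar a_r^S\}$. Since $C\neq\{\zeta_r^S\}$, I conclude $C = \{\zeta_r^S,\bar a_r^S\}$. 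I expect essentially no obstacle here: the only points requiring care are the direction of the CNS veto (only the abandoned coalition may object, which is automatic for the agents that leave) and the permissibility of deviations, which holds because $\lb = 1$ and every coalition formed has size at most $3\le\ub$. Unlike \Cref{cl:zetaCostr}, no competing external partner must be excluded, making this direction strictly simpler.
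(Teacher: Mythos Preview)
Your proposal is correct and follows essentially the same approach as the paper's own proof, which explicitly says the argument is analogous to \Cref{cl:zetaCostr} and highlights exactly the simplification you identify: once $\costr(\zeta_r^S)\not\subseteq N_r^S$, $\costr(\zeta_r^S)\cap N_r^S=\{\zeta_r^S\}$, and $\uf_{\zeta_r^S}(\costr)\ge 0$ are established, the conclusion is immediate because $\bar a_r^S$ is the unique agent outside $N_r^S$ that $\zeta_r^S$ values nonnegatively.
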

	\begin{claimproof}
		The proof of this claim is analogous to the proof of \Cref{cl:zetaCostr}.
		In fact, once we know that $\costr(\zeta_r^S)\not\subseteq N^S_r$, $\costr(\zeta_r^S)\cap N^S_r = \{\zeta^S_r\}$, and $\uf_{\zeta^S_r}(\costr) \ge 0$, we can immediately conclude that $\{\zeta_r^S,\bar a_r^S\}\in \costr$, because $\bar a_r^S$ is the only agent for which $\zeta^S_r$ has a nonnegative valuation.
	\end{claimproof}
	
	We are ready to elicit a certificate for $\langle R,\etcS{}\rangle$ being a Yes-instance of \etcs{}.
	Define $\etcS{}' = \{S\in \etcS{}\colon \costr(a^S_r)\cap N_R \neq \emptyset \text{ for some } r \in S\}$. 
	By \Cref{cl:zetaCostr}, we know that $\etcS{}'$ covers $R$.
	It remains to show that no element is covered twice. Therefore, let $r\in R$ and consider $S\in \etcS{}'$ with $\{\zeta_r,a_r^S\}\in \costr$.
	Let $T\in \etcS{}$ with $r\in T$ and $T\neq S$. We have to show that $T\notin \etcS{}'$.
	
	Define $C:= \costr(a_r^T)$.
	By \Cref{cl:zetaScostr}, we know that $\bar a_r^T\notin C$. 	Hence, it must hold that $\uf_{a_r^T}(\costr)\ge 0$ as there is no agent that could prevent $a_r^T$ from deviating to form a singleton coalition.
	Moreover, since $\{\zeta_r,a_r^S\}\in \costr$, we know that $\zeta_r\notin C$ and, therefore, $C\subseteq \{a^T_{r'}\colon r' \in T\}$.
	Note that, by \Cref{cl:zetaScostr}, $\uf_{\bar a_r^T}(\costr) = 0$ and there is no agent that has a positive valuation for $\bar a_r^T$.
	Hence, if $|C|\le 2$, then $\uf_{\bar a_r^T}(C\cup \{\bar a_r^T\}) \ge 1$, and $\bar a_r^T$ can perform a CNS deviation joining $C$, which yields a coalition of size at most $3\le \ub$.
	It follows that $|C|\ge 3$ and, therefore, $C = \{a^T_{r'}\colon r' \in T\}$.
	Hence, there exists no $r'\in R$ such that $\costr(a^T_{r'})\cap N_R\neq \emptyset$.
	We conclude that $T\notin \etcS{}'$, as desired.
	\end{proof}

We continue with Nash and individual stability.

\NSandIS*

\begin{proof}
	We provide a reduction from the \NP-complete problem \mmm{} (\mmms) \cite{HoKi93a}.
	An instance of \mmms{} is a pair $\langle G, k\rangle$ where $G$ is a (undirected and unweighted) graph and $k$ is a positive integer; it is a Yes-instance if and only if there exists a maximal\footnote{Given a graph $G = (V,E)$, a matching $M\subseteq E$ is maximal if for each edge in $E$, some of its endpoints is covered by $M$. In other words, $M$ is inclusion-maximal among matchings.} matching in $G$ of size at most $k$.
	It is known that \mmms{} is \NP-complete, even for bipartite graphs $G = (A\cup B,E)$ where $|A| = |B|$ \cite{HoKi93a,Aziz13a}.
	
	Let $\langle G, k \rangle$ be an instance of \mmms{} where $G = (A\cup B,E)$ and $|A| = |B| = n$.
	We define a reduced ASHG $(N,\vf)$, where $N = A\cup B\cup X$, where $A$ and $B$ represent the vertices of the graph in the source instance, and $X = \{x_i^j\colon i\in [n-k], 0\le j\in [5]\}$ are auxiliary agents.
	We define valuations as
	\begin{itemize}
		\item $\vf_{a}(b) = \vf_b(a) = 3$ for $a\in A$ and $b\in B$ with $\{a,b\}\in E$,
		\item $\vf_{a}(x_i^1) = \vf_{x_i^1}(a) = 2$ for all $i\in [n-k]$,
		\item $\vf_{x_i^{j}}(x_i^{j'}) = 2$ and $\vf_{x_i^{j'}}(x_i^{j}) = 1$ for $i\in [n-k]$ and $j,j'\in [5]$ with $j'-j \equiv_5 1$, and
		\item all other valuations are $-6n$.
	\end{itemize}
	
	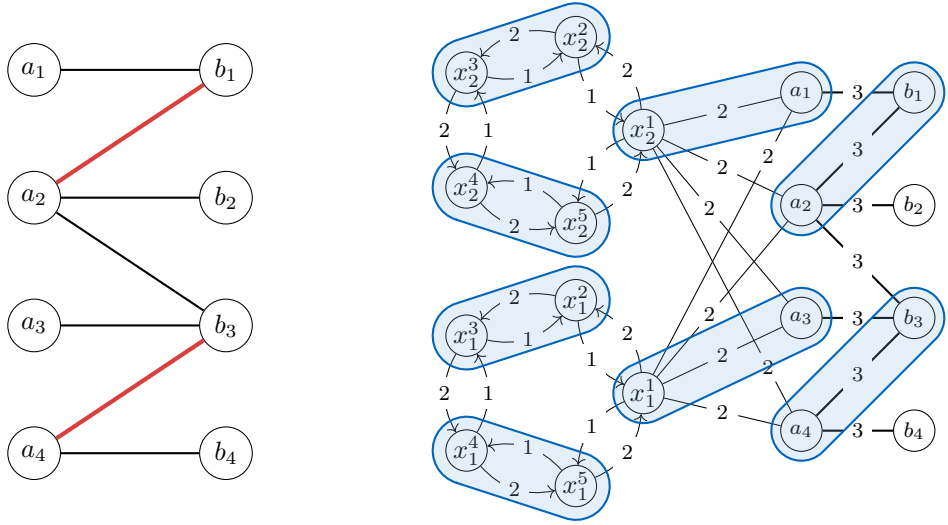
\begin{figure*}[tb]
		\centering
		\begin{tikzpicture}[element/.style={draw, circle, minimum size=.7cm, inner sep=0pt},smallnode/.style={draw, circle, minimum size=.55cm, inner sep=0pt}]
			\pgfmathsetmacro\xsscale{1.5}
			\pgfmathsetmacro\xscale{1.7}
			\pgfmathsetmacro\circrad{1.3}
			\foreach[count = \k] \j in {3,2,1,0}{
				\node[element] (a\k) at (0,\j*\xscale){$a_{\k}$};
				\node[element] (b\k) at (1.5*\xscale,\j*\xscale){$b_{\k}$};
			}
			\foreach \i/\j in {1/1,2/2,2/3,3/3,4/4}{
				\draw[thick] (a\i) edge (b\j);
			}
			\foreach \i/\j in {2/1,4/3}{
				\draw[ultra thick, myred] (a\i) edge (b\j);
			}
			
			\foreach[count = \k] \j in {3,2,1,0}{
				\node[smallnode] (a\k) at ($(6*\xscale, .3) + (0,\j*\xsscale)$) {\footnotesize $a_{\k}$};
				\node[smallnode] (b\k) at ($(6*\xscale, .3) + (\xsscale,\j*\xsscale)$) {\footnotesize $b_{\k}$};
			}
			\foreach \i/\j in {1/1,2/1,2/2,2/3,3/3,4/3,4/4}{
				\draw[thick] (a\i) edge node[midway, fill = white] {\footnotesize $3$} (b\j);
			}
			\foreach[count = \k] \i in {.8,4.3}{
				\node (p\k) at (4*\xscale, \i){};
				\foreach \l/\j in {1/0,2/72,3/144,4/216,5/288}{
					\node[smallnode] (c\k\l) at ($(p\k) + (\j:\circrad)$) {$x_{\k}^{\l}$};
				}
				\foreach \l/\j in {1/2,2/3,3/4,4/5,5/1}{
					\draw[bend right, ->] (c\k\l) edge node[midway, fill = white] {\footnotesize $2$} (c\k\j);
					\draw[bend right, ->] (c\k\j) edge node[midway, fill = white] {\footnotesize $1$} (c\k\l);
				}
				\draw[thick,myblue, fill=myblue!50, fill opacity=0.2] \convexpath{c\k2, c\k3}{0.45cm};
				\draw[thick,myblue, fill=myblue!50, fill opacity=0.2] \convexpath{c\k4, c\k5}{0.45cm};
			}
			
			\foreach \a/\k in {1/2,2/2,3/1,4/1}{
				\draw (a\a) edge node[midway, circle, fill = white, inner sep=2pt] {\footnotesize $2$} (c\k1);
			}
			\draw (a2) edge node[pos = .6, circle, fill = white, inner sep=2pt] {\footnotesize $2$} (c11);
			\draw (a3) edge node[pos = .6, circle, fill = white, inner sep=2pt] {\footnotesize $2$} (c21);
			\draw (a4) edge node[pos = .17, circle, fill = white, inner sep=2pt] {\footnotesize $2$} (c21);
			\draw (a1) edge node[pos = .17, circle, fill = white, inner sep=2pt] {\footnotesize $2$} (c11);

			\foreach \i/\j in {2/1,4/3}{
				\draw[thick,myblue, fill=myblue!50, fill opacity=0.2] \convexpath{a\i, b\j}{0.4cm};
			}
			\foreach \i/\j in {1/2,3/1}{
				\draw[thick,myblue, fill=myblue!50, fill opacity=0.2] \convexpath{a\i, c\j1}{0.4cm};
			}
		\end{tikzpicture}
		\caption{Illustration of the proof of \Cref{thm:NSandIS1u}. 
			On the left, we depict a bipartite graph with vertex set $A\cup B$ where $A = \{a_i\colon i\in [4]\}$ and $B = \{b_i\colon i\in [4]\}$.
			We ask whether there exists a maximal matching of size at most~$2$.
			On the right, we depict the reduced ASHG.
			The edge weights indicate the valuations, where an undirected edge means a mutual valuation. Omitted edges have weight $-6n$.
			The matching $\{\{a_2,b_1\}, \{a_4,b_3\}\}$ indicated with thick red edges is  maximal and of size~$2$.
			It corresponds to the Nash (and individually) stable partition indicated by its blue nonsingleton coalitions.
		}
		\label{fig:NSandISreduction}
	\end{figure*}
		An illustration of the game and the correspondence of Yes-instances established in the below proof is provided in \Cref{fig:NSandISreduction}.
	The idea is to integrate the \mmms{} instance into an ASHG by making agents corresponding to vertices from the same side of the bipartition incompatible via introducing a large negative weight.
	Moreover, we add $n-k$ gadgets on agent sets $\{x_i^j\colon j\in [5]\}$, where $i\in [n-k]$, corresponding to a well-known game not containing an individual partition for ASHGs with unbounded coalition sizes \cite[Example~$5$]{BoJa02a}.
	It is easy to see that this game does not contain an IS $(1,\ub)$-partition for any $\ub \ge 2$.
	Hence, some agent in these gadgets has to form a coalition with an agent outside of the gadget.
	For this to happen for all $n-k$ gadgets, we need $n-k$ agents from $A$, only allowing $k$ of them to mimic a matching.
	
	We proceed with a formal proof of correctness of our reduction.    
	Let $\ub \ge 2$.
	We will simultaneously deal with Nash and individual stability by proving the equivalence of the following statements:
	\begin{enumerate}
		\item $G$ contains a maximal matching of size at most $k$.\label{it:MMM}
		\item $(N,\vf)$ contains an NS $(1,\ub)$-partition.\label{it:NS}
		\item $(N,\vf)$ contains an IS $(1,\ub)$-partition.\label{it:IS}
	\end{enumerate}

	First, since Nash stability implies individual stability, we have that Statement~\ref{it:NS} implies Statement~\ref{it:IS}.
	We prove the remaining equivalence in the next two claims.
	
	\begin{claim}
		If $G$ contains a maximal matching of size at most $k$, then $(N,\vf)$ contains an NS $(1,\ub)$-partition.
	\end{claim}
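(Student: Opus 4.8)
The plan is to construct an NS $(1,\ub)$-partition explicitly from a maximal matching $M$ of size $m \le k$, and then verify stability by a case analysis over the agent types. For the construction, I would first realize each matched edge $\{a,b\}\in M$ as a two-element coalition. Since $M$ covers exactly $m$ vertices of $A$, there remain $n-m\ge n-k$ unmatched $A$-vertices; I select $n-k$ of them and pair the one chosen for gadget $i$ with the agent $x_i^1$, forming $\{a,x_i^1\}$, which gives both agents utility $2$. The four remaining agents of gadget $i$ are grouped as $\{x_i^2,x_i^3\}$ and $\{x_i^4,x_i^5\}$, and every still-unassigned agent (the $k-m$ leftover unmatched $A$-vertices and all $n-m$ unmatched $B$-vertices) is placed in a singleton. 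All coalitions have size at most $2\le\ub$, so this is a valid $(1,\ub)$-partition; the count is exactly the place where the hypothesis $m\le k$ is needed, as it guarantees enough unmatched $A$-vertices to staff all $n-k$ gadgets.

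For stability, the guiding principle is that the penalty weight $-6n$ dominates every positive valuation (all of which are at most $3$), so a single deviation can strictly help an agent only if the joined coalition contains none of its $-6n$-valued agents. I would first dispose of the ``easy'' types. A matched agent has utility $3$, the maximum possible single weight; any coalition it could join either contains a $-6n$-valued agent or is a singleton $B$-neighbour, and the latter only yields a tie, which is not a (strict) Nash deviation. Each gadget-internal agent $x_i^2,\dots,x_i^5$ sits at its coalition value ($2$ for $x_i^2,x_i^4$ and $1$ for $x_i^3,x_i^5$); its unique improving target lies in the \emph{other} gadget pair, and joining there introduces a $-6n$ term (for instance $\uf_{x_i^3}(\{x_i^3,x_i^4,x_i^5\})=2-6n<1$). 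The agent $x_i^1$ has utility $2$, and both of its alternatives, rejoining $\{x_i^2,x_i^3\}$ or entering a matched pair, again carry a $-6n$ contribution.

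The crux is the $n-k$ gadget-assigned $A$-vertices: each has utility $2$ but values every graph-neighbour in $B$ at $3$, so a priori it would prefer to match. Here I would invoke the \emph{maximality} of $M$: since such a vertex $a$ is unmatched, every edge incident to $a$ has its $B$-endpoint matched, so each neighbour $b'$ of $a$ lies inside some pair $\{a',b'\}$ of $\costr$, and a deviation into that pair yields $\vf_a(a')+\vf_a(b')=-6n+3<2$, which is not improving. The same observation settles the singleton vertices on both sides: a leftover unmatched $A$-vertex (respectively an unmatched $B$-vertex) has all of its friends matched and hence co-located with a same-side agent, so every reachable target coalition carries a $-6n$ term and cannot beat utility $0$. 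This case is the main obstacle, and it is precisely where mere matching would fail: without maximality some neighbour could be an isolated singleton, opening a profitable jump. Assembling these observations shows that the constructed partition admits no permissible Nash deviation, proving the claim; since the argument only uses that forbidden pairings cost $-6n$, it is uniform in $\ub\ge2$.
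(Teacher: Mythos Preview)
Your proposal is correct and follows essentially the same approach as the paper: you build the identical partition (matched edges as pairs, $n-k$ unmatched $A$-vertices paired with the $x_i^1$, the gadget pairs $\{x_i^2,x_i^3\}$ and $\{x_i^4,x_i^5\}$, and singletons for the rest), and then verify Nash stability by exploiting that the $-6n$ penalty kills any deviation into a nonsingleton coalition, with maximality of $M$ handling the singleton targets. The only difference is organisational: the paper first observes uniformly that \emph{every} size-$2$ coalition contains, for any outside agent, an agent valued $-6n$, so all deviations into size-$2$ coalitions are dispatched in one line, and then treats only deviations into singletons; you instead run an explicit case analysis over agent types, which is longer but equally valid.
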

	\begin{claimproof}
		Assume that $M$ is a maximal matching for $G$ of size at most $k$.
		Hence, there exists a subset $\{a_1,\dots, a_{n-k}\}\subseteq A$ of $n-k$ vertices of $A$ not covered by $M$.
		
		Consider the partition $\costr$ consisting of the following coalitions:
		\begin{itemize}
			\item If $e \in M$, we form the coalition $e$ (i.e., the coalition of the corresponding agents representing the vertices in $e$).
			\item For $i\in [n-k]$, we form $\{a_i,x_i^1\}$, $\{x_i^2,x_i^3\}$, and $\{x_i^4,x_i^5\}$.
			\item All other agents (these are agents in $B$ and, if $|M| < k$, also agents in $A$) are assigned to singleton coalitions.
		\end{itemize}
		
		Clearly, $\costr$ is a $(1,2)$-partition and, therefore, a $(1,\ub)$-partition.
		We claim that $\costr$ is Nash stable.
		Note that $\costr$ is individually rational and, therefore, Nash deviations would have to yield strictly positive utility.
		
		Clearly, no agent can increase their utility by joining a coalition containing an agent for which they have a negative valuation, because their utility after joining would be negative.
		Hence, no agent can benefit from joining a coalition of size~$2$ as these contain an agent for which they receive a negative utility.
		
		It remains to exclude deviations towards singleton coalitions.
		These can only be deviations by an agent $a\in A$ (or $b\in B$) joining an agent $b\in B$ (or $a\in A$) that is in a singleton coalition.
		This can only yield a positive utility if $\{a,b\}\in E$.
		Moreover, if the deviation is performed by an agent in $A$, they have to be in a singleton coalition or in a coalition with an agent in $X$ as the deviation would otherwise not increase their utility.
		For the same reason, if the deviation is performed by an agent in $B$, this agent has to be in a singleton coalition in $\costr$.
		In both cases, both $a$ and $b$ are not covered by $M$.
		Hence, a deviation towards a singleton deviation implies the existence of an edge $\{a,b\}\in E$ of agents not covered by $M$.
		However, by the maximality of $M$, such an edge cannot exist.
		Hence, there is no such deviation and $\costr$ is Nash-stable.
	\end{claimproof}
	
	\begin{claim}
		If $(N,\vf)$ contains an IS $(1,\ub)$-partition, then $G$ contains a maximal matching of size at most~$k$.
	\end{claim}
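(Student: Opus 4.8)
The plan is to show that any IS $(1,\ub)$-partition $\costr$ of $(N,\vf)$ encodes a maximal matching of size at most $k$ in $G$. The first and most important step is structural and exploits that, since $\lb=1$, individual stability implies individual rationality, together with the large negative weight $-6n$. I would argue that the maximum total positive utility available to any agent is strictly below $6n$: for instance an agent $a\in A$ gains at most $3$ from each of its at most $n$ neighbours in $B$ and at most $2$ from each of the $n-k$ agents $x_i^1$, totalling below $5n$, and the remaining agent types yield totals below $6n$ as well. Hence no coalition of $\costr$ can contain a pair of agents joined by a $-6n$ valuation, as that agent's utility would be negative. Consequently every coalition is a clique in the \emph{allowed interaction graph} $H$ whose edges are exactly the graph edges $\{a,b\}\in E$, the pairs $\{a,x_i^1\}$, and the cyclic pairs $\{x_i^j,x_i^{j+1}\}$ within each gadget. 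Since $H$ is triangle-free, every coalition of $\costr$ has size at most $2$, so $\costr$ is a union of singletons and pairs of these three types.

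Second, I would show that each gadget must ``leak'' into $A$: for every $i\in[n-k]$ there is some $a\in A$ with $\{a,x_i^1\}\in\costr$. The only way a gadget agent can share a coalition with an agent outside the gadget is via such a pair, so if this fails for some $i$, then all five agents $x_i^1,\dots,x_i^5$ lie in within-gadget pairs or singletons; that is, $\costr$ restricted to the gadget is a partition of the directed $5$-cycle into consecutive pairs and singletons. Because this gadget realises the classical game of \citet{BoJa02a} admitting no individually stable outcome, such a restriction always offers some gadget agent an individually stable deviation to another gadget agent. I would confirm this by a short case distinction on whether the restriction uses $0$, $1$, or $2$ pairs, verifying that the deviation always targets a coalition of size $2$ and strictly improves both involved agents; it therefore remains an individually stable and $\boundpair$-permissible deviation in the full game, contradicting the stability of $\costr$.

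Finally, I would count and check maximality. The $n-k$ gadget-pairs occupy $n-k$ distinct agents of $A$, so the set $M:=\{\{a,b\}\in\costr : a\in A,\ b\in B\}$, which is a matching in $G$, uses at most the remaining $k$ agents of $A$ and hence satisfies $|M|\le k$. For maximality, suppose some edge $\{a,b\}\in E$ had both endpoints uncovered by $M$. Then $b$ must be a singleton, since its only positive valuations point into $A$ and any pair $\{a',b\}$ would be a match edge, while $a$ is either a singleton or in a gadget-pair, so that $\uf_a(\costr)\le 2 < 3 = \vf_a(b)$. Agent $a$ could then deviate to the size-$2$ coalition $\{a,b\}$, strictly improving, with $b$ (rising from $0$ to $3$) consenting: an individually stable deviation contradicting the stability of $\costr$. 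Thus $M$ is a maximal matching of size at most $k$.

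I expect the main obstacle to be the second step: making the passage from the isolated-gadget instability to the full game airtight, namely confirming that the forced size-at-most-$2$ structure really does constrain each gadget to a consecutive-pairs-and-singletons partition, and that the resulting gadget deviation is never blocked by an outside agent and always respects the upper bound $\ub$.
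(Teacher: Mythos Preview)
Your proposal is correct and follows essentially the same approach as the paper: first use individual rationality (available since $\lb=1$) together with the $-6n$ weight to force all coalitions to have size at most~$2$, then show each gadget must absorb an $A$-agent via $x_i^1$, and finally verify that the remaining $A$--$B$ pairs form a maximal matching of size at most~$k$. The only differences are presentational: the paper phrases the size-$2$ argument as ``any two friends of an agent are enemies'' rather than via your triangle-free interaction graph~$H$, and for the gadget step it avoids your case distinction on $0$, $1$, or $2$ pairs by simply observing that five agents in coalitions of size at most~$2$ force some $x_i^j$ to be a singleton, whose predecessor $x_i^{j'}$ (with $j-j'\equiv_5 1$) then has an IS deviation from utility at most~$1$ to utility~$2$, welcomed by $x_i^j$ going from~$0$ to~$1$.
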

	\begin{claimproof}
		Assume that $(N,\vf)$ contains an IS $(1,\ub)$-partition $\costr$.
		Then, since deviations to form a singleton coalition are $(1,\ub)$-permissible, $\costr$ has to be individually rational.
		This implies that no agent is in a coalition with an agent for which they receive a negative utility.
		Indeed, the sum of all positive valuation towards other agents of agents in $A$, $B$, and $X$ is at most $3n + 2(n-k) < 5n$, $3n$, and $2n +3$, respectively, while a single negative valuation is $-6n$.
		
		Hence, as any two friends of an agent are enemies, this implies that all coalitions in $\costr$ are of size at most $2$.
		Consider $M = \{C\in \costr\colon A\cap C \neq \emptyset, B\cap C\neq \emptyset\}$.
		Clearly, $M$ is a matching in $M$, containing coalitions of size~$2$ from $\costr$ that consist of an agent in $A$ and an agent in $B$.
		We will show that $M$ is a maximal matching of size at most $k$.
		
		Assume for contradiction that there exists an edge $e = \{a,b\}\in E$ with $a\in A$, $b\in B$, and $a$ and $b$ are both not covered by $M$.
		Then, $a$ is in a singleton coalition or a coalition with an agent in $X$, while $b$ is in a singleton coalition.
		Hence, $a$ can perform a $(1,\ub)$-permissible deviation to join $\{b\}$, which increases their utility from at most $2$ to $3$, while it increases the utility of $b$.
		Hence, this would be an IS deviation, contradicting individual stability of $\costr$.
		
		Next, fix $i\in [n-k]$.
		By individual rationality, for $2\le j\le 5$, 		$x_i^j$ can only be in a coalition with agents in $\{x_i^{j'}\colon j'\in [5]\}$.
		Assume for contradiction that the same is true for $x_i^1$.
		Then, since all coalitions in $\costr$ are of size at most $2$, there exists an index $j\in [5]$ such that $x_i^j$ is in a singleton coalition.
		Consider $x_i^{j'}$ for $j'\in [5]$ with $j-j'\equiv_5 1$.
		We have $\uf_{x_i^{j'}}(\{x_i^{j},x_i^{j'}\}) = 2$, while $\uf_{x_i^{j'}}(\costr) \le 1$.
		Indeed, in $\costr$, $x_i^{j'}$ has to be in a singleton coalition or in a coalition with $x_i^{j''}$ where $j''-j'\equiv_5 1$, which would yield a utility of~$1$.
		Moreover, $\uf_{x_i^{j}}(\{x_i^{j},x_i^{j'}\}) = 1 > 0 = \uf_{x_i^{j}}(\costr)$.
		Hence, the deviation by $x_i^{j'}$ joining $x_i^{j}$ would be an $(1,\ub)$-permissible IS deviation, which cannot exist.
		Hence, $x_i^1$ must be in a coalition of size~$2$ with an agent outside $\{x_i^{j}\colon j\in [5]\}$.
		The only such agents are agents in $A$.
		As this is true for every $i\in [n-k]$, there must exist a set of $n-k$ agents in $A$ that are not in a coalition with agents in $B$ and, hence, their corresponding vertices in $G$ are uncovered by $M$.
		We conclude that $|M| \le k$ as each edge in $M$ must cover a vertex in $A$.
		\end{claimproof}
	
	We have proved the equivalence of all three statements. Clearly, this yields correctness of the reduction for both Nash and individual stability.
	\end{proof}

\section{Proof of \Cref{thm:NSuge4}}\label{app:bothbounds}

In this section, we present the proof of our for hardness result for Nash stability under a nontrivial lower bound on the coalition sizes.

\NSugefour*

\begin{proof}
	We provide a reduction from \etcs{}, as defined in the beginning of the proof of \Cref{thm:CNS-Uge3}. 
	Fix size bounds $\lb$ and $\ub$ with $\ub\ge 4$ and $\lb < \ub$. 
	Given an instance $\langle R,\etcS{}\rangle$ of \etcs{}, we construct an ASHG $(N,\vf)$ as follows. 
	Set $\etcpara := |\etcS{}|-\frac{|R|}{3}$. 
	This is the number of sets remaining in $\etcS$ after removing an exact cover.
	
	Let $N=B\cup E\cup T \cup D \cup \{\alpha\}$ where
	\begin{itemize}
		\item $B=\{\beta_r\}_{r\in R}$,
		\item $E=\bigcup_{S\in\etcS{}}E_S$ where $E_S=\{\xi_S^i\}_{i=1}^{\ub-3}$ for each $S\in\etcS{}$,
		\item $T=\bigcup_{i=1}^{\etcpara}T_i$, where $T_i=\{t_i^1, t_i^2, t_i^3\}$ for each $i\in [\etcpara]$ ($T$ for `triplets'), and
		\item $D$ satisfies $|D|=\lceil\frac{\lb-1}{\ub-\lb}\rceil\ub+\ub$ ($D$ for `dummy').
	\end{itemize}
	
	We refer to agents from set $X\in \{B,E,T,D\}$ as $X$-agents.
	Moreover, we call the $B$-, $E$-, and $T$-agents \emph{core agents}, and the rest \emph{structure agents}. 
	A specific $D$-agent will typically be denoted by $d$.
	If $r\in S$ for some $S\in\etcS{}$, we say that $\beta_r$ \emph{corresponds} to $S$ ($\beta_r$ may correspond to multiple sets $S$). 
	Similarly, we say that all agents in $E_S$ \emph{correspond} to $S$, and for any $i\in [\etcpara]$ we say that the three agents in the set $T_i$ \emph{correspond} to each other.
	Notice that the number of structure agents is $|D|+1=\lceil\frac{\lb-1}{\ub-\lb}\rceil\ub+\ub+1$, which is sufficient for them to be structured into feasible coalitions among themselves, according to \Cref{prop:LargeFeasible}.
	Furthermore, one may verify that $|N|=(\lceil\frac{\lb-1}{\ub-\lb}\rceil+|\etcS{}|+1)\cdot \ub+1$; In particular, since $\ub\geq 4$ and thus $\ub\neq 1$, we have that $\ub$ does not divide $|N|$.
	
	We proceed to describing the valuations $v$ of the agents. 
	\begin{itemize}
		\item For any core agent $c\in B\cup E\cup T$, we set $\vf_c(\alpha)=-\ub$.
		\item Let $r\in R$. We set $\vf_{\beta_r}(t)=\vf_{\beta_r}(\xi^i_S)=-1$ for any $t\in T$, $i\in [\ub-3]$, and $S \in \etcS{}$ such that $r\notin S$.
		\item Let $S\in\etcS{}$ and $i\in [\ub-3]$. We set $\vf_{\xi^i_S}(\xi^j_{S'})=\vf_{\xi^i_S}(\beta_r)=-1$ for any $j\in [\ub-3]$, $S' \in \etcS{}$ such that $S'\neq S$, and $r\in R$ such that $r\notin S$.
		\item Let $i\in[\etcpara]$ and $j\in\{1,2,3\}$. We set $\vf_{t^j_i}(t^{j'}_{i'})=\vf_{t^j_i}(\beta_r)=-1$ for any $i'\in[\etcpara]$ with $i'\neq i$, $j\in\{1,2,3\}$, and $r\in R$.
		\item Let $d\in D$. We set 
		\begin{itemize}
			\item $\vf_d(c)=-1$ for any $c\in B\cup E\cup T$, and
			\item $\vf_d(\alpha)=\ub$.
		\end{itemize}
		\item We set $\vf_{\alpha}(c)=1$ for any $c\in B\cup E\cup T$.
		\item Any valuation not specified so far is set to $0$.
		We note that all agents thus assign value $0$ to any dummy agent.   
	\end{itemize}
	An illustration of the main components of the reduction can be found in \Cref{fig:NSuge4_S_gadget}.

	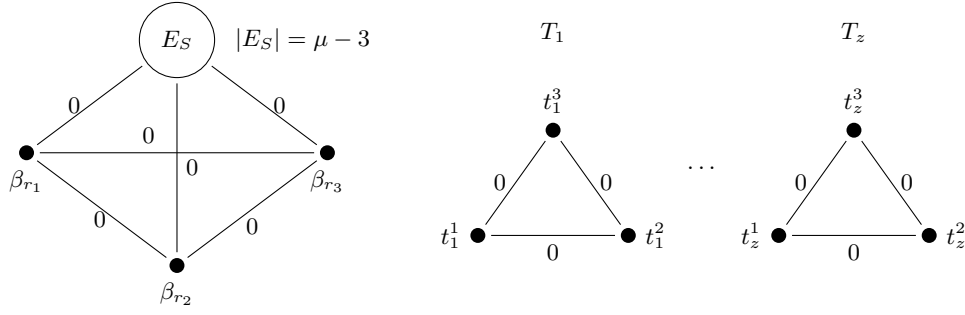
\begin{figure*}[tb]
		\centering    
		\begin{tikzpicture}[every node/.style={font=\small}]
			\begin{scope}
				\node[draw, circle, minimum size=1cm] (ES) at (0,0) {$E_S$};
				\node[anchor=west, xshift=4pt] at (ES.east) {$|E_S| = \ub - 3$};
				
				\node[circle, fill, inner sep=2pt, label=below:$\beta_{r_1}$] (B1) at (-2,-1.5) {};
				\node[circle, fill, inner sep=2pt, label=below:$\beta_{r_2}$] (B2) at (0,-3) {};
				\node[circle, fill, inner sep=2pt, label=below:$\beta_{r_3}$] (B3) at (2,-1.5) {};
				
				\draw[-, shorten >=2pt, shorten <=2pt] (ES) -- (B1) node[pos=0.5, left] {0};
				\draw[-, shorten >=2pt, shorten <=2pt] (ES) -- (B2) node[pos=0.5, right] {0};
				\draw[-, shorten >=2pt, shorten <=2pt] (ES) -- (B3) node[pos=0.5, right] {0};
				
				\draw[-, shorten >=2pt, shorten <=2pt] (B1) -- (B2) node[pos=0.6, left] {0};
				\draw[-, shorten >=2pt, shorten <=2pt] (B2) -- (B3) node[pos=0.5, below] {0};
				\draw[-, shorten >=2pt, shorten <=2pt] (B3) -- (B1) node[pos=0.6, above, sloped] {0};
			\end{scope}
			
			\begin{scope}[xshift=5cm]
				\node[circle, fill, inner sep=2pt, label=left:$t_1^1$] (T11) at (-1,-2.6) {};
				\node[circle, fill, inner sep=2pt, label=right:$t_1^2$] (T12) at (1,-2.6) {};
				\node[circle, fill, inner sep=2pt, label=above:$t_1^3$] (T13) at (0,-1.2) {};
				\node at (0,0.1) {$T_1$};
				
				\draw[-, shorten >=2pt, shorten <=2pt] (T11) -- (T12) node[pos=0.5, below] {0};
				\draw[-, shorten >=2pt, shorten <=2pt] (T12) -- (T13) node[pos=0.5, right] {0};
				\draw[-, shorten >=2pt, shorten <=2pt] (T13) -- (T11) node[pos=0.5, left] {0};
				
				\node at (2,-1.7) {$\dots$};
				
				\node[circle, fill, inner sep=2pt, label=left:$t_z^1$] (T21) at (3,-2.6) {};
				\node[circle, fill, inner sep=2pt, label=right:$t_z^2$] (T22) at (5,-2.6) {};
				\node[circle, fill, inner sep=2pt, label=above:$t_z^3$] (T23) at (4,-1.2) {};
				\node at (4,0.1) {$T_z$};
				
				\draw[-, shorten >=2pt, shorten <=2pt] (T21) -- (T22) node[pos=0.5, below] {0};
				\draw[-, shorten >=2pt, shorten <=2pt] (T22) -- (T23) node[pos=0.5, right] {0};
				\draw[-, shorten >=2pt, shorten <=2pt] (T23) -- (T21) node[pos=0.5, left] {0};
			\end{scope}
			
		\end{tikzpicture}
		\caption{Illustration of key aspects in the proof of \Cref{thm:NSuge4}. On the left is the gadget connecting the nodes of $E_S$ with their corresponding $B$-agents. On the right are the $T$-triangles. Undirected edges imply mutual valuations. Omitted edges crossing between $T$-triangles have weight $-1$.}
		\label{fig:NSuge4_S_gadget}
	\end{figure*}
	
	We provide a proof sketch before the formal proof. 
	First, note that the valuations' description creates a ``cat-and-mouse'' dynamics between core agents and agent $\alpha$. 
	Namely, we set the valuations such that agent $\alpha$ only likes core agents, while core agents do not like $\alpha$. 
	Using this we ensure that in any NS partition, on the one hand agent $\alpha$ is separated from the core agents (as otherwise the core agents would deviate away from $\alpha$), but on the other hand the coalitions containing core agents are full (as otherwise agent $\alpha$ would deviate to join them).
	Having this separation, we show that the only possibility for an NS partition is, for all $S\in \etcS$, letting $E_S$ form a coalition either with the three $B$-agents corresponding to $S$ or with some $T$-triplet. 
	This implies that $B$-agents must be partitioned by a subset $\etcS'\in\etcS$.
	
	An interesting feature of the reduction is that, since we may have a nontrivial lower bound on the coalition sizes, the lack of individual rationality is not sufficient to argue that a partition is not NS (an agent cannot deviate to a singleton coalition when $\lb>1$). 
	To overcome this, we introduce dummy agents, and ensure that there exists a nonfull coalition $C'$ consisting only of dummy agents. 
	Thus, individual rationality is restored, as any agent with negative utility would be able to deviate to $C'$. 
	We guarantee the existence of such a coalition by setting $|N|$ that is not divided by $\ub$, so there must exist a nonfull coalition; using the ``cat-and-mouse'' dynamics mentioned earlier, we can then show any nonfull coalition contains only dummies.

	We will now formally show that $\langle R,\etcS{}\rangle$ is a Yes-instance of $\etcs{}$ if and only if $(N,\vf)$ admits an NS partition.
	
	\paragraph{$\implies$} Assume first that $\langle R,\etcS{}\rangle$ is a Yes-instance, i.e., there exists a subset $\etcS{}'\subseteq \etcS{}$ that partitions $R$. We observe that $|\etcS{}'|=\frac{|R|}{3}$, by definition of a partition and since $|S|=3$ for each $S\in\etcS{}$.
	We define a partition $\costr$ as the union of the following coalitions:
	
	\begin{itemize}
		\item For each $S\in \etcS{}'$ we form $E_S\cup\{\beta_r\}_{r\in S}$.
		\item Let us arbitrarily enumerate all $S\in \etcS{}\setminus \etcS{}'$, denoting $S_1,...,S_{\etcpara}$. For each $i\in[\etcpara]$, we form $E_{S_i}\cup T_i$.
		\item We arbitrarily pick a set $D_{\alpha}\subseteq D$ with $|D_{\alpha}|=\ub-1$, and form $D_{\alpha}\cup\{\alpha\}$.
		\item The remaining $\frac{\lb-1}{\ub-\lb}\ub+1$ dummy agents are arbitrarily partitioned into feasible coalitions among themselves (by \Cref{prop:LargeFeasible}, this is well-defined).
	\end{itemize}
	
	Clearly, $\costr$ is a $\boundpair$-partition. 
	Furthermore, since $\etcS{}'$ partitions $R$, all $B$-agents are indeed assigned a (unique) coalition, and, therefore, $\costr$ is a well-defined partition of $N$. 
	Assume towards contradiction that there exists a $\boundpair$-permissible NS deviation by some agent $a$ joining a coalition $C\subseteq \costr$.
	Then it must be that $|C|< \ub$, as otherwise there is no room for $a$ to join it.
	Therefore, $C$ contains only  dummy agents, since any coalition in $\costr$ containing a nondummy agent is of size $\ub$.
	This implies that agent $a$ obtains a utility of $0$ in the partition resulting from the deviation.
	However, it may be verified that all agents obtain nonnegative utilities in $\costr$: $E$-agents only form coalitions with other $E$- or $B$-agents corresponding to the same set $S\in\etcS{}$, or with $T$-agents; $B$-agents only form coalitions with other $B$-agents, and with $E$-agents corresponding to the same set $S\in\etcS{}$; $T$-agents only form coalitions with their corresponding $T$-agents, and with $E$-agents; and $D$-agents only form coalitions with other $D$-agents or with agent $\alpha$.
	Therefore, $a$ does not benefit from the deviation, a contradiction.
	
	\paragraph{$\impliedby$}
	Conversely, assume that $(N,\vf)$ contains a $(\lb,\ub)$-partition~$\costr$ that is Nash-stable.
	First, because $\ub$ does not divide $|N|$, there must exist a coalition $C'\in \costr$ that is not full, i.e., $|C'| < \ub$.
	In the next claims, we want to understand the structure of $\costr$ with the aim of finding a subset $\etcS{}'\subseteq \etcS{}$ that partitions $R$.
	
	\begin{claim}\label{NSuge4:alpha_full}
		The coalition $\costr(\alpha)$ is full.
	\end{claim}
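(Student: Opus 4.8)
The plan is to argue by contradiction: I would assume that $\costr(\alpha)$ is \emph{not} full and then exhibit a permissible Nash deviation, contradicting the Nash stability of $\costr$. The whole argument rests on the deliberate asymmetry in the valuations around $\alpha$ and the dummy agents, which is exactly the ``cat-and-mouse'' dynamics advertised in the proof sketch.

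First I would record the consequence of $\costr(\alpha)$ not being full: then $|\costr(\alpha)|\le \ub-1$, so $\costr(\alpha)\setminus\{\alpha\}$ contains at most $\ub-2$ agents. Since there are $|D|\ge \ub$ dummy agents while $\costr(\alpha)$ can hold at most $\ub-2$ of them alongside $\alpha$, there is a dummy $d\in D$ with $d\notin\costr(\alpha)$. The key inequality is then a comparison of $d$'s current and post-deviation utility. In $\costr$, the dummy $d$ values every other dummy at $0$ and every core agent at $-1$, and $\alpha\notin\costr(d)$, so $\uf_d(\costr)\le 0$. If instead $d$ joins $\costr(\alpha)$, it gains $\vf_d(\alpha)=\ub$ and loses at most $1$ for each of the at most $\ub-2$ other members, so its utility after the deviation is at least $\ub-(\ub-2)=2>0$, a strict improvement.

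Finally I would verify that this single-agent deviation is $\boundpair$-permissible, which for standard Nash stability only constrains the \emph{joined} coalition: the resulting coalition $\costr(\alpha)\cup\{d\}$ has size $|\costr(\alpha)|+1\le\ub$ (as $\costr(\alpha)$ was not full) and $\ge\lb+1\ge\lb$ (as $\costr(\alpha)$ is a coalition of the $\boundpair$-partition $\costr$). Hence $d$ can perform a permissible Nash deviation, contradicting Nash stability and forcing $\costr(\alpha)$ to be full. I do not expect a genuine obstacle here: the only points needing a line of care are confirming that a dummy outside $\costr(\alpha)$ exists (immediate from $|D|\ge\ub$) and that both size bounds are respected by the deviation (immediate from $\lb\le|\costr(\alpha)|\le\ub-1$), and the utility comparison is the clean one-line estimate above.
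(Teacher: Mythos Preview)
Your argument is correct and follows essentially the same route as the paper's proof: assume $\costr(\alpha)$ is not full, pick a dummy $d\notin\costr(\alpha)$ (which exists since $|D|\ge\ub$), observe $\uf_d(\costr)\le 0$, and compute that joining $\costr(\alpha)$ yields strictly positive utility via $\vf_d(\alpha)=\ub$ outweighing the at most $\ub-2$ other members each valued at least $-1$. Your explicit verification of both size bounds for permissibility is a nice touch that the paper leaves implicit.
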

	\begin{claimproof}
		Assume for contradiction that $\costr(\alpha)$ is not full. 
		Since $|D|\geq \ub$, there must exist some $d\in D$ such that $\costr(d)\neq \costr(\alpha)$. 
		Hence, $\uf_d(\costr)\leq 0$. 		Consider the deviation of $d$ joining $\costr(\alpha)$. 
		Since $\costr(\alpha)$ is not full, this deviation is $\boundpair$-permissible.
		This will result in a utility of at least $1$ for $d$, since $\vf_d(\alpha)=\ub$ and $vf_d(a)\geq -1$ for any agent $a\in N\bs\{\alpha\}$ (and there can be at most $\ub - 2$ such agents in $\costr(\alpha)$). 
		Hence, we have a contradiction to the Nash stability of $\costr$.
	\end{claimproof}
	
	\begin{claim}\label{NSuge4:alpha_cores}
		The coalition $\costr(\alpha)$ contains no core agents.
	\end{claim}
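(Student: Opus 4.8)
The plan is to derive a contradiction by exhibiting a beneficial permissible deviation for any core agent that is assumed to share a coalition with $\alpha$. The structural fact I would exploit is purely about the valuation table: every core agent $c\in B\cup E\cup T$ assigns value $-\ub$ to $\alpha$ and value at least $-1$ to every \emph{other} agent. Indeed, inspecting the definition of $\vf$, the only values a core agent ever assigns lie in $\{-\ub,-1,0\}$, and $-\ub$ is assigned exclusively to $\alpha$; in particular, no core agent has a positive valuation for anybody.

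First I would assume for contradiction that $\costr(\alpha)$ contains some core agent $c$. Since $\alpha\in\costr(c)$ contributes $-\ub$ to $c$'s utility while every remaining member of the coalition contributes at most $0$, this immediately yields $\uf_c(\costr)\le -\ub$.

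Next I would invoke the nonfull coalition $C'$ whose existence was already established (from the fact that $\ub$ does not divide $|N|$). Because $\costr(\alpha)$ is full by \Cref{NSuge4:alpha_full}, we have $C'\neq\costr(\alpha)$, and hence $\alpha\notin C'$. I would then consider the deviation in which $c$ leaves its coalition to join $C'$. Since $C'$ lies in a $\boundpair$-partition, we have $\lb\le|C'|\le\ub-1$, so the resulting coalition $C'\cup\{c\}$ has size between $\lb+1$ and $\ub$; the deviation is therefore $\boundpair$-permissible. Because $\alpha\notin C'$, every member of $C'$ is valued by $c$ at least $-1$, so $\uf_c(C'\cup\{c\})=\sum_{b\in C'}\vf_c(b)\ge -|C'|\ge 1-\ub>-\ub\ge\uf_c(\costr)$. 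Thus $c$ strictly improves by a permissible deviation, contradicting the Nash stability of $\costr$.

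The argument is a one-step contradiction, so I do not expect a serious obstacle; the only point requiring care is the bookkeeping of the valuation bounds, namely verifying that core agents assign $-\ub$ solely to $\alpha$ (so that, once $\alpha$ is out of the picture, the damage is capped at $-1$ per other coalition member) and that the escape coalition $C'$ cannot secretly contain $\alpha$. The former follows directly from the definition of $\vf$, and the latter from the fullness of $\costr(\alpha)$. This is precisely where the ``cat-and-mouse'' intuition becomes rigorous: core agents are driven out of $\alpha$'s coalition because $\alpha$ is the unique source of catastrophic ($-\ub$) disutility, and the guaranteed nonfull coalition always furnishes a strictly better home.
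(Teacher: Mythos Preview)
Your proof is correct and follows essentially the same approach as the paper: assume a core agent $c$ lies in $\costr(\alpha)$, bound $\uf_c(\costr)\le -\ub$, and have $c$ deviate to the nonfull coalition $C'$ (which cannot contain $\alpha$ by \Cref{NSuge4:alpha_full}) to obtain utility at least $1-\ub$, contradicting Nash stability. Your write-up is in fact slightly more careful than the paper's, which contains a typo (``$d$ may deviate'' instead of ``$c$ may deviate'') and is terser about why the deviation is $\boundpair$-permissible.
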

	\begin{claimproof}
		Assume for contradiction that $\costr(\alpha)$ contains some core agent $c\in B\cup E\cup T$. 
		Then we have that $\uf_c(\costr)\leq -\ub$. 
		But then $d$ may deviate to the coalition $C'$ that is not full, and obtain utility at least $-\ub+1$, since $\alpha\notin C'$ (since $C'$ is not full while $\costr(\alpha)$ is, by \Cref{NSuge4:alpha_full}), and $c$ assigns value at least $-1$ to any other agent. 
		Thus, $\costr$ is not Nash-stable, a contradiction.
	\end{claimproof}
	
	\begin{claim}\label{NSuge4:core_full}
		Any coalition $C\in \costr$ that contains a core agent is full.
	\end{claim}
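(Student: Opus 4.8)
The plan is to exploit the ``cat-and-mouse'' dynamics directly: a non-full coalition holding a core agent is an irresistible target for $\alpha$. Suppose for contradiction that some coalition $C\in\costr$ containing a core agent $c$ is not full, so $|C|<\ub$. By \Cref{NSuge4:alpha_cores} the coalition $\costr(\alpha)$ contains no core agents, so $C\neq\costr(\alpha)$; and by \Cref{NSuge4:alpha_full} the coalition $\costr(\alpha)$ is full. Since the only agents other than $\alpha$ are core agents and dummies, $\costr(\alpha)$ consists of $\alpha$ together with $\ub-1$ dummy agents. As $\alpha$ values every dummy agent at $0$, this gives $\uf_\alpha(\costr)=0$.

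Next I would have $\alpha$ deviate to join $C$. This deviation is $\boundpair$-permissible: $C$ is a coalition of the $\boundpair$-partition $\costr$, so $|C|\geq\lb$ and thus $|C\cup\{\alpha\}|\geq\lb+1>\lb$, while $C$ not being full gives $|C\cup\{\alpha\}|=|C|+1\leq\ub$. Because the partition is a disjoint cover and $\alpha\in\costr(\alpha)\neq C$, we have $\alpha\notin C$, so after the deviation $\alpha$'s utility is $\uf_\alpha(C\cup\{\alpha\})=\sum_{b\in C}\vf_\alpha(b)$. Every agent in $C$ is either a core agent, valued $1$ by $\alpha$, or a dummy, valued $0$; since $c\in C$ is a core agent, this sum is at least $1$. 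Hence $\uf_\alpha(C\cup\{\alpha\})\geq 1>0=\uf_\alpha(\costr)$, so $\alpha$ has a $\boundpair$-permissible Nash deviation, contradicting the Nash stability of $\costr$. This establishes the claim.

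There is no real obstacle here beyond careful bookkeeping. The two points requiring attention are confirming that $\alpha$'s incumbent utility is exactly $0$—which rests on both preceding claims together, so that $\costr(\alpha)$ is full yet core-free and therefore populated only by dummies—and verifying permissibility of the deviation, where the lower-bound side uses that $C$ is already a valid coalition of size at least $\lb$. The argument is precisely the mechanism the proof sketch advertises: were coalitions containing core agents allowed to be non-full, $\alpha$ could always squeeze into one and strictly improve. I expect the whole claim to go through in a few lines.
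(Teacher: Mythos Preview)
Your proof is correct and follows the same approach as the paper: $\alpha$ has utility $0$ in $\costr$ (since $\costr(\alpha)$ contains no core agents) and would strictly gain by joining the non-full coalition $C$ containing a core agent, contradicting Nash stability. Note that \Cref{NSuge4:alpha_full} is not actually needed---\Cref{NSuge4:alpha_cores} alone already yields $\uf_\alpha(\costr)=0$ (all non-core agents other than $\alpha$ are dummies, valued $0$) and implies $C\neq\costr(\alpha)$.
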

	\begin{claimproof}
		Assume for contradiction that $C$ is not full and contains some core agent $c\in B\cup E\cup T$. 
		By \Cref{NSuge4:alpha_cores} we have that $\uf_{\alpha}(\costr)=0$. 
		Hence, it is an NS deviation for $\alpha$ to deviate to $C$, a contradiction to the Nash stability of $\costr$.
	\end{claimproof}
	
	\begin{claim}\label{NSuge4:all_nonnegative}
		All agents obtain nonnegative utility in $\costr$.
	\end{claim}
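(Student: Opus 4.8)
The plan is to exploit the nonfull coalition $C'$ together with the three structural claims already in hand. First I would pin down the composition of $C'$: by \Cref{NSuge4:core_full} a nonfull coalition cannot contain any core agent, and by \Cref{NSuge4:alpha_full} the coalition $\costr(\alpha)$ is full, so $C'\neq\costr(\alpha)$ and hence $\alpha\notin C'$. Since the only agents besides $\alpha$ that are not core agents are the dummy $D$-agents, this forces $C'$ to consist entirely of $D$-agents. This is the crucial structural fact, and it is exactly what the non-divisibility condition ($\ub\nmid|N|$) together with the padding by dummies was engineered to guarantee.

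With this in hand, the argument is a single-deviation argument aimed at $C'$. Recall that every agent assigns value $0$ to each dummy, so the utility any agent $a$ would obtain from joining $C'$ equals $\sum_{b\in C'}\vf_a(b)=0$. Now take an arbitrary agent $a\in N$. If $a\in C'$, then $a$ is itself a dummy surrounded only by dummies, so $\uf_a(\costr)=0\ge 0$ directly. If $a\notin C'$, I would consider the single-agent deviation in which $a$ leaves $\costr(a)$ and joins $C'$; since $C'$ is a coalition of the $\boundpair$-partition $\costr$ we have $|C'|\ge\lb$, and since $C'$ is not full we have $|C'|<\ub$, so the resulting coalition $C'\cup\{a\}$ has size between $\lb+1$ and $\ub$ and the deviation is therefore $\boundpair$-permissible. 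This deviation yields utility exactly $0$ for $a$, and by Nash stability of $\costr$ it cannot be improving, which forces $\uf_a(\costr)\ge 0$.

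I expect the main (and essentially only) obstacle to be the bookkeeping around permissibility rather than any deep structural argument: one must check that $C'\cup\{a\}$ respects both the lower and the upper size bound, which is precisely why the construction needs a genuinely nonfull coalition of size at least $\lb$ to serve as a universal \emph{escape hatch}. All the heavy lifting---separating $\alpha$ from the core, forcing every core-containing coalition to be full, and guaranteeing the existence of a nonfull coalition---has already been carried out in the preceding claims, so the present claim reduces to the clean observation that this escape hatch is made of dummies and hence offers every agent a fallback utility of exactly $0$.
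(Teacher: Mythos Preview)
Your proposal is correct and follows essentially the same approach as the paper: show that $C'$ consists only of dummy agents (using \Cref{NSuge4:core_full} and \Cref{NSuge4:alpha_full}), observe that everyone values dummies at~$0$, and then use the availability of a permissible deviation to $C'$ to conclude that no agent can have negative utility. Your write-up is in fact more explicit than the paper's about verifying that the deviation to $C'$ is $\boundpair$-permissible, which is a welcome clarification.
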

	\begin{claimproof}
		By \Cref{NSuge4:core_full,NSuge4:alpha_full}, we have that $C'$ contains only dummy agents. Thus, all agents in $C'$ clearly obtain utility $0$. Furthermore, any agent in $N\bs C'$ obtaining a negative utility in $\costr$ would benefit from deviating to $C'$, a contradiction to the stability of $\costr$.     
	\end{claimproof}
		
	\begin{claim}\label{NSuge4:cores_0}
		Any core agent assigns value $0$ to all members of their coalition in $\costr$.
	\end{claim}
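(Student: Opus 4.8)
The plan is to read off from the valuation definitions that \emph{no core agent has a positive valuation for anybody}, and then combine this with the nonnegativity guarantee already established in \Cref{NSuge4:all_nonnegative}. Concretely, I would first go through the four bullets defining the valuations of $B$-, $E$-, and $T$-agents and observe that each core agent $c\in B\cup E\cup T$ assigns a value in $\{-\ub,-1,0\}$ to every other agent: the value $-\ub$ is reserved for $\alpha$; the value $-1$ is assigned to the various ``incompatible'' core agents spelled out in the construction; and every remaining valuation is $0$ by the catch-all clause. Crucially, none of these is strictly positive, so $\vf_c(\genA)\le 0$ for all $\genA\in N\setminus\{c\}$.

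Given this sign structure, the claim is immediate. For any core agent $c$, \Cref{NSuge4:all_nonnegative} gives
\begin{equation*}
    0 \le \uf_c(\costr) = \sum_{\genA\in \costr(c)\setminus\{c\}} \vf_c(\genA)\text.
\end{equation*}
Since every summand on the right-hand side is nonpositive, a nonnegative total forces each summand to vanish. Hence $\vf_c(\genA) = 0$ for every $\genA\in \costr(c)\setminus\{c\}$, which is exactly the assertion that $c$ assigns value $0$ to all members of their coalition.

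There is no real obstacle here; the only point requiring care is the verification that core agents indeed have no positive valuations, which is a routine but necessary scan of the construction. The substance of the argument has already been done in \Cref{NSuge4:all_nonnegative}, and this claim is simply the observation that for agents whose valuations are uniformly nonpositive, nonnegative utility and zero utility coincide. This zero-value structure is precisely what the subsequent argument will exploit: it means every coalition containing a core agent consists only of mutually ``indifferent'' (valuation-$0$) agents, which pins down the admissible coalition types ($E_S$ together with its corresponding $B$-agents, or $E_S$ together with a $T$-triplet) and ultimately yields the exact cover.
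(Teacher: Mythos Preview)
Your proposal is correct and follows essentially the same approach as the paper: the paper's proof consists of the single sentence ``This follows from \Cref{NSuge4:all_nonnegative} and the fact that core agents do not assign a positive value to any agent,'' which is exactly the argument you spell out in more detail.
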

	\begin{claimproof}
		This follows from \Cref{NSuge4:all_nonnegative} and the fact that core agents do not assign a positive value to any agent.
	\end{claimproof}
	
	\begin{claim}\label{NSuge4:dummies_cores}
		Core agents and dummy agents cannot reside in the same coalition in $\costr$.
	\end{claim}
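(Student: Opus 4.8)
The plan is to obtain a direct contradiction with \Cref{NSuge4:all_nonnegative} by examining the utility of a dummy agent that shares a coalition with a core agent. Suppose for contradiction that some coalition $C\in\costr$ contains both a core agent $c\in B\cup E\cup T$ and a dummy agent $d\in D$. The first step is to rule out that $\alpha\in C$: by \Cref{NSuge4:alpha_cores}, the coalition $\costr(\alpha)$ contains no core agents, so since $c\in C$ is a core agent we cannot have $C=\costr(\alpha)$, and hence $\alpha\notin C$.

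The second step is to evaluate $\uf_d(\costr)=\uf_d(C)$ directly from the valuation description. A dummy agent assigns value $\ub$ to $\alpha$, value $-1$ to every core agent, and value $0$ to every remaining agent (in particular to every other dummy agent). Since $\alpha\notin C$, the large positive term $\vf_d(\alpha)=\ub$ does not contribute, and because $C$ contains at least the one core agent $c$, we obtain $\uf_d(C)=-|\{c'\in C\colon c'\text{ is a core agent}\}|\le -1<0$. This contradicts \Cref{NSuge4:all_nonnegative}, which guarantees $\uf_d(\costr)\ge 0$, and so no such coalition $C$ can exist.

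I expect no real obstacle here, as the claim follows almost immediately once the earlier structural claims are in place. The only point requiring care is the appeal to \Cref{NSuge4:alpha_cores}: it is precisely the absence of $\alpha$ from any coalition containing a core agent that prevents $d$'s positive valuation $\vf_d(\alpha)=\ub$ from compensating for the negative contributions of the core agents, which is what forces the dummy's utility to be strictly negative and thereby yields the contradiction.
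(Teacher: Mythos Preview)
Your proof is correct and follows essentially the same approach as the paper: use \Cref{NSuge4:alpha_cores} to exclude $\alpha$ from the coalition, observe that the dummy's utility is then strictly negative because it values core agents at $-1$ and everyone else nonpositively, and derive a contradiction with \Cref{NSuge4:all_nonnegative}.
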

	\begin{claimproof}
		Assume for contradiction that there exists $d\in D$ such that $\costr(d)$ contains a core agent. 
		Then by \Cref{NSuge4:alpha_cores}, we have that $\alpha \notin\costr(d)$. 
		Furthermore, since dummy agents assign a negative value to core agents, and only assign a positive value to agent $\alpha$, we have that $d$ obtains a negative utility in $\costr$, a contradiction to \Cref{NSuge4:all_nonnegative}. 
	\end{claimproof}
	
	\begin{claim}\label{NSuge4:Es_together}
		Let $S\in\etcS{}$, and consider some agent $\xi_S^i\in E_S$. Then $E_S\subseteq\costr(\xi_S^i)$.
	\end{claim}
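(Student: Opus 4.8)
The plan is to leverage the structural claims already established to pin down the composition of $C := \costr(\xi_S^i)$ and then finish by a counting argument. First, by \Cref{NSuge4:core_full} the coalition $C$ is full, so $|C| = \ub$; by \Cref{NSuge4:alpha_cores} and \Cref{NSuge4:dummies_cores}, $C$ contains neither $\alpha$ nor any dummy agent, so $C$ consists entirely of core agents; and by \Cref{NSuge4:cores_0} every core agent in $C$ assigns value $0$ to every other member of $C$. Thus the members of $C$ form a ``mutual-zero-valuation clique'' of core agents of size exactly $\ub$.

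The first substantive step is to read off from the valuation definitions which core agents are mutually compatible (i.e.\ assign each other value $0$) with $\xi_S^i$. A short check shows that $\xi_S^i$ assigns $-1$ to every $\xi_{S'}^j$ with $S'\neq S$ and to every $\beta_r$ with $r\notin S$, and $0$ to all remaining core agents; moreover these remaining agents reciprocate with value $0$: same-set members of $E_S$ value one another $0$, $\beta_r$ with $r\in S$ and $\xi_S^i$ value each other $0$, and every $T$-agent values $E$-agents $0$ while $\xi_S^i$ values all $T$-agents $0$. Hence $C \subseteq E_S \cup \{\beta_r : r\in S\} \cup T$. Within this candidate set I would record the two further incompatibilities that drive the argument: every $\beta_r$ assigns $-1$ to every $T$-agent, and $T$-agents from distinct triplets assign each other $-1$.

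Then I would case-split on the types of agents appearing in $C$. If $C$ contains some $\beta_r$ with $r\in S$, then by the first incompatibility $C$ contains no $T$-agent, so $C\subseteq E_S\cup\{\beta_r : r\in S\}$; since $|E_S| = \ub-3$ and $|S| = 3$, this candidate set has exactly $\ub$ elements, and fullness forces $C = E_S\cup\{\beta_r : r\in S\}$. If instead $C$ contains some $T$-agent, then $C$ contains no $\beta_r$, and by the triplet incompatibility all $T$-agents of $C$ lie in a single triplet $T_\ell$, so $C\subseteq E_S\cup T_\ell$, again a set of exactly $\ub$ elements, whence $C = E_S\cup T_\ell$. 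If $C$ contains neither a $B$- nor a $T$-agent, then $C\subseteq E_S$, contradicting $|C| = \ub > \ub-3 = |E_S|$. In each of the two feasible cases we obtain $E_S\subseteq \costr(\xi_S^i)$, as claimed.

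The argument is essentially bookkeeping on top of the earlier claims, so I do not expect a genuine obstacle; the one point needing care is the compatibility computation in the second paragraph, where the default rule ``any valuation not specified is $0$'' must be applied correctly---especially to conclude that $\xi_S^i$ and every $T$-agent are mutually compatible, and that same-set $E$-agents are mutually compatible. The clean numerology $|E_S| + |S| = (\ub-3)+3 = \ub = |E_S| + |T_\ell|$ is exactly what lets fullness upgrade each containment ``$C\subseteq$ candidate set of size $\ub$'' into equality.
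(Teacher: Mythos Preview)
Your proof is correct, and in fact proves more than the claim asks: you end up establishing the full characterization $\costr(\xi_S^i)=E_S\cup\{\beta_r:r\in S\}$ or $\costr(\xi_S^i)=E_S\cup T_\ell$, which is precisely the content of the \emph{next} claim in the paper. The paper's proof of the present claim is more targeted: it assumes for contradiction that $|E_S\cap\costr(\xi_S^i)|<\ub-3$, so that the coalition must contain at least four non-$E_S$ agents $A$; these are shown (via the same earlier claims you cite) to be core agents that are all $B$-agents or all $T$-agents, and then the pigeonhole observation that only three $B$-agents correspond to $S$, and that any $T$-triplet has size three, yields the contradiction. Your approach replaces this ``$|A|\ge 4$'' pigeonhole with a direct case split on whether $C$ meets $B$ or $T$, and then lets the equality $(\ub-3)+3=\ub$ together with fullness upgrade containment to equality. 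The trade-off is that the paper's argument is slightly shorter for this single claim, while yours is self-contained and subsumes the subsequent claim with no additional work.
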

	\begin{claimproof}
		Recall that agent $\xi_S^i$ assigns value $0$ to all agents in $E_S$, all $T$-agents, the three $B$-agents corresponding to $S$, and no one else.
		Assume towards contradiction that $|E_S\cap\costr(\xi_S^i)|<|E_S|=\ub-3$.
		Then, by \Cref{NSuge4:core_full}, $\costr(\xi_S^i)$ must contain 
				$A\subseteq N\bs E_S$ with $|A|\ge 4$. 
		By \Cref{NSuge4:dummies_cores,NSuge4:alpha_cores}, agents in $A$ cannot be dummy agents or $\alpha$. 
		Thus, they are core agents, and specifically $B$- or $T$-agents (since noncorresponding $E$-agents assign negative to utility to each other, which would contradict \Cref{NSuge4:cores_0}). 
		Moreover, $A$ cannot include a combination of $B$- and $T$-agents, since those assign negative values to each other, which would contradict \Cref{NSuge4:cores_0}. 
		If all four are $B$-agents then $\xi_S^i$ assigns a negative value to at least one of them (since $\xi_S^i$ corresponds only to three $B$-agents), and if all four are $T$-agents then at least two of them assign a negative value to each other (since $T$-agents form triplets of corresponding agents). Thus, either way we have a contradiction to \Cref{NSuge4:cores_0}.
	\end{claimproof}
	
	\begin{claim}\label{NSuge4:Es_coalition}
		Let $S\in\etcS{}$ with $S=\{r_1,r_2,r_3\}\subseteq R$, and consider some agent $\xi_S^i\in E_S$. Then either $\costr(\xi_S^i)=E_S\cup \{\beta_{r_1},\beta_{r_2},\beta_{r_3}\}$ or $\costr(\xi_S^i)=E_S\cup T_j$ for some $j\in[\etcpara]$.
	\end{claim}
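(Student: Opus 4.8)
The plan is to pin down the three agents outside $E_S$ in the coalition $\costr(\xi_S^i)$ by combining the structural claims already established. First I would set $C := \costr(\xi_S^i)$ and observe that by \Cref{NSuge4:Es_together} we have $E_S\subseteq C$, while $\xi_S^i$ is a core agent, so by \Cref{NSuge4:core_full} the coalition $C$ is full, i.e.\ $|C| = \ub$. Since $|E_S| = \ub - 3$, this leaves exactly three agents in $C\setminus E_S$, and the whole argument reduces to identifying these three agents.

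Next I would rule out all agent types for $C\setminus E_S$ except the three $B$-agents corresponding to $S$ and the $T$-agents. By \Cref{NSuge4:alpha_cores} and \Cref{NSuge4:dummies_cores}, the coalition $C$ (which contains the core agent $\xi_S^i$) contains neither $\alpha$ nor any dummy agent, so $C\setminus E_S$ consists only of core agents. Then I would invoke \Cref{NSuge4:cores_0}, which forces $\xi_S^i$ to value every member of $C$ at $0$; since $\xi_S^i$ assigns $-1$ to every $E$-agent outside $E_S$ and to every $B$-agent not corresponding to $S$, these are excluded, yielding $C\setminus E_S\subseteq\{\beta_{r_1},\beta_{r_2},\beta_{r_3}\}\cup T$.

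The crux is then to show that the three remaining agents cannot mix $B$- and $T$-agents, and that within each type they form exactly the desired set. Applying \Cref{NSuge4:cores_0} again, now from the viewpoint of a $B$-agent or a $T$-agent sitting inside $C$, I would use that each $\beta_r$ values every $T$-agent at $-1$; hence $C$ cannot simultaneously contain a $B$-agent and a $T$-agent, so $C\setminus E_S$ lies entirely in $\{\beta_{r_1},\beta_{r_2},\beta_{r_3}\}$ or entirely in $T$. In the first case, since there are exactly three such $B$-agents and $|C\setminus E_S|=3$, equality gives $C = E_S\cup\{\beta_{r_1},\beta_{r_2},\beta_{r_3}\}$. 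In the second case, I would use that $T$-agents from distinct triplets value each other at $-1$ while same-triplet agents value each other at $0$; combined with \Cref{NSuge4:cores_0} this forces all three $T$-agents into one common triplet $T_j$, and as $|T_j|=3$ we conclude $C = E_S\cup T_j$.

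I do not expect a serious obstacle, since the heavy lifting has been done by the preceding claims and the only real content is the bookkeeping of who values whom at $0$ versus $-1$. The one point requiring care is that \Cref{NSuge4:cores_0} must be applied not merely from the viewpoint of $\xi_S^i$, but also from the viewpoint of the other core agents in $C$: it is their mutual $-1$ valuations (a $B$-agent disliking every $T$-agent, and cross-triplet $T$-agents disliking each other) that both prevent $B$/$T$ mixing and collapse the $T$-agents into a single triplet.
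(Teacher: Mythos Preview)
Your proposal is correct and follows essentially the same approach as the paper's proof: use \Cref{NSuge4:Es_together} and \Cref{NSuge4:core_full} to isolate three non-$E_S$ agents, exclude $\alpha$ and dummies via \Cref{NSuge4:alpha_cores} and \Cref{NSuge4:dummies_cores}, and then repeatedly apply \Cref{NSuge4:cores_0} (from the viewpoint of $\xi_S^i$, a $B$-agent, and a $T$-agent respectively) to rule out noncorresponding $E$-agents, $B$/$T$ mixing, and cross-triplet $T$-agents. The only cosmetic difference is that you exclude noncorresponding $B$-agents already when restricting $C\setminus E_S$ via $\xi_S^i$'s valuations, whereas the paper defers this to the case split; the logic is identical.
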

	
	\begin{claimproof}
		By \Cref{NSuge4:core_full,NSuge4:Es_together}, we have that $\costr(\xi_S^i)=E_S\cup A$ where $A=\{a_1,a_2,a_3\}$ for some agents $a_1,a_2,a_3\in N$. By \Cref{NSuge4:alpha_cores} we have that $\alpha\notin A$, by \Cref{NSuge4:dummies_cores} we have that $A\cap D=\emptyset$, and by \Cref{NSuge4:cores_0} we have that $A\cap E=\emptyset$ (since noncorresponding $E$-agents assign negative to utility to each other). Hence, we have that $A\subseteq B\cup T$. Further, since $B$- and $T$-agents assign negative utility to each other, by \Cref{NSuge4:cores_0} we have that $A\subseteq B$ or $A\subseteq T$. If $A\subseteq B$ but $A\neq\{\beta_{r_1},\beta_{r_2},\beta_{r_3}\}$ then $\xi_S^i$ must assign a negative value to some $a\in A$; and if $A\subseteq T$ but $a_1$, $a_2$, and $a_3$ do not all correspond to each other, then at least two of them assign a negative value to each other. 
		Either way, we have a contradiction to \Cref{NSuge4:cores_0}.
		Hence, we have that either $A=\{\beta_{r_1},\beta_{r_2},\beta_{r_3}\}$ or $A=T_j$ for some $j\in[\etcpara]$.
	\end{claimproof}
	
	We now have enough information to extract a subset $\etcS{}'\subseteq \etcS{}$ that partitions $R$. 
	By \Cref{NSuge4:Es_coalition}, each $S\in\etcS{}$ with $S=\{r_1,r_2,r_3\}$ induces a coalition containing $E_S$ and three additional agents $A=\{a_1,a_2,a_3\}$, such that $A=\{\beta_{r_1},\beta_{r_2},\beta_{r_3}\}$ or $A=T_j$ for some $j\in[\etcpara]$; call $S$ a \emph{solution set} if its induced coalition is of the former form (i.e. $A=\{\beta_{r_1},\beta_{r_2},\beta_{r_3}\}$), and an \emph{extra set} otherwise. 
	Crucially, for the latter option we have $j\in[\etcpara]$, namely there are at most $\etcpara$ extra sets, implying there are at least $\frac{|R|}{3}$ solution sets $S\in\etcS{}$. 
	Notice that solution sets are pairwise disjoint: Otherwise we have some $r\in S_1\cap S_2$ where $S_1,S_2\in\etcS{}$ are solution sets and $S_1\neq S_2$, but then by definition of solution sets we have that $\costr(r)$ contains agents from $E_{S_1}$ and $E_{S_2}$, a contradiction to \Cref{NSuge4:cores_0}. 
	Hence, there are exactly $\frac{|R|}{3}$ solution sets, since there are only $|R|$ agents in $B$. Thus, consider the subset $\etcS{}'\subseteq\etcS{}$ defined by $\etcS{}'=\{S'\in \etcS{}\colon S' \text{ is a solution set}\}$. 
	Since each set in $\etcS{}'$ contains three elements of $R$, and there are $\frac{|R|}{3}$ such sets, we have that $\etcS{}'$ covers $R$. Since additionally the sets in $\etcS{}'$ are pairwise disjoint, we have that $\etcS{}'$ forms a partition of $R$.
	\end{proof}

\end{document}